%% file: arxiv_main.tex
\documentclass[11pt]{article}
\usepackage{amsmath,amsthm,amsfonts}
\usepackage{amsmath,amsfonts,amssymb}
\usepackage{geometry}
\usepackage{mathabx}
\usepackage{appendix}
\usepackage{snaptodo}
\usepackage[plain,noend,linesnumbered]{algorithm2e}
\SetKwProg{Proc}{Procedure}{}{}
\SetKwProg{Fn}{Function}{}{}
\usepackage{bbm}
\usepackage{authblk}
\usepackage{hyperref}
\usepackage[left]{lineno}
\usepackage{bm}
\usepackage{comment}
\usepackage{natbib}
\usepackage{caption}
\renewcommand{\sqrt}[1]{\left\{ #1 \right\}^{1/2}}

\newtheorem{theorem}{Theorem}[section]
\newtheorem{lemma}[theorem]{Lemma}

\newtheorem{definition}[theorem]{Definition}

\newtheorem{proposition}[theorem]{Proposition}

\newtheorem*{assumptions*}{Assumptions}

\input{newc}

\usepackage{snaptodo}
\title{An Optimal False Discovery Rate Controlling Procedure for Changepoint Detection}

\author[1]{Louis Davis}
\author[1]{Guenther Walther}
\affil[1]{Department of Statistics, Stanford University, 390 Jane Stanford Way, Stanford, California, U.S.A., 94305\\ \texttt{\{ldavis2, gwalther\}@stanford.edu}}

\date{\today}

\begin{document}

\maketitle

\begin{abstract}

We consider the problem of detecting and localizing a growing number of changepoints in a sequence of independent observations. We propose a new method, Lean Bonferroni Detection - False Discovery Rate (LBD-FDR), which produces a set of localized regions in the data sequence where, in expectation, a high proportion of them contain a changepoint. LBD-FDR guarantees control of the false discovery rate in a wide range of distributional settings, including the challenging case of independent non-parametric and heavy-tailed data. For independent Gaussian sequences, we derive conditions on changepoint arrangements where the method consistently detects all changepoints with a large enough signal while simultaneously being unaffected by those that are undetectable, and we show that LBD-FDR obtains the optimal detection constant in certain regimes. Moreover, we derive the settings where our method is more powerful than the minimax optimal Type I error controlling method. We finally develop a computationally feasible algorithm for the LBD-FDR and compare it in simulation to five existing procedures.

\end{abstract}

\section{Introduction}\label{sec:intro}

Changepoint detection is an area of statistics with a rich history going back to the classical works of \cite{wald1945sequential} and \cite{Page1954}. It is currently experiencing a renaissance due to the proliferation of such problems in modern data analysis, see e.g. the review articles of  \cite{Aue2013,Jialing2025,Niu2016}. Many detection and localization methods exist in the literature, with most being designed to provide estimates for change-points and then construct confidence intervals around the estimated change-points to quantify uncertainty. In contrast, building on \cite{FryzlewiczNSP2024} and \cite{jang2024fastoptimalchangepointdetection} we propose a new methodology for determining tight intervals for the locations of changepoints and we provide a finite sample guarantee for the coverage of these intervals. 

In particular, consider the piecewise constant signal plus noise model
\[X_k = \mu_k+\varepsilon_k, \quad k=1,\dots,n\]
where the $\varepsilon_k$ are independent centered noise and $\mu_k$ is a piecewise constant vector with an unknown number $N_n$ of changepoints at locations $0=:\tau_0<\tau_1<\dots<\tau_{N_n}<\tau_{N_n+1}:=n$ such that $\mu_{\tau_{k}}\neq \mu_{\tau_{k+1}}$. Our method provides a collection of $\hat N_n$ intervals such that for any given $\alpha \in (0,1)$ we expect the proportion of intervals that do not contain a changepoint to be at most $\alpha$. 

We now review the literature on uncertainty quantification in multiple changepoint detection problems. SMUCE \citep{SMUCE2014}, and its more recent variants such as H-SMUCE \citep{HSMUCE2017} and MQS \cite{Vanegas03072022}, estimate the number $N$ of changepoints as the minimum among candidate models $\hat \mu$ for which the empirical residuals do not violate a global hypothesis at level $\alpha.$ Especially for small $\alpha$, the theoretical coverage properties of the estimate, $\hat \mu$, rely on correctly estimating the true number of changepoints, which may not be feasible in scenarios where one true changepoint has a weak signal. 
NSP \cite{FryzlewiczNSP2024} and RNSP \cite{FryzlewiczRNSP2024} move the inferential target from estimating the number of changepoints to regions in which they lie. They do so via a recursive algorithm that segments $\{1,\dots,n\}$. When $n$ is large, it searches a representative sample of a user selected number $M$ of intervals for deviations from a linear model. However, this approach will not result in optimal detection power, as explained in \citep{kovacs2022,WaltherPerry2021}. In contrast, LBD \citep{jang2024fastoptimalchangepointdetection} constructs a deterministic set of intervals, tests the existence of a changepoint in each interval, and then combines the local tests on each interval globally using a weighted Bonferroni correction. By selecting an appropriate set of intervals that is sparse enough so that the Bonferroni correction is not overly conservative, while also being sufficiently flexible to the unknown changepoint arrangement, they can guarantee optimal changepoint detection.

While the method of \cite{jang2024fastoptimalchangepointdetection} is minimax optimal among all Type I error controlling algorithms, when the signal-to-noise ratio is modest and the number of changepoints is large, controlling Type I error may be too conservative, in direct analogy to the general hypothesis testing setting \citep[e.g.][]{Storey2003,Sarkar2007}. Instead, controlling the false discovery rate \citep[e.g.][]{BH1995} may allow for smaller and more frequent signals to be detected at the same sample size. This intuition was the impetus behind the development of the false discovery rate controlling methods FDRSeg \citep{Li2016FDRSeg} and MUSCLE \cite{liu2025multiscalequantileregressionlocal}. Compared to their Type I error controlling counterparts SMUCE and MQS, FDRSeg and MUSCLE estimate the signal $\mu$ by ensuring the empirical residuals do not violate local multiscale side constraints compared to the global constraints of SMUCE and MQS. In addition to stronger theoretical assumptions than we will make below, the latter two methods do not provide tight regions around the estimated change-points where the true changepoint may lie with reasonable ``confidence''. Note that this is not true statistical confidence since the methods are obtained by inverting a FDR-controlling (rather than Type I error controlling) procedure. 

To these ends, we introduce {\sl Lean Bonferroni Detection - False Discovery Rate (LBD-FDR)}. This method evaluates local tests on the same set of intervals as LBD and combines them not with a Bonferroni correction, but rather with a recently introduced graphical Benjamini-Hochberg procedure known as IndBH \citep[][]{nguyen2025controllingfalsediscoveryrate}. We demonstrate that our method has provable finite-sample false discovery rate control, at given level $\alpha$, for ``tight'' regions in which a change-point may lie. Our finite sample validity result holds in many distributional settings ensuring that our method is robust to slight model misspecifications. More generally, our method is valid in non-parametric settings where we only assume exchangeability of the data in an interval containing no changepoint. It can also be specialized to exponential family settings.

We prove a consistency theorem for ``detectable'' changepoints in the canonical mean-shift Gaussian model. In contrast to many existing methods including SMUCE, H-SMUCE, FDRSeg, MQS and MUSCLE, we demonstrate that these theoretical guarantees of our method hold in the presence of undetectable changepoints \citep[see Section 2.3][for a further discussion]{Verzelen2023}. Following this theorem, we compare the energy that is sufficient for the detection and localization of a changepoint to that of \cite{jang2024fastoptimalchangepointdetection} and find that LBD-FDR can detect changepoints $\tau_k$ having an energy that falls below the minimax boundary for Type I error controlling procedures if $(\tau_k-\tau_{k-1})\wedge (\tau_{k+1}-\tau_k)<N_n$. This regime, where there are many changepoints or they are closely spaced, is where many state of the art methods have difficulty \citep{Fryzlewicz2020} and so this threshold alone demonstrates the merits of our method.

Following our consistency results, we discuss impossibility theorems that give conditions under which no false discovery rate controlling procedure can localize all $N_n$ changepoints with non-trivial probability. In particular, we derive the signal strength necessary for detection. Under weak assumptions on the changepoint arrangements, roughly that the spacing and total number of change-points are sub-polynomial, we demonstrate that LBD-FDR attains this detection threshold and therefore is optimal in this regime. 

Finally, we improve the practical implementation of the IndBH procedure, which LBD-FDR employs for a particular interval system. In general, IndBH relies on solving a NP-hard problem which in many multiscale scenarios is computationally infeasible. However, using the properties of interval overlap graphs, we can exploit dynamic programming to massively decrease the time and memory complexity of the procedure. In the worst case, the time complexity is $O(n^2)$ up to a polylogarithmic factor. However, when the data come from an exponential family and the signal-to-noise ratio is not too large, this reduces to nearly linear time complexity. In a simulation study we compare our method with five existing methods: LBD, SMUCE, FDRSeg, MUSCLE, and MQS. For a fixed sample size, we vary the number of changepoints or the distribution of the noise, including cases where the noise has infinite mean and/or variance, which demonstrates that our procedure is powerful, valid and can produce tight localization regions for a true changepoint.

\section{The Statistical Setting} \label{StatSetting}

We consider the problem of detecting and localizing the changepoints in the distribution of the $n$ independent observations $\{X_t\}_{t=1}^n$, where $\tau$ is called a changepoint if $X_\tau \stackrel{d}{\neq}X_{\tau+1}$. We let $\mathcal T$ be the set of all $N_n$ change-points and index the potentially growing number of them so that $0=:\tau_0<\tau_1<\tau_2<\dots<\tau_{N_n}<\tau_{N_n+1}=:n$. A special case of this setting is the canonical Gaussian mean shift model\
\begin{equation*}
    X_k=\mu_k+\varepsilon_k,\quad k=1,\dots,n,
\end{equation*}
where $\varepsilon_k\stackrel{\text{iid}}{\sim}N(0,1)$ and $\mu=(\mu_1,\ldots,\mu_n)$ is a piece-wise constant deterministic signal with unknown indices of change $\tau \in \mathcal T$. But we will also allow more general distributional settings below.

Detecting multiple changepoints can be phrased as a multiple testing problem of the local hypotheses $H_{0,k}: X_k\stackrel{d}{=}X_{k+1}$ vs. $H_1: X_k\stackrel{d}{\neq}X_{k+1}$, for $k\in \{1,\dots,n-1\}$. Following \cite{jang2024fastoptimalchangepointdetection} we attack this  problem by considering a collection of intervals $\mathcal I$ and analyzing the multiple testing problem $H_{0,j}': \{X_i\}_{j\in I_j}$ are i.i.d. (or exchangeable), for $I_j \in \mathcal I$. The motivation of this approach is that rejecting $H_{0,j}'$ implies the existence of a changepoint in $I_j$. Therefore the statistical guarantees for this multiple testing problem apply to both the existence and the localization of the changepoint.  In order to maximize the power for detecting a changepoint $\tau_k$, one needs to test a stretch of data in $(\tau_{k-1} ,\tau_{k+1}]$ that is as long as possible. Thus $\mathcal I$ needs to be sufficiently rich. On the other hand, if
$\mathcal I$ is too large (such as the collection of all intervals in $\{1,\ldots,n\}$), then the  penalty for the resulting large multiple testing problem may result in suboptimal inference. Therefore the choice of the collection $\mathcal I$ is crucial for achieving optimal detection.

The quantity determining how easy it is to detect the changepoint $\tau_k$ (heuristically this is the strength of the signal) is called the {\sl energy of $\tau_k$} \citep[e.g.][]{Verzelen2023}:
\begin{align}\label{eq:energydefCP}
    \mathcal E_k=|\mu_{\tau_k +1}-\mu_{\tau_k}|\sqrt{\frac{(\tau_k-\tau_{k-1})(\tau_{k+1}-\tau_k)}{\tau_{k+1}-\tau_{k-1}}}.
\end{align}

A related quantity is the minimum energy over all changepoints,
\begin{align}\label{eq:minenergy}
    \mathcal E_{\min}=\left(\min_{1\leq k \leq N_n}|\mu_{\tau_k+1}-\mu_{\tau_k}|\right)\left(\min_{0\leq k\leq N_n}(\tau_{k+1}-\tau_k)^{1/2}\right)
\end{align}
investigated by \cite{Hao2013-rg,WBS2014,SMUCE2014,Frylcewizc2018,kovacs2022,Vanegas03072022,liu2025multiscalequantileregressionlocal}. If the changepoints are homogeneous, then $\min_k \mathcal E_k\leq \mathcal E_{\min}$. But if the changepoints are heterogeneous, then $\mathcal E_{\min}$ may be considerably smaller than the energy of each individual changepoint. In the canonical Gaussian setting, the aforementioned consistency theorems typically require $\mathcal E_{\min}\geq C\sqrt{\log(n)}$, for some constant $C>0$. This condition has two important shortcomings \citep[c.f.][]{Verzelen2023,jang2024fastoptimalchangepointdetection}. The first is that  if at least one changepoint that does not satisfy the energy condition, then no guarantees are available even if some other changepoints are easily detectable. The second is that the important quantity for determining the difficulty of the detection problem is the constant $C$ rather than the rate $\sqrt{\log(n)}$, see the discussion in \cite{jang2024fastoptimalchangepointdetection}.  As a simple example, if the detection threshold is $\sqrt{2\log(n)}$, then erroneously declaring that the detection threshold is $\sqrt{3\log(n)}=\sqrt{2\log\left(n^{3/2}\right)}$ is equivalent to observing a sample size equal to $n^{3/2}$ rather than $n$. Thus, for detection and localization problems, optimality is concerned with the constant, and not the rate, of the energy for each changepoint \citep[e.g.][]{Castro2005,walther2024calibrating}. Consequently, our necessary and sufficient conditions for changepoint detection will be concerned with achieving the optimal constant $C$ for the energy $\mathcal E_k $. 

\section{Lean Bonferroni Detection and the False Discovery Rate}\label{sec:LBDandFDR}

In this section, we present our method by first reviewing Bonferroni triplets, which we use to define stretches of the data on which we test the i.i.d. hypothesis locally. We then review the IndBH procedure, which we use to combine the local tests, and present our method as applying the IndBH procedure to the Bonferroni triplets. We conclude this section with theorems that establish the finite sample validity for three versions of our procedure.

Bonferroni triplets \citep{jang2024fastoptimalchangepointdetection} are a natural extension of Bonferroni intervals introduced by \cite{walther2010, WaltherPerry2021}. For any integer $\ell \in \{0,\dots,\ell_{\max}:=\lfloor \log_2(n/4)\rfloor-1\}$, intervals of lengths $L_I\in [2^\ell,2^{\ell+1})$ are approximated by the collection of intervals
\begin{equation}
    \mathcal J_\ell=\left\{(j,k]\ \mid \ j,k\in \{id_\ell,i=0,\dots\},2^\ell \leq k-j<2^{\ell+1} \right\}
\end{equation}
where the grid spacing is $d_\ell=\left\lceil 2^\ell\left\{2\log\left(\frac{en}{2^\ell} \right) \right\}^{-1/2}\right\rceil$. The collection $\bigcup_\ell \mathcal J_\ell$ is the set of all Bonferroni intervals. Bonferroni triplets are constructed by adjoining an additional interval to the left or right of a Bonferroni interval such that the length of the adjoined interval is at least as large as the Bonferroni interval. To be precise, define $\mathcal L_n:=\{\text{lengths of all Bonferroni intervals}\}$, then the collection of Bonferroni triplets is $\bigcup_{\ell=0}^{\ell_{\max}}\mathcal K_\ell$ where
\begin{multline}\label{eq:BonfTrip}
\mathcal K_\ell=\bigg\{(t_1,t_2,t_3) \ \mid \ (t_1,t_2] \in \mathcal J_\ell \text{ and } t_3-t_2\in \mathcal L_n, t_3-t_2\geq t_2-t_1, \ t_3\leq n\\
\text{ or } (t_2,t_3] \in \mathcal J_\ell \text{ and } t_2-t_1\in \mathcal L_n, t_2-t_1> t_3-t_2, \ t_1\geq 0\bigg\}.
\end{multline}

The idea behind the construction of Bonferroni triplets is to approximate any triplet of changepoints $(\tau_{k-1},\tau_k,\tau_{k+1})$ by a Bonferroni triplet $(s,m,e)$ well enough so that the stretches of data $(X_{s+1},\ldots,X_m)$, $(X_{m+1},\ldots,X_e)$ allow for optimal testing of a changepoint at $\tau_k$, while at the same time the collection of Bonferroni triplets is sparse enough to keep the size of the multiple testing problem sufficiently small for optimal simultaneous inference.

As shown in \cite{jang2024fastoptimalchangepointdetection}, $\left| \bigcup_{\ell=0}^{\ell_{\max}}\mathcal K_\ell \right|=O\left(n\log^{5/2}(n) \right)$ so that the computational complexity for most test-statistics is linear up to polylogarithmic factors. For each Bonferroni triplet $t=(s,m,e)$,  testing ``there is no change-point in $(s,e]$'' against ``there exists a change-point at $m$'' in the parametric setting is most naturally done using  the generalized likelihood ratio test statistic on $(s,e]$. In particular, if $X_i \sim f_{\theta_i}$,  where $f_\theta$ is the distribution for some parametric family, then we will use the test-statistic

\begin{align}\label{eq:TtX}
    T_t( X)&=\sqrt{2\log LR_t( X)}\\ \label{eq:LRtX}
    \text{LR}_t( X)&=\frac{\left\{ \sup_{\theta \in \Theta}\prod_{i=s+1}^mf_\theta(X_i)\right\}\left\{\sup_{\theta \in \Theta}\prod_{i=m+1}^ef_\theta(X_i)\right\}}{\sup_{\theta \in \Theta}\prod_{i=s+1}^ef_\theta(X_i)}.
\end{align}
\cite{Walther2022TailBounds} demonstrates this transformation of the generalized likelihood ratio statistic yields distribution-free tail bounds in the case when $f_\theta$ belongs to an exponential family. For a more comprehensive treatment of other distributional settings, such as nonparametric or heavy tails using the Wilcoxon rank-sum statistic, see Section 2.3 of \cite{jang2024fastoptimalchangepointdetection}. 

We now outline the two statistics used in our simulation study (see Section \ref{sec:simulations}). In the known variance Gaussian case (WLOG $\sigma=1$), the generalized likelihood ratio statistic is
\begin{align}\label{eq:CUSUMGaussian}
Z_t =  \sqrt{\frac{(m-s)(e-m)}{e-s}}\big(\bar X_{s+1:m}-\bar X_{m+1:e}\big),\quad 
T_t( X)=\sqrt{2\log \text{LR}_t( X)}\ =\ |Z_t|,
\end{align}
which is equivalent (up to scale) to the two-sample $t$-statistic on $(s,e]$ with a split at $m$ when the variance is known. 

When the data have heavy tails, we use the two-sample Kolmogorov-Smirnov statistic \citep[e.g.][and references cited therein]{Pratt1981}. Informally, we are testing that the data in $(s,m]$ and $(m,e]$ are drawn from the same distribution $F$. Formally, we perform this test using the standard teststatistic
\begin{align*}
    D_{(s,m,e]}=\sup_{t\in \R}\left|\frac{1}{m-s}\sum_{i=s+1}^m\mathbbm{1}\{X_i\leq t\}-\frac{1}{e-m}\sum_{i=m+1}^e\mathbbm{1}\{X_i\leq t\}\right|
\end{align*}
which is the largest departure between the empirical cumulative distribution functions of the samples in the left and right halves of the triplet $(s,m,e]$. The behavior of this test statistic has been well studied, and following standard protocol we invert the test statistic to create a $p$-value exactly for small data sizes, i.e. $(m-s)(e-m)\leq10^4$, and we use asymptotic formulas otherwise, as is done in base R. 

\cite{jang2024fastoptimalchangepointdetection} detect and localize changepoints by applying local tests on Bonferroni triplets and then combining them using a weighted Bonferroni correction. They obtain finite sample Type I error rate guarantees, in addition to minimax detection power with optimal constants: $2^{1/2}$ for a bounded number of changepoints and $2^{3/2}$ for many changepoints with roughly equal spacing. As discussed in the introduction, controlling the Type I error may be too conservative in some scenarios, so that it becomes attractive to instead control the false discovery rate. However, to do so powerfully is a non-trivial task due to the dependence structure of the local hypotheses which are clearly two-sided. Even in the case where the errors are independent and identically distributed, $p$-values computed on two overlapping intervals, $(s_i,e_i]$ and $(s_j,e_j]$, are dependent and so may not be positive regression dependent on a subset \citep{Fithian2022CondBH}. It is not known if the Benjamini-Hochberg procedure \citep{BH1995} will control the false discovery rate in such a case. The Benjamini-Yekutieli procedure \citep[e.g.][]{BYekutieli2001} will, but it may be overly conservative and it has recently been shown to be inadmissible \citep{xu2025bringingclosurefdrcontrol}.

Fortunately, in our case where the data form an independent sequence, the dependency structure of the test statistics can be encoded graphically. In particular, we will use the independent set Benjamini-Hochberg procedure (IndBH) from \cite{nguyen2025controllingfalsediscoveryrate} to combine our local tests. Informally, IndBH is a method that controls the false discovery rate of dependent tests by exploiting the graphical dependence structure to find independent tests. Formally, suppose that $p \in [0,1]^m$ is an $m$-vector of $p$-values, and let $\D$ be an undirected graph with nodes $\{1,2,\dots,m\}$. Moreover, for each node of $\D$, let $N_{\D,i}\subset [m]$ be the set of nodes that have a shared edge with $i$. 

\begin{definition}
     We call $\D$ a dependency graph for the distribution $P$ of the $m$-vector $p$ if $p_i$ is independent of $\{p_j\}_{j \in N_{\D,i}^c}$ for every $i$.
\end{definition}
Following the convention in \cite{nguyen2025controllingfalsediscoveryrate} we state that $\D$ contains all self-edges since $p_i$ is clearly dependent on itself. If two nodes in a dependency graph do not share an edge, they are marginally independent.  

The IndBH procedure is a specific case of a graphical false discovery rate controlling method. It is defined by applying the Benjamini-Hochberg procedure on a masked $p$-value vector for every independent set in $\D$ and then taking the union of the resulting rejections. To be clear, an independent set of a graph is a set of  vertices such that two distinct vertices do not share an edge between them. If $I$ is an independent set, then the $\{p_i\}_{i \in I}$ are mutually independent. We denote the collection of all independent sets as Ind($\D$). In the following, let $\mathcal R_\alpha^\psi(p)$ be the rejection set for the valid testing procedure $\psi$ at level $\alpha$. Furthermore, let $1^Ap$ be the masked vector of the $p$-values, defined as
\[(1^Ap)_i=\begin{cases}
    1, \quad i \in A\\
    p_i, \quad i \in A^c.
\end{cases}
\]
Then, the IndBH rejection set is
\[\mathcal R_\alpha^{\text{IndBH}_\D}(p)= \bigcup_{I \in \text{Ind}(\D)}\mathcal R_\alpha^{\text{BH}}\left(1^{I^c}p\right).\]
Depending on the sparsity of the graph $\D$, the IndBH procedure interpolates between the Benjamini-Hochberg and Bonferroni procedures. Specifically, when the graph $\D$ has no off-diagonal edges, so that all of the $p$-values are independent, then some basic considerations show that the Benjamini-Hochberg procedure is recovered, whereas when $\D$ is complete, all $p$-values are dependent and the Bonferroni procedure is recovered. Analogously to the classical multiple testing literature, we expect the power of this procedure to be greater than applying a Bonferroni correction in the case where there are many non-null hypotheses, in which case Type I error controlling procedures have unsatisfactory power \citep[e.g.][]{BH1995,BYekutieli2001}.

Our changepoint detection and localization method {\sl LBD-FDR} is to first compute local test-statistics and their $p$-values, where `local' means that we use only a stretch of the data specified by a single Bonferroni triplet. We then encode the dependency structure of the $p$-values graphically. Specifically, node $i$, corresponding to $p_i$ (or equivalently triplet $i$) is connected to node $j$, equivalently $p_j$ or triplet $j$, if, and only if, the data stretches pertaining to triplets $i$ and triplet $j$ have non-empty intersection, i.e.\ $(s_i,e_i]\cap (s_j,e_j]\neq \emptyset $. Using this interval overlap dependency graph $\D$, we can then combine the local tests using the IndBH procedure. 

By construction, this method is valid in that it controls the false discovery rate over the entire family of triplets under few assumptions. Moreover, as in the case of the Lean Bonferroni Detection, we can make validity claims on two pruned rejection sets: First, the largest subset of rejected $p_i$ such that the corresponding intervals $(s_i,e_i]$ are disjoint, and second, the subset of  rejected $p_i$ such that the corresponding intervals $(s_i,e_i]$ are minimal with respect to inclusion, i.e. $(s_i,e_i]$ will not strictly contain $(s_k,e_k])$ for a rejected $p_k$.   We point out that in general, deterministic pruning of rejected hypotheses is not guaranteed to control the false discovery rate \citep[e.g.][]{Katsevich02012023,nair2026diversifyingconformalselections}. However, here we can exploit the richness of Bonferroni triplets and the certificate set structure of IndBH to prove validity. The proof relies on the fact that a certificate set of independent hypotheses corresponds to disjoint triplets, so that we can obtain a non-trivial lower bound on the number of minimal (or disjoint) intervals that are rejected.

\begin{theorem}\label{thm:prunedFDRcontrol}
     Suppose that $\{X_t\}_{t=1}^n$ is an independent sequence of observations, and that valid $p$-values are used for testing the existence of a changepoint in each of the Bonferroni triplets defined in \eqref{eq:BonfTrip}. That is, the local test corresponding to a Bonferroni triplet $(s,m,e)$ only uses the observations $(X_{s+1},\ldots,X_e)$, and the resulting $p$-value is marginally super-uniform under the null hypothesis that no changepoint exists in $(s,e)$. 

     For $\square \in \{\text{all},\min,\text{disj}\}$ define $\mathcal R_\alpha(\square)$ to be the rejection set obtained by LBD-FDR corresponding to all of the rejected intervals, the subset of rejected intervals that are minimal with respect to inclusion, and the largest subset of disjoint rejected intervals, respectively.

     Finally, define $\mathcal H_0$ to be the set of intervals $(s,e]$ that are induced by some Bonferroni triplet $(s,m,e)$ and that do not contain a changepoint in $(s,e)$. Then for any $n\geq 4$ and every $\alpha \in (0,1)$, the LBD-FDR procedure controls the false discovery rate for the collections of minimal, disjoint or all intervals at level $\alpha$. In particular,
     \begin{equation}\label{eq:FDRControl}
         \text{FDR}^{\square}=E\left[\frac{1}{|\mathcal R_\alpha(\square)|\vee 1}\sum_{J\in \mathcal H_0}\mathbbm{1}\left\{J\in \mathcal R_\alpha(\square)\right\}\right]\leq \frac{\alpha |\mathcal H_0|}{m_n}\leq \alpha
     \end{equation}
     where $m_n$ is the total number of Bonferroni triplets, and hence the number of hypotheses.
\end{theorem}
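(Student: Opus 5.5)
We follow the structure of the IndBH validity argument of \cite{nguyen2025controllingfalsediscoveryrate}, adapted to the three rejection sets. The key object is the certificate: if a hypothesis $i$ is rejected by LBD-FDR, then there is an independent set $I\in\text{Ind}(\D)$ with $i\in I$ such that BH applied to $1^{I^c}p$ rejects $i$. Hence $p_i\le \alpha R/m_n$, where $R$ is the number of BH rejections in that run; we call these rejections the certificate set $C_i$. All $p$-values in $C_i$ are themselves rejected by LBD-FDR, so $|\mathcal R_\alpha(\text{all})|\ge |C_i|$.

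\textbf{The case $\square=\text{all}$.} We write
\[
\text{FDR}^{\text{all}}\le \sum_{i\in\mathcal H_0}E\left[\frac{\mathbbm 1\{i\in\mathcal R_\alpha\}}{|\mathcal R_\alpha|\vee 1}\right]\le \sum_{i\in\mathcal H_0}\sum_{r=1}^{m_n}\frac1r P\left(p_i\le \alpha r/m_n,\ R_i^*=r\right),
\]
where $R_i^*$ is the largest certificate size over independent sets containing $i$. The point is that any independent set containing $i$ cannot contain any neighbour of $i$. So if we condition on $p_i=0$ (i.e.\ replace $p_i$ by $0$), the quantity $R_i^*$ becomes a function $\tilde R_i$ of $\{p_j\}_{j\notin N_{\D,i}}$ only. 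This function is independent of $p_i$ by the dependency-graph property, which holds here because triplets with disjoint stretches $(s,e]$ use disjoint blocks of an independent sequence.

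Using the standard BH self-consistency trick, on the event $\{i\in\mathcal R_\alpha\}$ we have $|\mathcal R_\alpha|\ge\tilde R_i$ and $p_i\le\alpha\tilde R_i/m_n$. We then get
\[
E\left[\frac{\mathbbm 1\{p_i\le\alpha\tilde R_i/m_n\}}{\tilde R_i}\right]=E\left[\frac{P\left(p_i\le \alpha\tilde R_i/m_n\mid\tilde R_i\right)}{\tilde R_i}\right]\le \frac{\alpha}{m_n}
\]
by super-uniformity. Summing over $i\in\mathcal H_0$ gives $\alpha|\mathcal H_0|/m_n\le\alpha$.

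\textbf{The pruned sets $\square\in\{\min,\text{disj}\}$.} We want the same bound with $|\mathcal R_\alpha(\square)|$ in the denominator. For this we need the combinatorial lemma $|\mathcal R_\alpha(\square)|\ge\tilde R_i$ whenever $i\in\mathcal R_\alpha(\square)$.

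The argument is as follows. The certificate $C$ for $i$ is an independent set in the interval-overlap graph, so its intervals $(s_j,e_j]$ are pairwise disjoint and all rejected. Therefore the largest disjoint subfamily of rejected intervals has size at least $|C|\ge\tilde R_i$, which settles the disjoint case.

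For the minimal case, each $J\in C$ contains at least one minimal rejected interval $J'\subseteq J$. Since the $J$ are pairwise disjoint, the resulting $J'$ are distinct, so $|\mathcal R_\alpha(\min)|\ge|C|$.

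The numerator only shrinks, since $\mathcal R_\alpha(\square)\subseteq\mathcal R_\alpha(\text{all})$, and every null in $\mathcal R_\alpha(\square)$ is in particular rejected with a certificate. The same computation therefore goes through.

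\textbf{Main obstacle.} The delicate step is making the $p_i\mapsto 0$ replacement rigorous. We must verify that the maximal certificate size over independent sets containing $i$ is monotone in $p$ and, after setting $p_i=0$, depends only on non-neighbours of $i$; then $\mathbbm 1\{i \text{ rejected}\}=\mathbbm 1\{p_i\le\alpha\tilde R_i/m_n\}$.

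A related subtlety arises for the pruned sets when defining ``largest disjoint subset'' as a deterministic choice. If $i$ survives pruning, the bound should use the cardinality of the pruned set, not the identity of the survivors, so ties do not matter.

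The role of $n\ge4$ and of the richness of Bonferroni triplets appears only in guaranteeing that $\mathcal K_\ell$ is nonempty and that the stretches are well defined. Super-uniformity is needed for nulls whose interval has no change in the open interval $(s,e)$, matching the definition of $\mathcal H_0$.
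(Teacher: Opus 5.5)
Your proposal is correct and follows the paper's proof. Like the paper, you lower-bound the size of the (pruned) rejection set by the size of a disjoint certificate set, of size $\tilde R_i$; this coincides with the paper's $\beta_t$ from Nguyen et al.'s equations (2)--(4). You then condition on the $p$-values of triplets disjoint from $(s_i,e_i]$ and apply super-uniformity. Your argument for the minimal case, which maps each pairwise-disjoint certificate interval to a distinct minimal rejected subinterval, is more careful than the paper's assertion that the certificate triplets are themselves minimal.
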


\begin{proof}
    If $\square=\text{all}$, equation \eqref{eq:FDRControl}  immediately holds under the stated assumptions by applying Proposition 3 of \cite{nguyen2025controllingfalsediscoveryrate}. In more detail, for our set of $m_n$ $p$-values, $p_i$ is connected to node $p_j$ if, and only if, $(s_i,e_i]\cap (s_j,e_j]\neq \emptyset $, and due to the construction of our local test statistics, a set of p-values $\{p_i\}$ is mutually independent if the corresponding $(s_i,e_i]$ are disjoint. The other two cases can be found in Appendix \ref{sec:Pruningproof}.
\end{proof}

 The set of minimal  intervals is of interest for localizing changepoints, as these are the shortest intervals that can detect a signal inside of them, and so they form a family of short ``confidence'' regions. The set of disjoint intervals can be used for inference on the total number of changepoints in the data. For example, if there are 100 rejected disjoint intervals, then in expectation, at most 100$\alpha$ of these intervals do not contain a changepoint. Since these intervals are disjoint, we therefore expect there to be at least $100(1-\alpha)$ changepoints in the data. Of course, this is not a true lower confidence bound, but as will be seen in Section \ref{sec:PowerandOptimality}, losing this confidence interpretation may be worthwhile, as in exchange, the sufficient local detection energy sometimes falls below that required by Type I error controlling procedures.  

In addition to the finite sample validity for pruned rejection sets, many graph theoretic calculations (such as computing the size of maximal independent sets or clique sizes) are facilitated by the interval overlap structure of the dependency graph. These improvements are made possible by dynamic programming, which can be applied after sorting the start and end points of every interval in the family. We exploit this structure to improve the computational efficiency of IndBH in Section \ref{sec:Implementation} and Appendix \ref{subsec:compdetails}.

\section{Sufficient Detection Energy, Impossibility and Segment Detection}\label{sec:PowerandOptimality}

Throughout this section, we investigate the power of the LBD-FDR in a Gaussian setting. We first derive the sufficient energy for detection and then compare it to that of the LBD-FWER. Following this, we discuss when our upper bound is sharp, and hence the LBD-FDR is optimal among false discovery rate controlling procedures. 

 Here, we focus on the canonical mean-shift Gaussian model, where
 \[X_k=\mu_k+\varepsilon_k, \quad (k=1,\dots,n)\]
for $\varepsilon_k\stackrel{\text{iid}}{\sim } N(0,1)$. Analysis of this case is standard throughout the change-point literature, \citep[e.g.][]{Verzelen2023,jang2024fastoptimalchangepointdetection}. Moreover, optimality results derived in the Gaussian case using likelihood ratio statistics can be expected to carry over to other distributional settings \citep[e.g.][]{Brown1996,Grama1998}.

\subsection{Sufficient Energy for Detection using the LBD-FDR}\label{sec:PowerNecandSuff}

In this subsection we establish our consistency result, which derives the sufficient energy, $\mathcal E_k$, that a subset of the $N_n$ change-points must satisfy in order to be detected with probability tending to 1. The precise conditions under which this is satisfied are presented in Theorem \ref{thm:power}.

\begin{theorem}[Powerfulness above a Threshold]\label{thm:power}
Suppose that some subset $S_n$ containing $K_n$ of the $N_n$ changepoints satisfies equation \eqref{eq:PowerfulEnergy}; these changepoints have ``high-energy''. Moreover, assume that there exists some $r_n\in (0,1)$ such that
\begin{equation}\label{eq:MaxSpacingAssumption}
    \max_{k}\left\{(\tau_{k}-\tau_{k-1})\vee (\tau_{k+1}-\tau_{k})\right\}\leq n^{r_n}
\end{equation}
and $(1-r_n)\log(n)\to \infty$. Finally, let $c_n$ be a fixed sequence of positive numbers, uniform in $k_i$ such that \[c_n=o\left(\sqrt{\log(n)}\right),\qquad \sqrt{\log\log(n)}+\frac{1}{\sqrt{1-r_n}}=o(c_n).\]
Then if for each changepoint $\tau_{k_i} \in S_n$
\begin{align}\label{eq:PowerfulEnergy}
\mathcal E_{k_i} \ \geq \ \sqrt{2
\log\left(\frac{n}{K_n}\right)}+\sqrt{2\log(K_n)}+c_n\quad (k_i=1,\dots, K_n),
\end{align}
then LBD-FDR detects all changepoints in $S_n$ with probability tending to $1$. Moreover, the procedure localizes each detected changepoint in $S_n$ within an interval $I\subset (\tau_{k_i-1},\tau_{k_i+1}]$.
\end{theorem}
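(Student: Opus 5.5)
The plan is to exhibit, for each high-energy changepoint $\tau_{k_i}\in S_n$, one Bonferroni triplet $t_i=(s_i,m_i,e_i)$ whose interval sits inside $(\tau_{k_i-1},\tau_{k_i+1}]$ and whose $p$-value is very small. We then show that the $K_n$ triplets $t_1,\dots,t_{K_n}$ form an independent set of $\D$, so that a single Benjamini--Hochberg run on the masked vector $1^{I^c}p$, with $I=\{t_1,\dots,t_{K_n}\}$, already rejects all of them. Since $\mathcal R^{\text{IndBH}_\D}_\alpha(p)\supseteq \mathcal R^{\text{BH}}_\alpha(1^{I^c}p)$, every $\tau_{k_i}$ is then detected and localized by a rejected interval contained in $(\tau_{k_i-1},\tau_{k_i+1}]$.

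\emph{Approximation step.} We take the approximation result of \cite{jang2024fastoptimalchangepointdetection}: for each triplet $(\tau_{k-1},\tau_k,\tau_{k+1})$ there is a Bonferroni triplet $(s,m,e)$ with $\tau_{k-1}\le s<m\le e\le\tau_{k+1}$. This triplet has $|m-\tau_k|$ at most one grid spacing $d_\ell$ and side lengths shrunk by at most a factor $1-O(\{\log(en/2^\ell)\}^{-1/2})$, so its population CUSUM mean satisfies $E Z_t\ge \mathcal E_k\bigl(1-O(\{\log(en/2^\ell)\}^{-1/2})\bigr)$. Assumption \eqref{eq:MaxSpacingAssumption} gives $2^\ell\le n^{r_n}$, so $\log(en/2^\ell)\ge(1-r_n)\log n\to\infty$. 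The energy loss is therefore $\mathcal E_k\cdot O(\{(1-r_n)\log n\}^{-1/2})$. Since $\mathcal E_k=O(\sqrt{\log n})$ in the relevant range (larger energies are only easier), this loss is $O((1-r_n)^{-1/2})=o(c_n)$, and hence $E Z_{t_i}\ge \sqrt{2\log(n/K_n)}+\sqrt{2\log K_n}+c_n(1-o(1))$.

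\emph{Disjointness.} The triplet for $\tau_{k_i}$ lies in $(\tau_{k_i-1},\tau_{k_i+1}]$, so triplets for changepoints two apart in the ordering are disjoint. Neighbours $\tau_k,\tau_{k+1}$ both in $S_n$ may overlap, however. I expect this to be the main obstacle. The first remedy I would try is to choose each triplet using only the half-segments $((\tau_{k-1}+\tau_k)/2,(\tau_k+\tau_{k+1})/2]$. This costs a constant factor in energy, which is not acceptable. A better option is to split $S_n$ into odd- and even-indexed changepoints, each family disjoint, and run the argument on the family containing the one we need. With $|I|\ge K_n/2$ the BH threshold becomes $\alpha K_n/(2m_n)$. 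The factor $2$ is absorbed into $c_n$, because only $\log$ of it enters.

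\emph{BH threshold and Gaussian tail.} On the independent set $I$, BH rejects all of $I$ if $\max_{i\in I}p_{t_i}\le \alpha|I|/m_n$. Here $m_n=O(n\log^{5/2}n)$, so we need $p_{t_i}\le \alpha K_n/(Cn\log^{5/2}n)$ for every $i$, with $p_t=2\bar\Phi(|Z_t|)$. That requires $|Z_{t_i}|\ge \sqrt{2\log(n/K_n)+\tfrac52\log\log n+O(1)}$, and it suffices that $|Z_{t_i}|\ge\sqrt{2\log(n/K_n)}+O(\sqrt{\log\log n})$. Write $Z_{t_i}=EZ_{t_i}+\xi_i$ with $\xi_i\sim N(0,1)$. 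A union bound over $K_n$ changepoints gives $\min_i\xi_i\ge-\sqrt{2\log K_n}-c_n/3$ with probability at least $1-K_n\bar\Phi(\sqrt{2\log K_n}+c_n/3)\to1$, since $c_n\to\infty$. Combining this with the energy bound, $\min_i|Z_{t_i}|\ge\sqrt{2\log(n/K_n)}+c_n/2$, and this beats the required threshold because $\sqrt{\log\log n}=o(c_n)$. Hence all $t_i$ are rejected with probability tending to $1$. The condition $c_n=o(\sqrt{\log n})$ is used only so that $c_n$ does not swamp the approximation-error bookkeeping.
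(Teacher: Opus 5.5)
Your proposal is correct and follows essentially the same route as the paper. Both use the Bonferroni-triplet approximation of \cite{jang2024fastoptimalchangepointdetection}, split $S_n$ into even- and odd-indexed changepoints to get two disjoint certificate sets of size about $K_n/2$ with BH threshold $\alpha\lfloor K_n/2\rfloor/m_n$ (absorbing the $\log\log n$ overhead from $m_n$ into $c_n$), and finish with a Gaussian tail bound over the $K_n$ noise terms. The differences are inessential: you use the full window $(\tau_{k-1},\tau_k,\tau_{k+1})$ instead of the paper's minimal window with smallest admissible $\delta_i$, handle large energies through monotonicity of the multiplicative approximation bound (as the paper implicitly does), and use a plain union bound where the paper takes a product over each independent set.
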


 The proof of Theorem \ref{thm:power} can be found in Appendix \ref{sec:NecandSufEnergyLBDFDRproof}. Such an analysis for the LBD-FDR is made possible by the independent ``certificate'' set characterisation of the IndBH procedure. In particular, a non-empty independent set $C$ is called a certificate set for IndBH if $p_j\leq \alpha |C|/m_n$ for all $j \in C$, and IndBH run at level $\alpha$ rejects $H_i$ whenever there exists a certificate set $C$ containing $i$. 
 
For each triplet in a candidate certificate set, the test statistic $T_{t_k}( X) \sim Z_{t_k}$ has a normal distribution with variance $1$ and mean approximately $\mathcal E_k$. The three terms in the energy then have three different roles in the proof, $\sqrt{2\log\left(\frac{n}{K_n}\right)}$ is asymptotically the IndBH certificate threshold, and is a consequence of controlling the false discovery rate. $\sqrt{2\log(K_n)}$ is the cost of ensuring all $K_n$ target triplets are rejected simultaneously, and $c_n$ is the slowly growing slack which absorbs approximation errors and which forces the stated detection event to hold with probability tending to $1$. Moreover, examination of the proof demonstrates that the LBD-FDR constructs certificate sets of intervals covering each high energy changepoint $\tau_{k_i}$ by the shortest Bonferroni triplet that has energy at least as large as given in \eqref{eq:PowerfulEnergy}. Therefore, the set of minimal declared intervals produces the shortest regions in the data that may contain a changepoint that are still detectable by our method.

Equation \eqref{eq:MaxSpacingAssumption} is enforced to ensure that the Bonferroni triplets can suitably approximate the true energy of a changepoint. In particular, the Bonferroni triplet can only approximate the true energy, at scale $L_n$, if 
\begin{equation}\label{eq:BonfScaleNecCond}
\log\left(\frac{en}{L_n}\right)\to \infty.
\end{equation} 
A simple calculation demonstrates that equation \eqref{eq:MaxSpacingAssumption} is a sufficient condition for equation \eqref{eq:BonfScaleNecCond}. Although this generalisation degrades the rate at which $c_n\to \infty$ compared to \cite{jang2024fastoptimalchangepointdetection}, it does not affect the threshold, since at least one of $\sqrt{2\log\left(\frac{n}{K_n}\right)}$ and $\sqrt{2\log(K_n)}$ is $O\left(\sqrt{\log(n)}\right)$. Moreover, equation \eqref{eq:MaxSpacingAssumption} does not exclude many adverse changepoint arrangements, and in particular still allows sub-polynomial spacing between change-points when $K_n=n^{o(1)}$. For example, let $K_n=\log(n)+1=n^{\frac{\log(\log(n))}{\log(n)}}+1$, and suppose that all but one change-point, $\tau'$, are evenly spaced, so that the maximum distance between change-points is $n^{1-\frac{\log(\log(n))}{\log(n)}}$ (which satisfies equation \eqref{eq:MaxSpacingAssumption}). Then we can place $\tau'$ arbitrarily close to any other change-point and still maintain the maximum spacing assumption. This is important to note for the lower bound presented in the following section. 

To understand when the LBD-FDR procedure is more sensitive than Lean Bonferroni Detection requires comparing \eqref{eq:PowerfulEnergy} and equation (12) of \cite{jang2024fastoptimalchangepointdetection}, which reads 
\begin{align}\label{eq:jangenergy}
    \mathcal E_k\geq \sqrt{2\log\left(\frac{n}{(\tau_k-\tau_{k-1})\wedge (\tau_{k+1}-\tau_k)}\right)}+\sqrt{2\log(K_n)}+c_n
\end{align}
for some $c_n\to \infty$ common to all change-points arbitrarily slowly. It is clear that the second two summands in equations \eqref{eq:jangenergy} and \eqref{eq:PowerfulEnergy} are the same, so that any difference in the energy required for detection is due to the first term. This term is introduced as the data dependent threshold for controlling the false discovery rate; in particular, it ensures that the tests for each local hypothesis within a candidate set are of the same order as the rejection threshold so that the remaining two terms can control the Type II error and ensure the power tends to 1. 

It is the difference between $\sqrt{2\log\left(\frac{n}{(\tau_k-\tau_{k-1})\wedge (\tau_{k+1}-\tau_k)}\right)}$ and $\sqrt{2\log\left(\frac{n}{K_n}\right)}$, that drives whether a changepoint may be detectable by LBD-FDR, but below the minimax detection threshold for Type I error controlling methods. This is the case if
\begin{align}\label{eq:spacingvsnumb}
 (\tau_k-\tau_{k-1})\wedge (\tau_{k+1}-\tau_k)<K_n \iff  \sqrt{2\log\left(\frac{n}{K_n}\right)}<\sqrt{2\log\left(\frac{n}{(\tau_k-\tau_{k-1})\wedge (\tau_{k+1}-\tau_k)}\right)}.
\end{align}
For any $K_n$, equation \eqref{eq:spacingvsnumb} may hold when the change-points have irregular spacing, and for regular spacing, e.g.\ $K_n \asymp n\left\{(\tau_k-\tau_{k-1})\wedge (\tau_{k+1}-\tau_k)\right\}^{-1}$, equation \eqref{eq:spacingvsnumb} holds when $K_n=n^{q}$, for any $q>\frac 12$. This can be explained by a classical comparison in the multiple testing literature where false discovery rate controlling methods can choose a lower threshold than  Type I error controlling methods \citep[e.g.][]{ABRAMOVICH1996351,Abramovich2006}. Type I error controlling procedures need to protect against extreme outliers under pure noise, and so only respond to a signal that is beyond some threshold governed by the maxima of the test-statistics under the null. IndBH, analogously to the Benjamini-Hochberg procedure, has a data dependent threshold, so that many potentially weak non-null signals (i.e.\ below the threshold of the maxima of null test-statistics) can be aggregated to lower the rejection threshold. Manifestly, the sufficient detection energy from controlling the Type I error, e.g.\ a weighted Bonferroni correction, is driven by the most challenging local scale, while by controlling the false discovery rate, the method can exploit the existence of large certificate sets of moderate signals. In this way, the LBD-FDR is count adaptive, whereas the LBD-FWER is scale adaptive, and so the former method ``pays'' less to detect a growing number of moderate signals.

The results of this section are visually summarized in Figure \ref{fig:EneergyConstants}, with regions (a') and (b') being the region where the LBD-FDR can detect changepoints with a smaller energy than all Type I error controlling procedures for the case of regular and subpolynomial spacing, respectively.

\begin{figure}[!ht]
    \centering
    \includegraphics[width=0.9\linewidth]{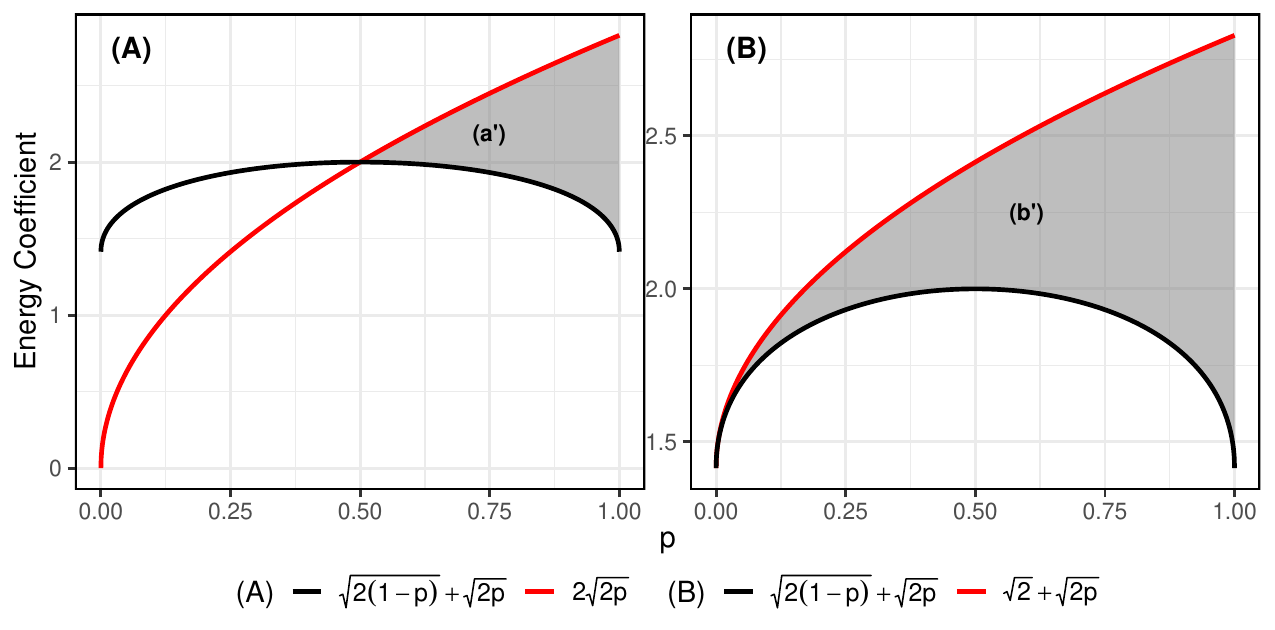}
    \caption{(A) Sufficient energy coefficient corresponding to $K_n=n^{p+o(1)}$ regularly spaced changepoints for the LBD-FDR (black) and LBD-FWER (red). (B) Sufficient energy coefficient corresponding to detection of a closely spaced changepoint, e.g. $\tau_{k+1}-\tau_k=n^{o(1)}$, when there are $K_n=n^{p+o(1)}$ changepoints for the LBD-FDR (black) and LBD-FWER (red).}
    \label{fig:EneergyConstants}
\end{figure}

\subsection{Minimax Detection Boundary}\label{sec:MinMaxImpossiblity}

This subsection is devoted to deriving the necessary changepoint energies for detection by false discovery rate controlling procedures. In particular, Theorem \ref{thm:ImpossLocalScale} states that the LBD-FDR sharply obtains the optimal constant of $2^{1/2}$ in a specific sub-class of changepoint arrangements.

Only recently have impossibility results for false discovery rate controlling procedures been developed in the multiple testing literature \citep[e.g.][]{AriasCastro2017, Rabinovich2020}, and mostly in the case of independent hypotheses. We are concerned with detection thresholds for false discovery rate controlling procedures in the multiscale setting, which requires delicate care to balance statistical indistinguishability and geometric constraints. Although similar results exist in the literature for Type I error controlling procedures \citep[e.g.][]{dumbgen2001,Dumbgen2008,ChanWalther2013,jang2024fastoptimalchangepointdetection}, as far as we know, these are the first locally sharp impossibility results for false discovery rate controlling procedures in the multiscale hypothesis test setting. In particular, we note that our result is different to that of FDRSeg, in that FDRSeg obtains optimal minimax \textit{rates} for detection, and mean estimation, up to a polylogarithmic factor \citep[see ][Section 3]{Li2016FDRSeg}. However, our optimality conditions are concerned with the leading constant for the reason explained at the end of Section~\ref{StatSetting}, see also \citep[]{Castro2005,walther2024calibrating}. For a more detailed discussion of Theorem \ref{thm:ImpossLocalScale}, and full proofs, see Appendix \ref{sec:ImpossibilityProofs}.

\begin{theorem}[Scale Dependent Detection Impossibility Theorem]\label{thm:ImpossLocalScale}
    Let $\mu_n^{(0)}$ be an arrangement of $K_n^{(0)}$ changepoints of arbitrary energy. Next, fix sequences $B_{\max,n}$, $b_n=B_{\max,n}+1$, and $\delta_n$ such that
    \[M_n=\sum_{k=0}^{K_n^{(0)}}\left\lfloor\frac{(\tau_{k+1}-\tau_k-2b_n)_+}{2\delta_n}\right \rfloor\to \infty\]
as $n\to \infty$. 
    
     Let $\psi_n(\alpha)$ be any false discovery rate level $\alpha$ test combining $p$-values computed on any family of triplets $I$. Then, there exist collections of $K_n\gg K_n^{(0)}B_{\max,n}^3\log^2(n)$ perturbative changepoints, $\mathcal T_+$, of ``small'' energy satisfying equation \eqref{eq:smallenergy}, with spacing at most $B_{\max,n}$, such that 
    \begin{align}\label{eq:TrivialPower}
        \Prob_{\mu_n}\left(\psi_n(\alpha) \text{ detects all changepoints in $\mathcal T_+$ using only intervals of length at most $B_{\max,n}$} \right)\leq \alpha+o(1)
\end{align}
where $\mu_n$ is the set of changepoints in $\mu_n^{(0)}$ perturbed by those in $\mathcal T_+$.

In particular, we define ``small'' energy changepoints as those satisfying the following equation \eqref{eq:smallenergy}

    \begin{equation}\label{eq:smallenergy}
        \mathcal E_k^2 \leq \begin{cases}
            \left(1-\varepsilon_n^{(1)}\right)^2\left[\sqrt{2\log\left(M_n\right)}-\sqrt{2\log(K_n)}\right]^2, \quad 1\leq K_n\lesssim M_n^q, \ 0\leq q<\frac{1}{4}\\
            \left(1-\varepsilon_n^{(2)}\right)^2\left[\log\left(M_n\right)-2\log(K_n)\right], \quad M_n^{1/4}\lesssim K_n\lesssim M_n^q, \ \frac{1}{4}\leq q<\frac{1}{2}, \\
        \end{cases}
    \end{equation}
    for $\varepsilon_n^{(1)}\to 0$ such that $\varepsilon_n^{(1)}\sqrt{2\log\left(M_n\right)}\to \infty$, and
    for some $\varepsilon_n^{(2)} \to 0$ such that $\varepsilon_n^{(2)}\sqrt{\log\left(M_n\right)}\to \infty.$
\end{theorem}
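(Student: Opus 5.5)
We will use the standard two-point (more precisely, point-versus-mixture) lower bound, adapted from the impossibility arguments for Type I error controlling procedures to FDR control.

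\textbf{Step 1: placement of the perturbations.} Start from the base arrangement $\mu_n^{(0)}$. Inside each segment $(\tau_k,\tau_{k+1}]$ we stay a distance $b_n=B_{\max,n}+1$ from both endpoints. The remaining stretch is split into disjoint blocks of length $2\delta_n$, which gives $M_n$ candidate locations in total. At each candidate location we may insert a small bump: a short segment of length at most $B_{\max,n}$ whose mean is shifted by $\pm\theta_n$. This bump creates two perturbative changepoints with spacing at most $B_{\max,n}$. The buffer $b_n$ ensures two things: the energy of each new changepoint is governed only by its local spacing, and no interval of length at most $B_{\max,n}$ that touches a bump can also see an original changepoint of $\mu_n^{(0)}$ or a different bump. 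We choose $\theta_n$ so that the energy of each perturbative changepoint equals the bound in \eqref{eq:smallenergy}. The alternative $\mu_n$ is random: we pick a uniformly random subset $S$ of the $M_n$ locations such that the resulting number of perturbative changepoints is $K_n$, together with random signs. The growth condition $K_n\gg K_n^{(0)}B_{\max,n}^3\log^2(n)$ is there only so that the count $K_n$ dominates any boundary or bookkeeping losses.

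\textbf{Step 2: reduction from detection to testing.} Suppose $\psi_n$ detects all of $\mathcal T_+$ using only intervals of length at most $B_{\max,n}$. Then every detection interval lies inside a single block. We want to show that the event
\[
A_n=\{\psi_n \text{ rejects intervals in } K_n \text{ distinct blocks}\}
\]
has probability close to $P_{\mu_n}(\text{detect all})$, and that under the null-block model its probability is at most $\alpha+o(1)$. The second claim comes from FDR control applied to a hybrid model. Take $\mu_n^{(0)}$ and plant only $K_n-1$ of the bumps. If, in addition, the procedure rejects an interval in a block with no bump, that rejection is a false discovery. A symmetrization over which of the $K_n$ rejected blocks is unplanted then gives
\[
\text{FDR}\gtrsim \frac{1}{K_n}\,P(A_n)\cdot(\text{number of unplanted rejected blocks}).
\]
This is the point where we must be careful. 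The cleaner route is to compare the law under $\mu_n$ with the law of a configuration in which one uniformly chosen bump of $S$ is removed. In the resulting configuration that block is null, and exchangeability over the blocks converts $P(A_n)$ into a lower bound of order $P(A_n)$ on the expected false discovery proportion. Combining this with FDR $\le\alpha$ yields
\[
P_{\mu_n}(A_n)\le \alpha + \|P_{\mu_n}-P_{\text{hybrid}}\|_{TV}.
\]

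\textbf{Step 3: the total variation bound, which is the main obstacle.} It remains to show that the total variation between the two mixtures is $o(1)$. We would bound it through the $\chi^2$ divergence of the mixture over $S$ with respect to the configuration with one bump removed. This amounts to a second-moment computation: the expectation of $\exp\{\theta_n^2\langle v_S,v_{S'}\rangle\}$ over two independent draws $S,S'$. Since the bumps sit in disjoint blocks, this reduces to a hypergeometric overlap moment of the form
\[
E\bigl[\exp(\mathcal E^2 |S\cap S'|)\bigr]\approx \Bigl(1+\frac{K_n}{M_n}(e^{\mathcal E^2}-1)\Bigr)^{K_n}.
\]
This quantity is $1+o(1)$ when $K_n^2e^{\mathcal E^2}/M_n\to0$, which is exactly the regime $\mathcal E^2\le(1-\varepsilon^{(2)}_n)\{\log M_n-2\log K_n\}$, that is, the second case of \eqref{eq:smallenergy}.

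For small $K_n$ (the first case, $q<1/4$) the plain second moment is too crude. There we would truncate the likelihood ratio on the event that each block statistic is at most $\sqrt{2\log(M_n/K_n)}$, in the style of Ingster's truncated second moment for sparse mixtures. This recovers the sharper boundary
\[
\bigl[\sqrt{2\log M_n}-\sqrt{2\log K_n}\bigr]^2 .
\]
The slack sequences $\varepsilon_n^{(1)},\varepsilon_n^{(2)}$ absorb the Gaussian tail errors in this truncation. Getting the truncation and the hypergeometric moment sharp to the leading constant is where we expect most of the work.
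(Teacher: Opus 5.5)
Your block construction (Step 1) and your second-moment total variation bound (Step 3) match the paper's. The gap is in Step 2, which is the step that turns indistinguishability into an FDR statement.

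\textbf{Why Step 2 fails.} In your hybrid, a single bump of $S$ is removed. On $A_n$ there is then exactly one null block among the $K_n$ rejected blocks, alongside rejections in the $K_n-1$ planted blocks and whatever the background changepoints generate. By the exchangeability you invoke, the null block carries only about a $1/K_n$ share of the rejections. So on $A_n$ the false discovery proportion is of order $1/K_n$, not of order one. FDR control at level $\alpha$ then gives only $P(A_n)\lesssim\alpha K_n$, which is vacuous. Hence $P_{\mu_n}(A_n)\le\alpha+\|P_{\mu_n}-P_{\mathrm{hybrid}}\|_{\mathrm{TV}}$ is not justified. Your first formulation, with $K_n-1$ planted bumps, has the same problem: on $A_n$ the number of unplanted rejected blocks is one.

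\textbf{What the argument needs.} You must compare with the pure background $\mu_n^{(0)}$, i.e. remove all $K_n$ bumps, as the paper does.
\begin{itemize}
\item Under $P_{\mu_n^{(0)}}$, every rejected interval of length at most $B_{\max,n}$ inside a perturbative block is a false discovery.
\item The only true discoveries that can dilute the FDP are the $m_{1,n}$ short hypotheses that are non-null because they cover a background changepoint. This requires restricting the family to triplets of length at most $B_{\max,n}$, as the paper does.
\item From $\mathrm{FDP}\ge\mathbbm{1}\{N\ge x\}\,x/(x+m_{1,n})$ one gets $P_{\mu_n^{(0)}}(N\ge x)\le\alpha(1+m_{1,n}/x)$.
\item Each background changepoint lies in $O(B_{\max,n}^3)$ triplets of length at most $B_{\max,n}$, for any triplet family. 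Hence $m_{1,n}\lesssim K_n^{(0)}B_{\max,n}^3$.
\end{itemize}
This is exactly where $K_n\gg K_n^{(0)}B_{\max,n}^3\log^2(n)$ enters: with $x\asymp K_n$ it gives $m_{1,n}/x=o(1)$. The condition is therefore not bookkeeping; it is what makes the FDR bound non-trivial, and your proposal never controls $m_{1,n}$.

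The conclusion then follows from $\inf_{\mu_n}P_{\mu_n}(\cdot)\le E_{\pi_n}P_{\mu_n}(\cdot)\le P_{\mu_n^{(0)}}(\cdot)+\|P_{\mu_n^{(0)}}-\bar P_{\mu_n}\|_{\mathrm{TV}}$. The total variation needed is against the background, which is what your Step 3 second moment actually computes.

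\textbf{A correction to Step 3.} In the sparse case the truncation must be at $T_n=\sqrt{2\log M_n}$, not at $\sqrt{2\log(M_n/K_n)}$. Take $w_n=\sqrt{2\log M_n}-\sqrt{2\log K_n}-\omega_n$. The truncated mixture likelihood ratio $\tilde L_n$ then satisfies $E_0\tilde L_n=\Phi(T_n-w_n)^{K_n}$. This tends to one only if $T_n-w_n\ge\sqrt{2\log K_n}+\omega_n$. Your lower threshold falls short of this by order $\sqrt{\log M_n}$ when $K_n\asymp M_n^q$ with $q>0$. Then $K_n\{1-\Phi(T_n-w_n)\}\to\infty$ and the truncation argument collapses.
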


The interpretation of Theorem \ref{thm:ImpossLocalScale} is that the LBD-FDR obtains the optimal detection constant in certain regimes. In particular, when restricted to information on the scale $B_{\max,n}=n^{o(1)}$, if the changepoint arrangement $\mu_n^{(0)}$ is such that $\delta_n,K_n$ and $K_n^{(0)}$ are all sub-polynomial, and there are enough perturbative changepoints so that $n^{o(1)}=K_n\gg K_n^{(0)}B_{\max,n}^3\log^{2}(n)$ then the necessary and sufficient energies for detecting the changepoints $\tau \in \mathcal T_+$ are $\mathcal{E}_k = (1\mp \varepsilon_n)\sqrt{2\log(n)}$ respectively, for some $\varepsilon_n \to 0$ not much slower than $\{2\log(n)\}^{-1/2}$.

 To supplement the results of this section, we also discuss the necessary and sufficient conditions for detecting high energy segments, a generalization of high-energy changepoints introduced by \cite{Verzelen2023}, in Appendix \ref{subsec:EnergySegments}. We do so analogously to Subsections \ref{sec:PowerNecandSuff} and \ref{sec:MinMaxImpossiblity}.

\section{Improved Implementation and Numerical Experiments}\label{sec:Implementation}

In this section, we exploit the interval overlap structure of the dependency graph $\D$ induced by the sparse family of Bonferroni triplets. We first discuss how we use the interval structure of $\D$ to improve computational efficiency using dynamic programming and interval scheduling. Following this, we compare the LBD-FDR to other state of the art methods in a simulation study.

\subsection{Improved Computational Efficiency}
The IndBH algorithm, out of the box, is not suitable for multiscale statistics. IndBH is most computationally efficient when it runs on a block structure dependency graph, specifically, one in which the dependency graph $\D$ can be partitioned into many smaller connected components $\D_k$. Although this is relevant for many other applications, this may not be the case for a sufficiently rich set of multiscale intervals. Secondly, since the IndBH rejection set is a subset of the Benjamini-Hochberg rejection set, the full IndBH algorithm is run only on hypotheses first rejected by the Benjamini-Hochberg procedure. In many scenarios, especially those with frequent and large jumps, considering only those hypotheses first rejected by Benjamini-Hochberg may not reduce the computational burden much. As a result, our change-point detection scenario does not conform to situations amenable to the current implementation of IndBH.

The main computational bottleneck for the general graph IndBH procedure is computing the following quantity for $t=1,\dots, \left|\mathcal R^{(BH)}(\alpha)\right|$
\begin{equation}\label{eq:Ncompt}
    N_{k}(t)=\max_{S\in \mathcal M(\D_k)}\left|\left\{v\in S\mid \left\lceil\frac{m_np_v}{\alpha}\right \rceil \leq t\right\}\right|=\max_{S\in \mathcal M(\D_k)}|S\cap U_t|
\end{equation}
where $\mathcal M(\D)$ is the set of maximal independent sets of the subgraph $\D_k$ and $U_t=\{v\in \D_k\mid p_v\leq \alpha t/m_n\}$. The following proposition relates $N_k(t)$ to an easily computable quantity for interval graphs.
\begin{proposition}\label{prop:NcalphaDkt}
    For any graph component $\D_k$, $N_k(t)=\text{IndNum}(\D_k[U_t])$ where $\text{IndNum}(\D[U_t])$ is the maximum independent set size of subgraph $\D[U_t]$ with the vertices and edges of $U_t$.
\end{proposition}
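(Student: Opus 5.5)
The plan is to prove the two inequalities $N_k(t)\le \text{IndNum}(\D_k[U_t])$ and $N_k(t)\ge \text{IndNum}(\D_k[U_t])$ separately, working directly from the definition \eqref{eq:Ncompt}. First we note that the set over which the count is taken, $\{v\in S\mid \lceil m_np_v/\alpha\rceil\le t\}$, equals $S\cap U_t$. Since $t$ is an integer, $\lceil m_np_v/\alpha\rceil\le t$ holds if and only if $m_np_v/\alpha\le t$, that is, if and only if $p_v\le \alpha t/m_n$. This settles the second equality in \eqref{eq:Ncompt}, and what remains is to identify $\max_{S\in\mathcal M(\D_k)}|S\cap U_t|$ with the independence number of the induced subgraph $\D_k[U_t]$.

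For the upper bound, we take any maximal independent set $S$ of $\D_k$. Its subset $S\cap U_t$ has no edges between distinct vertices, because any such edge would already be an edge of $S$. All of its vertices lie in $U_t$. Since $\D_k[U_t]$ is an induced subgraph, its edges are exactly the edges of $\D_k$ between vertices of $U_t$, so $S\cap U_t$ is an independent set of $\D_k[U_t]$. Hence $|S\cap U_t|\le \text{IndNum}(\D_k[U_t])$, and taking the maximum over $S$ gives the inequality. Self-edges, which the paper includes by convention, play no role here, because independence only constrains distinct vertices.

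For the lower bound, we take a maximum independent set $A$ of $\D_k[U_t]$. Because the subgraph is induced, $A$ is also independent in $\D_k$. We extend $A$ greedily to a maximal independent set $S\supseteq A$ of $\D_k$: while some vertex of $\D_k$ has no neighbour in the current set, we add it. The process stops because $\D_k$ is finite, and it yields some $S\in\mathcal M(\D_k)$. Then $|S\cap U_t|\ge |A|=\text{IndNum}(\D_k[U_t])$, which gives the reverse inequality.

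There is no substantive obstacle; the one point needing care is the extension step. It must be stated that $\mathcal M(\D_k)$ denotes maximal independent sets with respect to inclusion, not maximum-cardinality ones, and that every independent set extends to such a set. If $\mathcal M$ were read as maximum-cardinality sets the identity could fail, so the proof should state this convention explicitly. We will also remark that the identity is what makes the quantity efficiently computable. $\D_k[U_t]$ is an interval overlap graph on the intervals $(s_v,e_v]$ with $v\in U_t$, and its independence number is the maximum number of pairwise disjoint intervals. The greedy earliest-endpoint interval scheduling rule computes this after sorting, which is the fact used in Section \ref{sec:Implementation}.
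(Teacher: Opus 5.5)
Your proof is correct and follows the paper's argument. The upper bound comes from $S\cap U_t$ being independent in the induced subgraph, and the lower bound from greedily extending a maximum independent set of $\D_k[U_t]$ to a maximal independent set of $\D_k$; your extra checks (the ceiling equivalence for integer $t$ and the maximal-versus-maximum reading of $\mathcal M(\D_k)$) are correct and make explicit points the paper leaves implicit.
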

\begin{proof}
    For any maximal independent set $S$, $S\cap U_t$ is also an independent set such that $|S\cap U_t|\leq \text{IndNum}(\D[U_t])$.

    On the other hand, consider a maximum independent set $I\in \D[U_t]$. As $I$ is independent in $\D$ it can be extended to a maximal independent set $S\supset I$ in $\D$ by greedily adding vertices. Therefore, $|S\cap U_t|\geq |I|=\text{IndNum}(\D[U_t]).$
\end{proof}

For general graphs, computing $\text{IndNum}(\D[U_t])$ may be NP-hard, so to avoid computing $\text{IndNum}(\D[U_t])$ many times, IndBH computes the maximal independent sets of $\D$ and then uses inclusion and exclusion checks by looping over all maximal independent sets to compute $N_k(t)$. Classical results in graph theory, e.g.\ \cite{MillerMuller1960} or \cite{MoonMoser1965}, have shown that there are graphs with maximal independent sets exponential in the number of vertices; hence, this loop may have exponentially many components, which is computationally infeasible. We can circumvent this problem by using dynamic programming, since our dependency graph is an interval overlap graph so that an independent set is exactly a set of pairwise disjoint intervals. For each of the connected components, this allows us to compute the entire profile $t\mapsto \text{IndNum}(\D_k[U_t])$ in one sweep over the activation times $t$, using a Fenwick tree over compressed endpoints. Pseudo-code, and further details for the specialized algorithm can be found in Appendix \ref{subsec:compdetails}. 

We next demonstrate, in Proposition \ref{prop:LBD-FDRcomplexity}, that the improved algorithm has time complexity at most $\tilde O(n^2)$. Moreover, in moderate SNR regimes (where there are $O(1)$ rejections per change-point) with independent exponential family data, the time complexity is $O\left(n\log^{7/2}(n)+K_n^2\log(n)\right)$. 

\begin{proposition}\label{prop:LBD-FDRcomplexity}
   The LBD-FDR algorithm, computed on $n$ data-points assumed to follow an exponential family distribution, has time and memory complexity
   \begin{equation*}
       T=O\left(n\log^{7/2}(n)+R^2\log(n)+P_t\right), \quad M=O\left(n\log^{5/2}(n)+n_cR+P_m\right)
   \end{equation*}
   where $R\leq n\log^{5/2}(n)$ is the size of the Benjamini-Hochberg rejection set, $n_c\leq R\leq n\log^{5/2}(n)$ is the number of connected components of the rejected subgraph and $P_t,P_m$ are the time and memory complexity of computing the $p$-values. 

\end{proposition}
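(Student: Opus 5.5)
The plan is to split the LBD-FDR pipeline into four stages and bound the time and memory of each one separately: (i) build the Bonferroni triplets and compute their $p$-values; (ii) run the Benjamini--Hochberg screen; (iii) find the connected components of the interval-overlap subgraph induced by the BH rejections; (iv) run the dynamic-programming sweep that evaluates $t\mapsto \text{IndNum}(\D_k[U_t])$ on each component, and then read off the IndBH rejections.

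\textbf{Stage (i).} By \cite{jang2024fastoptimalchangepointdetection}, $m_n=|\bigcup_\ell \mathcal K_\ell|=O(n\log^{5/2}(n))$, so enumerating and storing the triplets costs $O(n\log^{5/2}(n))$ time and memory. For an exponential family, the statistic $T_t(X)$ in \eqref{eq:TtX} depends on the data only through the sufficient-statistic sums over $(s,m]$ and $(m,e]$. These sums come in $O(1)$ time per triplet from cumulative sums, which take $O(n)$ to precompute. Converting each statistic into a $p$-value is charged to $P_t$ and $P_m$.

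\textbf{Stage (ii).} Benjamini--Hochberg needs the sorted $p$-values, which costs $O(m_n\log m_n)=O(n\log^{7/2}(n))$ time and $O(m_n)$ memory. This is where the leading term $n\log^{7/2}(n)$ comes from. Since the IndBH rejection set is contained in the BH rejection set, all later stages can be restricted to the $R$ BH-rejected intervals.

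\textbf{Stage (iii).} To get the connected components, sort the $R$ intervals by left endpoint and merge overlapping runs. This costs $O(R\log R)$, and $\log R=O(\log n)$.

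\textbf{Stage (iv).} Fix a component $\D_k$ with $R_k$ vertices. By Proposition \ref{prop:NcalphaDkt}, $N_k(t)$ equals the maximum number of pairwise disjoint intervals among the activated vertices $U_t$. Activate the intervals in increasing order of $\lceil m_np_v/\alpha\rceil$. After each activation, recompute the disjoint-interval maximum using the classical earliest-right-endpoint DP, with a Fenwick tree over the compressed endpoints of the component. Each recomputation takes $O(R_k\log R_k)$, and there are at most $R_k$ activations, so component $k$ costs $O(R_k^2\log R_k)$ time. Storing the profile $t\mapsto N_k(t)$ for $t\le R$ takes $O(R)$ memory per component, which gives the $n_cR$ memory term.

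\textbf{Combining.} Summing over components with $\sum_k R_k^2\le R^2$ gives $O(R^2\log n)$ for stage (iv). The final step checks, for each $t$, whether $\sum_k N_k(t)\ge t$ in the certificate-set characterization. This is $O(n_cR)\le O(R^2)$ and is dominated by the previous term. Adding stages (i)--(iv) gives the stated bounds for $T$ and $M$. The bound $R\le m_n$ gives $n_c\le R\le n\log^{5/2}(n)$.

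The main obstacle I expect is stage (iv): showing that the per-component profile really costs $O(R_k^2\log R_k)$ and not more, and that the memory stays $O(n_cR)$. If an incremental Fenwick update of the DP turns out to be delicate, I would simply rerun the static $O(R_k\log R_k)$ greedy at each activation time. That still meets the $R^2\log n$ time bound, so the claimed complexity holds either way.
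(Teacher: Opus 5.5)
Your Stages (i)--(iii) match the paper's accounting, and Stage (iv) is sound. Rerunning the static disjoint-interval DP at each of the at most $R_k$ activation events of a component costs $O(R_k^2\log R_k)$ and fits the budget. Your worry about an incremental update is justified: a single pass that inserts intervals in activation order into a prefix-max Fenwick tree, as in Algorithm~\ref{alg:interval-dp}, never revises the values of intervals already inserted, so it undercounts as soon as an interval activates to the left of earlier ones.

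The gap is your final step. IndBH rejects $v$ only if some certificate set \emph{contains} $v$, i.e.\ iff $p_v\le\alpha\beta_v^*/m_n$ with $\beta_v^*=\max\{t: 1+\text{IndNum}(\D_{\kappa(v)}[U_{-v,t}])+\sum_{k\ne\kappa(v)}N_k(t)\ge t\}$. The global check $\sum_k N_k(t)\ge t$ only gives the upper bound $R_{\text{UB}}$ that appears in the paper's algorithm. For example, a long interval with small $p$-value that overlaps every other active interval of its component lies in no large independent set, so it need not be rejected even though $p_v\le\alpha R_{\text{UB}}/m_n$. Hence your procedure does not compute the LBD-FDR output. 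The work it omits is exactly the source of the paper's $R^2\log(n)$ term. For each vertex not settled by the cheap bounds built from $N_{\text{tot}}-N_{\text{plus}}+1$ and $N_{\text{tot}}-N_k+1$, the paper runs a left and a right restricted DP plus an $O(R)$ scan, at cost $O(R+s_{\text{comp}(v)}\log L)$, and sums via $\sum_j s_j^2\le R^2$. In your proof that term is spent on the profiles $N_k$, and the per-vertex work is never budgeted.

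This cannot be added naively to your framework. Rerunning a restricted DP at every activation event for every vertex costs $\sum_k R_k^3\log R_k$, which can be of order $R^3\log n$. To close the gap, use the interval structure once more. The largest active independent set containing $v$ is $1$, plus the largest disjoint family of active intervals ending at or before the left endpoint of $v$, plus the largest one starting at or after its right endpoint. Both values can be read off a prefix DP array and a suffix DP array, which you recompute once per activation event in $O(R_k\log R_k)$. So all the $N_v(t)$ of component $k$ cost the same $O(R_k^2\log R_k)$ as $N_k$.

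Then organise the threshold search in two passes. A first pass stores $N_{\text{tot}}$. A second pass over each component does the following at each activation event. For every vertex of the component, scan the times $t$ up to the next event and maintain the largest $t$ with $N_{\text{tot}}(t)-N_k(t)+N_v(t)\ge t$, which is $\beta_v^*$. At the end, compare $p_v$ with $\alpha\beta_v^*/m_n$. These segments partition $\{1,\dots,R\}$, so the scan costs $O(R)$ per vertex and $O(R^2)$ overall. It also avoids storing per-vertex profiles, which could need order $R^2$ memory and violate the bound on $M$. With this step added, your stage totals give the stated $T$ and $M$.
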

The calculation can be found in Appendix \ref{subsec:proofmemorycomplex}. In the case of exponential family data, computing the $p$-values for all Bonferroni triplets has time, and memory, complexity $P_t=O(m_n)$ by precomputing the cumulative sums of $X_i,X_i^2$; see Appendix B.2 of \cite{jang2024fastoptimalchangepointdetection} for more details. As one may expect, the time complexity increases for non-parametric distributions, and using the Wilcoxon rank sum for the local tests induces a time cost of  $P_t=O\left(n^2\log^{7/2}(n)\right)$, with memory complexity at most $O(nm_n)$. Using the worst case bounds on $n_c\leq R\leq m_n$, the IndBH calculation dominates the time and memory complexity in both scenarios, producing the worst case time complexity and memory requirement of $T=O\left(n^2\log^{6}(n)\right)$, and $M=O\left(n^2\log^{5}(n)\right)$. In practice, the worst case complexities may not be achieved for two main reasons. First, except for cases with many extremely high energy change-points, $R$ will be much less than $m_n$. Additionally, as $R$ increases, the number of connected components of the Benjamini-Hochberg procedure induced subgraph generally decreases since more intervals overlap. This is exemplified in the extreme case when $R=m_n$, which then implies $n_c=1$, so that $M=O\left(n\log^{5/2}(n)+P_m\right)$. Existing methods, such as LBD-FWER, SMUCE, MQS and FDRSeg also enjoy quasi-linear time complexity in reasonable instances, although performance can degrade in adversarial scenarios to quasi-quadratic and even quasi-cubic for FDRSeg and MUSLCE.

\subsection{Empirical Power Simulations}\label{sec:simulations}

To complement our theoretical results, we empirically compare the power and the precision for a \textit{detected} changepoint, $\tau$, of the LBD-FDR to five existing methods: LBD-FWER, \citep{jang2024fastoptimalchangepointdetection} SMUCE \citep{SMUCE2014}, FDRSeg \citep{Li2016FDRSeg}, MQS \cite{Vanegas03072022} and MUSCLE \cite{liu2025multiscalequantileregressionlocal}. In particular, we compare the proportion of changepoints detected (power), and the penalised best worst case error between a changepoint's true location and the end point of any declared interval covering $\tau$ (precision). To define the penalised precision, first define 
\[p_\tau=\min_{\substack{I\in \mathcal I_c\\\tau\in I}}\max_{x\in I}|x-\tau|,\]
where $\mathcal I_c(\alpha)$ is the set of significant intervals output by LBD-FDR. Then, the penalised precision is
\[p_{\rm pen}=\frac{1}{K_n}\sum_{k=1}^{K_n}\left(p_{\tau_k}\mathbbm{1}\left\{\tau_k\text{ detected}\right\}+n\mathbbm{1}\left\{\tau_k\text{ is not detected}\right\}\right).\] 
Smaller values of $p_{\rm pen}$ indicate more accurate localisation, with missed changepoints penalised by the maximum possible error.

For the triplet based methods, the declared intervals are the minimal rejected intervals. For SMUCE and MQS, we use the reported confidence regions. For FDRSeg and MUSCLE, whose output is the set of estimated changepoints, we follow the convention that $\hat \tau_i$ is a true discovery if
\[\text{there exists } \tau \in \mathcal T \text{ such that } \tau \in \left[\frac{\lceil n(\hat \tau_{i-1}+\hat \tau_i)/2 \rceil}{n},\frac{\lceil n(\hat \tau_{i}+\hat \tau_{i+1})/2 \rceil}{n}\right),\]
with the obvious boundary conditions for the first and last estimated changepoints.

We investigate the power, precision, and empirical error (Type I error or false discovery rate) of these methods in five different settings. Namely, we consider the effect on performance when the number of evenly spaced changepoints increases for a fixed $n$, the type of additive (symmetric) errors has different tail behaviour, and in the case where the change-points have different energies. For all experiments, we average the results over 500 independent Monte Carlo trials and set $\alpha=0.10$ for all six methods. 

The purpose is to demonstrate that the LBD-FDR balances power, precision and validity more effectively than the other methods. In summary, while FDRSeg often has the greatest power, its precision is worse than the LBD-FDR, and in heavy-tailed settings its false discovery rate guarantee is no longer supported theoretically and violated empirically. Compared with MUSCLE, the quantile based analogue of FDRSeg, LBD-FDR enjoys better precision in all considered settings and also comparable or better power, especially under Gaussian noise.

For our simulations, we consider the generalized Teeth10 changepoint arrangement or mix signals \citep[e.g.][]{WBS2014} or \citep[][Appendix C]{jang2024fastoptimalchangepointdetection}. We define the Teeth$(N,n,c)$ signal as the arrangement of $N-1$ evenly spaced changepoints, in an $n$-dimensional mean-vector of jump-height $c$. The mean-vector has entries
\begin{equation}\label{eq:teethNnc}
    \mu_i=\sum_{j=1}^{N/2} c\mathbbm{1}_{\frac{n}{N}(2(j-1),2j]}(i), \quad (i=1,2,\dots,n)
\end{equation}
and we assume $N$ to be even. In our first setting, we fix $n=256$ and $N=8$, then vary $c$, equivalently the signal to noise ratio, between $0.1$ and $3.1$. In our second, we fix $n=256$, $N=128$ and vary $c$ between $1.1$ and $5.1$. In our third setting, we keep the same fixed signal, the Teeth$(8,256,2)$ arrangement, and vary the distribution of the additive noise. In particular we select $t(\nu)$ with degrees of freedom $\nu \in \{1,2,5,10,20\}$. In our fourth and fifth settings we consider the mix signal with either $N(0,\sigma^2)$ noise for $\sigma\in \{2,4,6,8\}$ or $t(\nu)$ noise for $\nu\in\{1,2,3,4\}$.  When the errors follow a Gaussian distribution with known variance we use the CUSUM statistic for the triplet methods, and the known variance setting for SMUCE and FDRSeg. When the noise has a $t$-distribution we use two sample Kolmogorov-Smirnov test statistics for the triplet based methods, and use SMUCE and FDRSeg with the unknown variance setting. In all simulations, MQS and MUSCLE are performing median regression. Power and precision are visually summarized in Figure \ref{fig:prec&power} in the teeth experiments, and in Figure \ref{fig:mix} for the mix signal experiements. 

\begin{figure}[!ht]
    \centering
    \includegraphics[width=\linewidth]{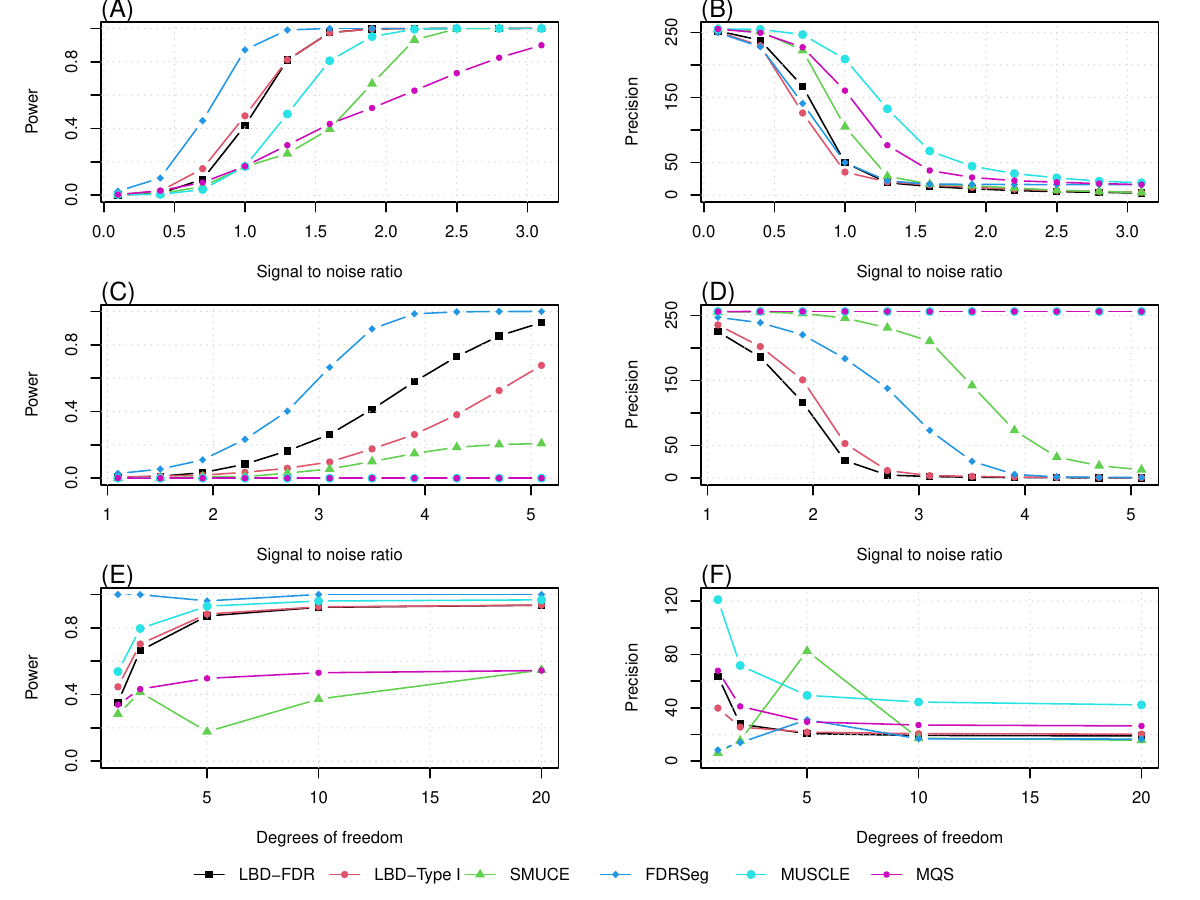}
    \caption{For all of the above plots the methods are coloured ``LBD-FDR'' (black), ``LBD-FWER'' (red), ``SMUCE" (green), ``FDRSeg'' (blue), ``MUSCLE'' (cyan), ``MQS'' magenta. (A) Power vs. signal to noise ratio for the Teeth$(8,256,c)$ for $c\in [0.1,3.1]$ with iid $N(0,1)$ additive noise. (B)  Precision vs. signal to noise ratio for the same simulated trials as (A). (C) Power vs. signal to noise ratio for the Teeth$(128,256,c)$ for $c\in [1.1,5.1]$ with iid $N(0,1)$. (D)  Precision vs. signal to noise ratio for the same simulated trials as (C). (E) Power vs. degrees of freedom, $\nu$, for the Teeth$(8,256,2)$ signal with iid $t(\nu)$ additive noise. (F) Precision vs. degrees of freedom for the same simulated trials as (E) The upper bound on precision is $n=256$, and a smaller precision is better.}
    \label{fig:prec&power}
\end{figure}

\begin{figure}
    \centering
    \includegraphics[width=\linewidth]{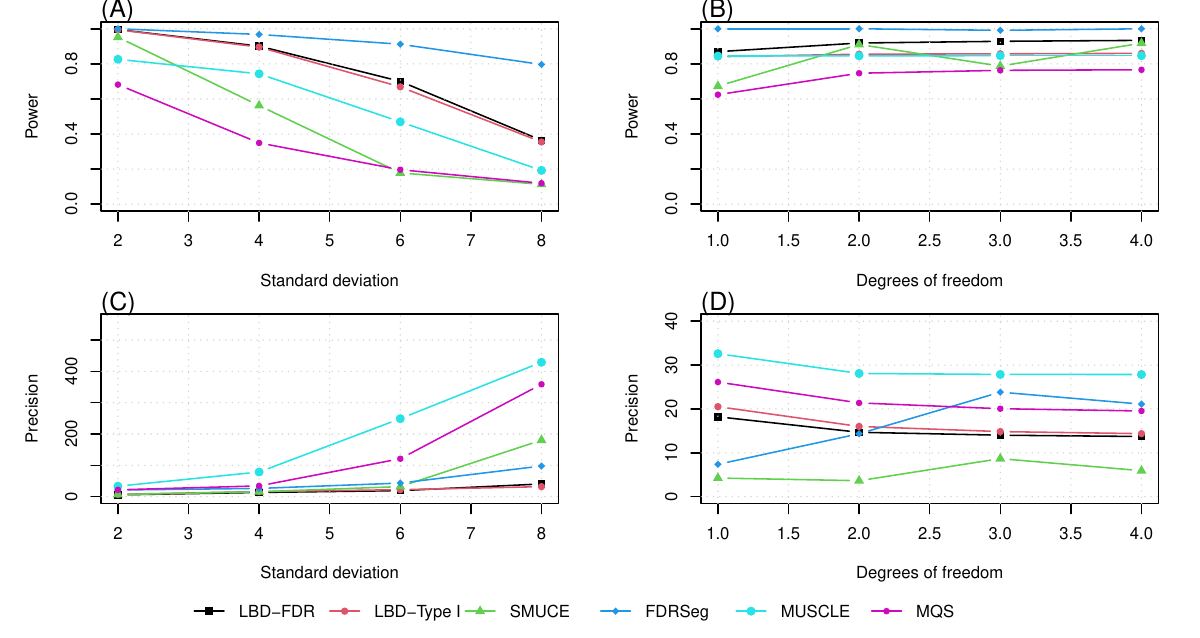}
    \caption{The colouring convention matches \ref{fig:prec&power}. (A) Empirical power from the ``mix'' signal with $N(0,\sigma^2)$ errors for $\sigma \in \{2,4,6,8\}$. (B) Empirical power from the ``mix'' signal with $t(\nu)$ errors for $\nu \in \{1,2,3,4\}$. (C) Empirical penalised precision for the same experiment as (A). (D) Empirical penalised precision for the same experiment as (B).}
    \label{fig:mix}
\end{figure}

As expected, when the signal to noise ratio increases, both power and precision tend to improve. In the case where there are 7 change-points, e.g.\ Figures \ref{fig:prec&power} (A) and (B) in the greatest signal to noise ratio regimes, both triplet based methods, FDRSeg and MUSLCE detect most, if not all, changepoints and localize them precisely. Although the power of FDRSeg is greater than the Bonferroni triplet based methods, its localisation accuracy is slightly worse. In the case when there are 127 change-points (Figures \ref{fig:prec&power} (C) and (D)), which is the closest spacing possible, the LBD-FDR and FDRSeg have greater power than LBD-FWER and SMUCE, while MQS and MUSCLE fail to detect any changepoint. Despite this, when a changepoint is detected, the two triplet based methods have comparable precision, both of which have better precision than SMUCE and FDRSeg. In all cases, SMUCE and MQS appear to be conservative; this is expected and was the motivation behind the development of FDRSeg and MUSCLE respectively. For the mix signal (Figure \ref{fig:mix}) the story is essentially the same: Averaged over the changepoints, FDRSeg enjoys the greatest power but has worse penalised precision compared to the triplet based methods.

In Figures \ref{fig:prec&power} (E) and (F), corresponding to the case of the teeth signal and $t$-distributed noise, it is clear that FDRSeg is the most powerful, and has comparable precision to the other methods, that is, the precision of the triplet methods has worsened. This is to be expected since non-parametric tests are expected to be conservative when compared to methods that make the correct assumption about the data's distribution. While MUSCLE has the second best power after FDRSeg, its precision is the worst. In all settings, the triplet based methods, MUSCLE and MQS empirically control their respective error rates, as is expected theoretically. On the other hand, SMUCE and FDRSeg do not empirically control the Type I error or false discovery rates, respectively, when the noise does not have a second moment (in all cases the estimated error rates exceed $5\alpha=0.5$). In the other settings SMUCE and FDRSeg empirically control their error rates. 

The trade off between validity and power or precision should be apparent. When assuming less about the data by using a non-parametric method, a practitioner ensures validity in a wider class of models at the cost of a loss in power and/or precision when compared to methods that correctly specify the data's distribution. We stress that validity in heavy-tailed or non-parametric settings is needed, since heavy tails, or non-Gaussian data, can be expected in several applications such as array comparative genomic hybridization \citep{Huang2007}, or well-log data \citep[e.g.][]{Fearnhead02012019,oruanidh1996}. Overall, while the LBD-FDR, LBD-FWER, MQS and MUSCLE are always valid, the LBD-FDR seems to enjoy comparable, or better, power and/or precision than the other three valid methods in the tested settings.

\section{Discussion}\label{sec:discussion}

We suggest three directions for future research. The first is to establish more refined methods and proofs in the known variance Gaussian setting. First of all, we do not derive an explicit detection threshold when there is a polynomial number of changepoints, or when the spacing is large. It is challenging to develop impossibility results, over all arrangements, for false discovery rate controlling procedures due to the interplay between the data adaptive nature of these methods and the rich geometry of all changepoint arrangements. Only recently have impossibility results for false discovery rate controlling procedures been developed in the multiple testing literature \citep[e.g.][]{AriasCastro2017, Rabinovich2020,Castillo2020,Roquian2022}, and most studies focus on the case with independent hypotheses. The multiscale nature of changepoint detection inherently creates geometric obstacles in the way of significant correlation structures and a loss of bijectivity between a single change-point and one rejected hypothesis. 

Despite the difficulty of establishing optimality, graph adapted procedures are natural for multiscale detection problems in any finite dimension. Due to its analytic tractability, IndBH is a strong candidate as the base for many multiscale procedures, however, it is not the most liberal graph-adapted procedure. More liberal graph adapted procedures are defined as the fixed point of a sequential algorithm where early stopping breaks validity, or potentially exponentially many recursive calls to IndBH, both of which come with a large computational burden. Moreover, it is not clear whether deterministic pruning of their rejection sets still results in a false discovery rate controlling method. We leave the implementation and theory of those more liberal procedure open.

Our final suggested direction for future research concerns modifications so that the LBD-FDR is valid in the setting when there are heteroscedastic or weakly correlated errors. The false discovery rate guarantee of our method explicitly relies on the data sequence being independent, and our theoretical guarantees fail otherwise. Moreover, heteroscedastic errors break the assumed exchangeability of the data, under the null, over all triplets within the Bonferroni triplet family. 

\subsection{Declarations (Reproducibility and Use of generative AI)}
All code and simulation data used for this study are freely available at \url{https://github.com/davlo199/LBD-FDR}. 

During the preparation of this work the authors used GPT-5.2 Pro to assist in the development of the code, aid in calculations, as well as finding suitable references cited within the work. After using this tool/service the authors reviewed and edited the content as necessary and take full responsibility for the content of this article.

\subsection{Acknowledgements}
L. Davis would like to thank Yash Nair and Conor Kresin for their helpful conversations and feedback. G. Walther acknowledges support from NSF grant DMS-2413885.

\bibliographystyle{abbrvnat}
\bibliography{bibliography}
\clearpage

\appendix

\section{Proof of Theorems \ref{thm:prunedFDRcontrol} and \ref{thm:power}}

In this first section of the appendix we present the remaining details of Theorem \ref{thm:prunedFDRcontrol} and Theorem \ref{thm:power}. 

The first Subsection \ref{sec:Pruningproof} contains a short argument that exploits the representation of independent sets of hypotheses as disjoint sets of intervals. Indeed, for a triplet $t$ to be rejected, it must be part of a certificate set of size $|\beta_t|$. But the independence structure of the hypotheses corresponds exactly to that of intervals which disjoint and hence minimal with respect to inclusion. Therefore, pruning the IndBH rejection set does not remove too many non-null rejections. 

The second Subsection \ref{sec:NecandSufEnergyLBDFDRproof}, proofs Theorem \ref{thm:power}. This proof follows fairly easily from Theorem 3.1 of \cite{jang2024fastoptimalchangepointdetection} and the certificate set structure of IndBH. 

\subsection{Proof of Theorem \ref{thm:prunedFDRcontrol}}\label{sec:Pruningproof}

\begin{proof}

    For any null triplet $ t =(s,m,e)$ define $I( t)=(s,e]$, i.e.\ its outer interval. Equations (2)-(4) of \cite{nguyen2025controllingfalsediscoveryrate} imply that 
    \[ t \in \mathcal R_\alpha(\text{all})\iff p_t\leq \frac{\alpha \beta_{ t}}{m_n}\]
    where 
    \[\beta_{ t}=\max\left\{r\geq 1\ \mid \ \text{ there exists } A \subset N_{\D,  t}^c \text{ independent }, |A|=r-1, \ p_u\leq \frac{\alpha r}{m_n}\ \text{ for every } u \in A\right\}\]
    so that $\beta_{ t}$ is the size of the largest independent certificate set of the IndBH procedure. It is clear that $\beta_{ t}$ is only a function of the $p$-values that are independent of $p_{ t}$.

    Now let $\square \in \left\{\min,\text{disj}\right\}$ and consider some $J \in \mathcal R_\alpha(\square)$, so there exists a triplet, $ t$, such that $I( t)=J$, with corresponding certificate set $C_t$, with size $\beta_t$. Each triplet in $C_t$ is rejected and, by independence, is disjoint from every other triplet in the certificate set, since the data sequence contains independent entries. Therefore, each triplet in $C_t$ is a minimal and disjoint rejected interval, so that $|\mathcal R_\alpha(\square)|\geq \beta_t.$
    Therefore,
    \begin{equation}\label{eq:1JFDP}
        \frac{\mathbbm{1}\{J\in \mathcal R_\alpha(\square)\}}{|\mathcal R_\alpha(\square)|\vee 1}\leq \sum_{ t: I( t)=J}\frac{\mathbbm{1}\{p_{ t}\leq \alpha \beta_{ t}/m_n\}}{\beta_t}.
    \end{equation}
    Summing equation \eqref{eq:1JFDP} over $J\in \mathcal H_0$, and then taking an expectation implies,
    \begin{align}\label{eq:2JFDP}
        \text{FDR}^{\square}&\leq \E\left[\sum_{J \in \mathcal{H}_0}\sum_{ t: I( t)=J}\frac{\mathbbm{1}\{p_{ t}\leq \alpha \beta_{ t}/m_n\}}{\beta_t\vee 1}\right]\\\label{eq:3JFDP}
        &=\sum_{ t \in \mathcal H_0}\E\left[\frac{\mathbbm{1}\{p_{ t}\leq \alpha \beta_{ t}/m_n\}}{\beta_t\vee 1}\right]\\\label{eq:4JFDP}
        &=\sum_{ t \in \mathcal H_0}\E\left[\E\left[\frac{\mathbbm{1}\{p_{ t}\leq \alpha \beta_{ t}/m_n\}}{\beta_t\vee 1} \ \bigg| \ S_{ t}\right]  \right]\\\label{eq:5JFDP}
        &\leq \sum_{ t \in \mathcal H_0}\E\left[\frac{\alpha \beta_{ t}}{(\beta_{ t}\vee 1)m_n}\right]\leq\frac{\alpha |\mathcal H_0|}{m_n}.
    \end{align}
    Equation \eqref{eq:3JFDP} is a reindexing; a triplet $ t$ is null if, and only if, its outer interval $I( t)$ does not contain a change-point. Equation \eqref{eq:4JFDP} is an application of the tower property, conditioning on $S_{ t}$, which is the $\sigma$-algebra generated by all $p$-values that are independent of $p_{ t}$. Equation \eqref{eq:5JFDP} follows from the fact that $p_{ t}\perp S_{ t}$, $\beta_t$ is measurable with respect to $S_{ t}$, and $p_{ t}$ is super-uniform under the null-hypothesis.
    
\end{proof}

\subsection{Proof of Theorem \ref{thm:power}}\label{sec:NecandSufEnergyLBDFDRproof}
\begin{proof}

We only consider the case in which there are at least two detectable change-points, specifically that $K_n\geq 2$. Consider the collection of high-energy change-points $\{\tau_{k_i}\}_{i=1}^{K_n}$, we want to find a Bonferroni triplet such that the mean of the test-statistic computed on the triplet approximates the continuous energy $\mathcal E_{k_i}$ of the change-point $\tau_{k_i}$. We index the high-energy change-points as a subsequence of length $K_n$, so that our analysis is local, e.g.\ that small undetectable change-points do not enter into the certificate sets.

The proof constructs two certificate sets 
\[C^{(0)}:=\{i \text{ even}\}, \quad C^{(1)}:=\{i \text{ odd}\}\]
so that for any $i \in [K_n]$ the certificate for $\tau_{k_i}$ is $C^{(i\mod 2)}$, hence $|C^{(k)}|\asymp K_n/2$ for $k=0,1$. If the triplets $I_j,I_k \in C^{(0)}$ then necessarily $j=k$ or $I_j\cap I_k =\emptyset$ since there is a ``buffer" of at least one odd numbered change-point between the two nearest points of the interval. We thus reject the null-hypothesis that there is no change-point in the candidate triplet $I_i\subset (\tau_{k_i-1},\tau_{k_i+1}]$, if 
\begin{equation}\label{eq:pt}
    \max_{k\in \{0,1\}}\max_{i \in C^{(k)}}p_{I_i} \leq \frac{\alpha \lfloor K_n/2\rfloor}{m_n}
\end{equation}
by the certificate set interpretation of the IndBH procedure (see Section 3.1 of \cite{nguyen2025controllingfalsediscoveryrate}). 

We now construct these two certificate sets and demonstrate that equation \eqref{eq:pt} holds with probability tending to 1. The following is analogous to the first half of the proof of Theorem 3.1 in \cite{jang2024fastoptimalchangepointdetection} which is concerned only with the geometric properties of Bonferroni triplets.

Let $\tau_{k_i}$ be a high-energy change-point that satisfies equation \eqref{eq:PowerfulEnergy}, then set $a_i=\max(\tau_{k_i}-\delta_i,\tau_{k_i-1})$ and $z_i=\min(\tau_{k_i}+\delta_i,\tau_{k_i+1})$ where $\delta_i$ is the smallest non-negative integer such that 
\begin{equation}\label{eq:ApproximatingEnergy}
    \left|\mu_{\tau_{k_i+1}}-\mu_{\tau_{k_i}} \right|\sqrt{\frac{(\tau_{k_i}-a_i)(z_i-\tau_{k_i})}{z_i-a_i}}\ \geq \ \sqrt{2\log\left(\frac{n}{K_n}\right)}+\sqrt{2\log(K_n)}+c_n.
\end{equation}
Such a $\delta \in \left[1,\left((\tau_k-\tau_{k-1})\vee (\tau_{k+1}-\tau_k)\right)\right]\subset[1,n^{r_n}]$ exists since the right hand side of inequality \eqref{eq:ApproximatingEnergy} is independent of $k$. As argued in \cite{jang2024fastoptimalchangepointdetection}, there is no loss of generality in assuming $\tau_{k_i}-a_i\leq z_i-\tau_{k_i}$. It is a well known result that $z_u \leq \sqrt{2\log(1/u)}$ for $u \in (0,1)$. Therefore, for any fixed $\alpha \in (0,1)$, as $\alpha \lfloor K_n/2\rfloor/(2m_n)\geq \frac{\alpha K_n}{6m_n}$
\begin{align*}
    z_{\alpha \lfloor K_n/2\rfloor/(2m_n)}&\leq \sqrt{2\log\left(\frac{6m_n}{\alpha K_n}\right)}\\
    &\leq \sqrt{2\log\left(\frac{n}{K_n}\right)+2C\log\log(n)}\\
    &\leq \sqrt{2\log\left(\frac{n}{K_n}\right)}+C_2\sqrt{\log\log(n)}\\
    &=\sqrt{2\log\left(\frac{n}{K_n}\right)}+o(c_n),
\end{align*}
since $\sqrt{\log\log(n)}=o(c_n)$. Therefore, equation \eqref{eq:ApproximatingEnergy} can be replaced with 
\begin{equation}\label{eq:ApproximatingEnergywithZ}
    \left|\mu_{\tau_{k_i+1}}-\mu_{\tau_{k_i}} \right|\sqrt{\frac{(\tau_{k_i}-a_i)(z_i-\tau_{k_i})}{z_i-a_i}}\ \geq \ z_{\alpha \lfloor K_n/2\rfloor/(2m_n)}+\sqrt{2\log(K_n)}+\frac{c_n}{2}
\end{equation}
for all sufficiently large $n$.

The work on page 25 of \cite{jang2024fastoptimalchangepointdetection} is unchanged and we can find that the test-statistic $T_{t_k}(X)$ satisfies
\begin{align}\nonumber
    T_{t_k}( X)&\geq \left|\mu_{\tau_{k_i+1}}-\mu_{\tau_{k_i}} \right|\sqrt{\frac{(\tau_{k_i}-a_i)(z_i-\tau_{k_i})}{z_i-a_i}}\frac{\left[1-\frac{8}{\sqrt{2(1-r_n)\log(n)}}\right]^{3/2}}{\left[1+\frac{16}{\sqrt{2(1-r_n)\log(n)}}\right]^{1/2}}-|N(0,1)|\\ \nonumber
    &\geq  \left[z_{\alpha \lfloor K_n/2\rfloor/(2m_n)}+\sqrt{2\log(K_n)}+\frac{c_n}{2}\right]\left[1-\frac{24}{\sqrt{2(1-r_n)\log(n)}}\right]-|N(0,1)|\\\label{eq:z_statTtineq}
    &\geq  z_{\alpha \lfloor K_n/2\rfloor/(2m_n)}+\sqrt{2\log(K_n)}+\frac{c_n}{4}-|N(0,1)|.
\end{align}
where the final inequality holds for all $n$ sufficiently large (dependent only on $c_n$), because 
\begin{align*}
    24\frac{z_{\alpha \lfloor K_n/2\rfloor/(2m_n)}+\sqrt{2\log(K_n)}+c_n/2}{\sqrt{2(1-r_n)\log(n)}}&\lesssim 24\frac{\sqrt{2\log\left(\frac{n}{K_n}\right)}+\sqrt{2\log(K_n)}+c_n/2}{\sqrt{2(1-r_n)\log(n)}}\\
    &\lesssim \frac{\sqrt{\log(n)}+c_n}{\sqrt{\log(n)(1-r_n)}}\\
    &= \frac{1}{\sqrt{1-r_n}}+o(c_n)=o(c_n),
\end{align*}
By the Gaussian tail bounds, for $p$-values corresponding to the triplet $t_k$,
\begin{align*}
    \Prob\left(p_k \leq \frac{\alpha \lfloor K_n/2\rfloor}{m_n}\right) &= \Prob\left(1-\Phi(T_{t_k}) \leq \frac{\alpha \lfloor K_n/2\rfloor}{2m_n}\right)\\
    &=\Prob\left\{z_{\alpha\lfloor K_n/2\rfloor/2m_n}\leq T_{t_k}( X) \right\}\\
    &\geq \Prob_0\left[|N(0,1)|\leq \sqrt{2\log(K_n)}+\frac{c_n}{4} \right]\\
    &= 1-\frac{2^{1/2}}{\pi^{1/2}\left[\sqrt{2\log(K_n)}+\frac{c_n}{4}\right]}\exp\left(-\frac{\left[\sqrt{2\log(K_n)}+\frac{c_n}{4}\right]^2}{2}\right)
\end{align*}
Now consider the event that all high-energy change-points are detected, and let $b_n=\sqrt{2\log(K_n)}+\frac{c_n}{4}$.
\begin{align*}
    \Prob(\text{all $K_n$ change-points are detected})&\geq \Prob(C^{(0)}\text{ and }C^{(1)}\text{ are rejected})\\
    &\geq 1-\Prob(C^{(0)}\text{ not rejected}) -\Prob(C^{(1)}\text{ not rejected}) \\
    &=\Prob\left[\bigcap_{j\in C^{(0)}}\left\{p_j \leq \frac{\alpha \lfloor K_n/2\rfloor}{m_n}\right\} \right]-\Prob\left[\bigcup_{j\in C^{(1)}}\left\{p_j > \frac{\alpha \lfloor K_n/2\rfloor}{m_n}\right\} \right]\\
    &=\prod_{j\in C^{(0)}}\Prob\left(p_j \leq \frac{\alpha \lfloor K_n/2\rfloor}{m_n}\right)-\sum_{j\in C^{(1)}}\Prob\left(p_j > \frac{\alpha \lfloor K_n/2\rfloor}{m_n}\right)\\
    &\geq \left(1-\frac{2}{(2\pi)^{1/2}b_n}e^{-\frac{b_n^2}{2}} \right)^{\lfloor K_n/2\rfloor}-\frac{2\lfloor K_n/2\rfloor}{(2\pi)^{1/2}b_n}e^{-\frac{b_n^2}{2}} \to 1
\end{align*}
if $\frac{K_ne^{-\frac{b_n^2}{2}}}{b_n(2\pi)^{1/2}}\to 0$; by absolute convergence of the infinite product as $K_n \to \infty$. A sufficient condition is $2\log K_n\leq b_n^2$, which is clearly true for the choice of $b_n=\sqrt{2\log(K_n)}+\frac{c_n}{4}$.

\end{proof}

\section{Supplement to Section \ref{sec:MinMaxImpossiblity}}\label{sec:ImpossibilityProofs}
This appendix contains additional details relevant to Section \ref{sec:MinMaxImpossiblity}. In particular, it contains proofs as well as additional discussion relevant to how refined our arguments are and what obstacles still lie before us in regards to obtaining a sharp lower bound over change-point arrangements for any $K_n$.

The proof of Theorem \ref{thm:ImpossLocalScale} constructs a hard subfamily of change-point arrangements over which the uniform mixture distribution is statistically indistinguishable from a non-flat background. Specifically, there is an arrangement of $K_n$ change-points, ``far-away'' from the background arrangement $\mu_n^{(0)}$, that induce an isotropic Gaussian random variable that is statistically close to $N(\mu_n^{(0)},I_n)$ (e.g.\ the total variation distance between the distributions tends to 0 as $n\to \infty$). We construct such a distribution by discretising $\{1,2,\dots,n\}$ into flat segments between change-points, and uniformly at random place $K_n$ blocks to ``activate'', or place a change-point in. After combining blockwise, under $H_0$, we observe a collection of $M_n$ (defined in the theorem) independent and identically distributed $N(0,1)$ random variables. Under $H_1$ we observe $K_n$ $N(\mathcal E_k,1)$ random variables, for some positive mean $\mathcal E_K$ with the rest being null. Asymptotically, this is equivalent to 
\begin{align}\label{eq:sparsemeans}
    H_0: \ X_i \stackrel{\text{iid}}{\sim} N(0,1), \quad H_1 \ : \ X_i \stackrel{\text{iid}}{\sim}(1-\epsilon_n)N(0,1)+\epsilon_n N(\mathcal E,1)
\end{align}
with $\epsilon_n=\frac{K_n}{M_n}$, $M_n$ total hypotheses  \citep{kou2022largescaleinferenceblockstructure}.  

The problem of detecting $H_0$ vs. $H_1$ in equation \eqref{eq:sparsemeans} is exactly that of detecting a sparse mixture of normal random variables with some having positive mean; a classical problem studied by \cite{Ingster1999} and \cite{DonohoHC2004}. The necessary energy in equation \eqref{eq:smallenergy} is the same as that of \cite{Ingster1999,DonohoHC2004}, up to the reparameterisation discussed, so that the optimal procedure for detecting the existence of any change-point is the higher criticism statistic. In particular, this implies that our impossibility argument against the global null cannot be improved. As such, any further refinement in establishing the optimality for false discovery rate controlling change-point detection procedures must be done locally, e.g.\ against a non-zero background. 

We now consider local refinements against a non-zero background. Here, the lower bounds for false discovery rate and Type I error separate, and it is due to how the power and level of these tests are defined. For a Type I error controlling procedure $\phi(X)$, the power, $\beta$, and level, $\alpha$ are defined as $\E_{P_1}[\phi(X)]=\beta$ and $ \E_{P_0}[\phi(X)]\leq \alpha$ \citep[e.g.][]{Lehmann2022}, so that establishing a detection threshold is simply a consequence of the likelihood ratio converging to 1 in $L^1$. However, the false discovery rate is defined as the expectation of a ratio of two random variables so that a tractable comparison between the power and level requires analysis of the local geometry of the change-point arrangements and how these interact with the interval systems generating the sets of hypotheses.

To facilitate such a local analysis we now allow there to be a (slowly) growing number of change-points which may have arbitrarily high energy. The reason we can only consider a slowly growing number of change-points for our local analysis is that if the number of local hypotheses which are non-null is too large relative to the number of nulls, relating the false discovery rate and power under a local alternative will yield trivial bounds; this is due to the multiplicity burden of a single change-point. For most systems of intervals or triplets, a single change-point can activcate $\Omega(n)$ hypotheses. A simple construction for a non-sparse family of intervals is for $\tau=\lfloor n/2\rfloor$ and 
\[\mathcal S=\left\{(\tau-2,\tau+1],(\tau-2,\tau+2],\dots, (\tau-2,n]\right\}.\]
Clearly,  $|\mathcal S|=n$ so that a single adversarially placed change-point can induce $\Omega(n)$ non-null hypotheses. For Bonferroni triplets, a similar result holds because the longest extension length is $O(n)$. 

For a general non-zero background with a growing number of change-points, we are more restricted in our impossibility analysis as to when we can or cannot localize the perturbative change-points. Heuristically, if there are many more change-points under the background $\mu_n^{(0)}$ than we add under the local alternative $\mu_n$ the rejections corresponding to these background change-points will dominate the false discovery proportion and so it is possible to reject every interval containing only these new change-points while still controlling the false discovery rate. This is best exemplified in the following lemma. 

\begin{lemma}\label{lem:FDRpowBound}
Define $N_{\psi_n(\alpha)}(I)$ to be the number of null hypotheses that are rejected by the FDR level $\alpha$ procedure $\psi_n$. Moreover, define $m_{1,n}$ as the number of non-null hypotheses. Then,
\begin{align}\label{eq:FDRPpowUPbound}
\Prob\left(N_{\psi_n(\alpha)}(I)\geq x\right)\leq \alpha \left(1+\frac{m_{1,n}}{x}\right).
   \end{align}
\end{lemma}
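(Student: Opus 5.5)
The plan is a Markov-type argument on the false discovery proportion, using only that $\psi_n(\alpha)$ controls the FDR at level $\alpha$ under the distribution considered. Write $V=N_{\psi_n(\alpha)}(I)$ for the number of rejected nulls, $S$ for the number of rejected non-nulls, and $R=V+S$ for the total number of rejections. By assumption, $\E[V/(R\vee 1)]\le\alpha$. The one structural fact needed is that the number of rejected non-nulls is at most the total number of non-nulls, so $S\le m_{1,n}$ deterministically. Hence, on the event $\{V\ge x\}$ with $x>0$ (so $R\ge 1$),
\[
\frac{V}{R\vee 1}=\frac{V}{V+S}\ \ge\ \frac{V}{V+m_{1,n}}\ \ge\ \frac{x}{x+m_{1,n}},
\]
because $v\mapsto v/(v+m_{1,n})$ is nondecreasing in $v\ge 0$.

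Next apply Markov's inequality to the nonnegative variable $\mathrm{FDP}=V/(R\vee 1)$:
\[
\Prob(V\ge x)\ \le\ \Prob\!\left(\mathrm{FDP}\ge \frac{x}{x+m_{1,n}}\right)\ \le\ \frac{x+m_{1,n}}{x}\,\E[\mathrm{FDP}]\ \le\ \alpha\left(1+\frac{m_{1,n}}{x}\right),
\]
which is exactly \eqref{eq:FDRPpowUPbound}. The case $x\le 0$ is not meaningful, so I will assume $x>0$; for $x$ so small that the right-hand side exceeds one the bound holds trivially.

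The only point needing care is the monotonicity step, namely that $S\le m_{1,n}$ bounds the denominator uniformly. This requires $m_{1,n}$ to be the deterministic number of non-null hypotheses in the family $I$ under the true $\mu_n$, and not a random quantity. I will state this convention explicitly. Everything else is routine, and no property of the triplet geometry or of IndBH is used.
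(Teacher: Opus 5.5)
Your proposal is correct and follows the paper's proof: the paper likewise uses that at most $m_{1,n}$ non-nulls can be rejected to get $\mathrm{FDP}\ge \mathbbm{1}\{N_{\psi_n(\alpha)}(I)\ge x\}\,x/(x+m_{1,n})$, then takes expectations, which is exactly your Markov step. Your explicit remarks that $R\ge 1$ on the event and that $m_{1,n}$ is the deterministic non-null count under the true distribution make precise what the paper leaves implicit.
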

\begin{proof}
The desired inequality follows from a simple calculation
 \begin{align}\label{eq:FDPmanip1}
       \text{FDP}_n&\geq  \mathbbm{1}\left\{N_{\psi_n(\alpha)}(I)\geq x\right\}\frac{N_{\psi_n(\alpha)}(I)}{R_n\vee 1}\\
       &\geq \mathbbm{1}\left\{N_{\psi_n(\alpha)}(I)\geq x\right\}\frac{N_{\psi_n(\alpha)}(I)}{N_{\psi_n(\alpha)}(I)+m_{1,n}}\\\label{eq:FDPmanip2}
       &\geq \mathbbm{1}\left\{N_{\psi_n(\alpha)}(I)\geq x\right\}\frac{x}{x+m_{1,n}}\\\nonumber
       \implies  \alpha &\geq \frac{x}{x+m_{1,n}}\Prob\left(N_{\psi_n(\alpha)}(I)\geq x\right)\iff  \Prob\left(N_{\psi_n(\alpha)}(I)\geq x\right)\leq \alpha \left(1+\frac{m_{1,n}}{x}\right).
   \end{align}
\end{proof}

In general if $x=o(m_{1,n})$ the upper bound in equation \eqref{eq:FDRPpowUPbound} becomes trivial. In our change-point detection context it is challenging to derive non-trivial upper bounds on $m_{1,n}$ over all length scales. This is because any number of additional change-points can induce $\Omega(n)$ non-null hypotheses. However, we can control $m_{1,n}$ non-trivially if we restrict our attention to hypotheses of a maximum length scale $B_{\max,n}$. When we do so, we are able to leverage equations \eqref{eq:FDPmanip1}-\eqref{eq:FDPmanip2} to derive a scale-dependent impossibility theorem against a non-zero background. This theorem is presented in the main text (Theorem \ref{thm:ImpossLocalScale}).

We now present the more technical proofs associated with this section.
\input{thm3updatedproof}

\subsection{Proof of Theorem \ref{thm:ImpossLocalScale}}

Arguing as we did in equations \eqref{eq:FDPmanip1}-\eqref{eq:FDPmanip2} restricted to the hypotheses of scale at most $B_{\max,n}$ and by applying the result of Proposition \ref{prop:TVproposition}
\begin{align*}
 \inf_{\mu_n}\Prob_{\mu_n}\left(N_{\psi_n(\alpha)}(I)\geq x\right)&\leq \E_{\mu_n\sim \pi_n}\left\{\Prob_{\mu_n}\left(N_{\psi_n(\alpha)}(I)\geq x \right) \right\}\\ \nonumber   
    &\leq \Prob_{\mu_n^{(0)}}\left(N_{\psi_n(\alpha)}(I)\geq x\right)+\norm{\Prob_{\mu_n^{(0)}}-\bar P_{\mu_n}}_{TV}\\
    &\leq \alpha \left(1+\frac{m_{1,n}}{x}\right)+\norm{\Prob_{\mu_n^{(0)}}-\bar P_{\mu_n}}_{TV}\\
    &=\alpha \left(1+\frac{m_{1,n}}{x}\right)+o(1)
\end{align*}
The remainder of the proof is devoted to deriving sufficient conditions on $x$, dependent on $K_{n}^{(0)}$ and $B_{\max,n}$, so that $\left(1+\frac{m_{1,n}}{x}\right)=1+o(1)$. To do so, consider one of the $K_n^{(0)}$ change-points, under the background, $\tau_k$. We will upper bound $c(\tau_k;B_{\max,n})=\#\{(t_1,t_2,t_3)|t_1< \tau\leq t_3 \cap t_3-t_1\leq B_{\max,n}\}$ by decomposing 
\[c(\tau_k;B_{\max,n})=c_B(\tau_k;B_{\max,n})+c_E(\tau_k;B_{\max,n})\]
where $c_B(\tau_k;B_{\max,n})$ is the largest number of Bonferroni intervals $\tau_k$ can fall into, and $c_E(\tau_k;B_{\max,n})$ is the largest number of extension sides that $\tau_k$ can fall into.

We first consider $c_B(\tau_k;B_{\max,n})$ and fix some scale $\ell$. Recalling the definition of Bonferroni intervals \citep{walther2010,WaltherPerry2021}, for the readers convenience, for any integer $\ell \in \{0,\dots,\ell_{\max}=\lfloor \log_2(n/4)\rfloor-1\}$, intervals of length $L_I\in [2^\ell,2^{\ell+1})$ are approximated by the collection of intervals
\begin{equation}
    \mathcal J_\ell=\left\{(j,k]\ \mid \ j,k\in \{id_\ell,i=0,\dots\},2^\ell \leq k-j<2^{\ell+1} \right\}
\end{equation}
where the grid spacing is $d_\ell=\left\lceil 2^\ell\left\{2\log\left(\frac{en}{2^\ell} \right) \right\}^{-1/2}\right\rceil$. The collection $\bigcup_\ell \mathcal J_\ell$ is called Bonferroni intervals.

At the scale $\ell$, $c_B^{\ell}(\tau_k)\lesssim\left(\frac{2^\ell}{d_\ell}\right)^2$, where the number of Bonferroni intervals that $\tau_k$ can fall into is denoted $c_B^{\ell}(\tau_k)$. The reason being that the left end-point of the Bonferroni interval is placed on a grid-point, $id_\ell$, such that $0< \tau-id_\ell\leq 2^\ell$ and there are at most $\frac{2^\ell}{d_\ell}$ integers $i$ satisfying this inequality. By symmetry, this holds for right endpoints as well, so multiplying the two gives an upper bound on $c_B^{\ell}(\tau_k)$. Summing over all permissible scales implies
\begin{align*}
    c_B(\tau_k;B_{\max,n}) &=\sum_{\substack{\ell\leq L \\ L\leq \log_2(B_{\max,n}) }}c_B^{\ell}(\tau_k)\\
    &\lesssim \sum_{\substack{\ell\leq L \\ L\leq \log_2(B_{\max,n}) }} \left(\frac{2^\ell}{d_\ell}\right)^2\\
    &\lesssim \sum_{\substack{\ell\leq L \\L\leq \log_2(B_{\max,n}) }} 2\log\left(\frac{en}{2^\ell}\right)=O\left(\log(n)\log(B_{\max,n})\right).
\end{align*}

We now consider $c_E(\tau_k;B_{\max,n})$ and, once again, first fix a scale $\ell$. We consider right extensions, and by symmetry this holds for left extensions too. If $\tau_k$ is in the extension, then $t_2<\tau\leq t_2+L \implies t_2\in [\tau_k-L,\tau)$. For any fixed $t_2$ there are at most $\frac{2^\ell}{d_\ell}$ Bonferroni intervals of the form $(t_1,t_2]$. Moreover, there are no more than $\frac{2L}{d_\ell}$ possible positions for $t_2 \in [\tau_k-L,\tau)$. Thus,
\[\#\{\text{left-Bonferroni triplet at scale $\ell$ with extension $L$ covering $\tau_k$}\}\lesssim \frac{L2^\ell}{d_\ell^2}.\]
Summing $L$ over all lengths, $L\leq B_{\max,n}$ implies
\begin{align*}
    \sum_{\substack{L\leq B_{\max,n}}}L\asymp\sum_{r:2^r\leq B_{\max,n}}2^r\sqrt{2\log\left(\frac{en}{2^r}\right)}\lesssim \sqrt{\log(en)}\sum_{r:2^r\leq B_{\max,n}}2^r=O\left(B_{\max,n}\sqrt{\log(n)}\right)
\end{align*}
since, at scale $r$ there are $2^r/d_r\asymp \sqrt{2\log\left(\frac{en}{2^r}\right)}$ lengths each of which is at most $2^r$.

Therefore,
\begin{align*}
    c_E(\tau_k;B_{\max,n})\lesssim\left(\sum_{L}L\right)\sum_{\ell=1}^{\log_2(n/4)}  \frac{2^\ell}{d_\ell^2}\lesssim B_{\max,n}\sqrt{\log(n)}\sum_{\ell=1}^{\log_2(n/4)}  \frac{\log(en/2^\ell)}{2^\ell}=O\left(B_{\max,n}\log^{3/2}(n)\right)
\end{align*}
Thus,
\[c(\tau_k;B_{\max,n})=O\left(B_{\max,n}\log^{3/2}(n)\right)\]
and summing over all $K_n^{(0)}$ change-points implies
\[m_{1,n}\lesssim K_{n}^{(0)}B_{\max,n}\log^{3/2}(n).\]
A sufficient condition for a test to be asymptotically powerless is then 
\begin{align*}
    \frac{m_{1,n}}{x}\lesssim \frac{K_{n}^{(0)}B_{\max,n}\log^{3/2}(n)}{x}=o(1)
\end{align*}
i.e.\ $x=K_{n}^{(0)}B_{\max,n}\log^2(n)$ will suffice subject to the condition that $x\leq m_n=O\left(n\log^{5/2}(n)\right)$. 

We now repeat the above calculation for all continuous triplets of scale $O(B_{\max,n})$, i.e. calculate $c(\tau_k;B_{\max,n})$ for $I=\{(t_1,t_2,t_3\ \mid t_1<t_2<t_3\}$. As before, fix any $\tau_k$, then for $\tau_k\in (t_1,t_2]$ necessarily $\tau_k\in (t_1,t_2]$ or $\tau_k\in (t_2,t_3]$. At the scale $B_{\max,n}$, $\tau_k\in (t_1,t_2]$ implies that $\tau_k-t_1\leq B_{\max,n}$ and $t_2-\tau_k\leq B_{\max,n}$ and by a simple grid counting there are $O(B_{\max,n}^2)$ ways to cover $\tau_k$. We then adjoin an interval of length at most $B_{\max,n}$ to find that there are $O(B_{\max,n}^3)$ ways to cover $\tau_k$ in the left interval. By symmetry, this same bound holds if $\tau_k \in (t_2,t_3]$ so that $c(\tau_k;B_{\max,n})=O(B_{\max,n}^3)$. Hence,
\[m_{1,n}\lesssim \sum_{k=1}^{K_n^{(0)}}c(\tau_k;B_{\max,n})=O\left(K_n^{(0)}B_{\max,n}^3\right).\]
Moreover, for any family of triplets $\mathcal I$ (where each triplet is unique) cannot have its count number $c_{\mathcal I}(\tau_k;B_{\max,n})$ exceed $O(B_{\max,n}^3)$ because each triplet $I \in \mathcal I$ must also be in the family of continuous triplets. Therefore, for any family of triplets $\mathcal I$
\[\frac{m_{1,n}}{x}\lesssim \frac{K_{n^{(0)}}B_{\max,n}^3}{x}=o(1)\]
if $x=K_n\gg K_n^{(0)}B_{\max,n}^3\log^2(n)$, subject to the condition that $|\mathcal I|\geq K_n^{(0)}B_{\max,n}^3\log^2(n)$. Thus, the proof concludes. 

\section{Necessary and Sufficient Conditions for Detection of High Energy Segments}\label{subsec:EnergySegments}

As discussed by \cite{Verzelen2023} the energy of a single change-point is sensitive to small perturbations, so that they introduced the notion of a high-energy segment. In this subsection, we discuss, much in the same way as Sections \ref{sec:PowerNecandSuff} and \ref{sec:MinMaxImpossiblity} for individual change-points, the necessary and sufficient conditions for detecting high-energy segments. 

To do so, we first make the key assumption that the high-energy segments are stable under end-point perturbation. A segment, $I=[a,b]$, has energy  $\mathcal E_I$ defined as
\begin{align}\label{eq:energydefSeg}
    \mathcal E_I=\max_{i\in[a,b]}\left|\frac{\sum_{j=i}^b\mu_j}{b-i+1}-\frac{\sum_{j=a-1}^{i-1}\mu_j}{i-a+1} \right|\sqrt{\frac{(b+1-i)(i+1-a)}{b-a+2}}=:\max_{i\in[a,b]}E_I(a,i,b)
\end{align}
which is the best approximation of $\mu \cap I$ by a change-point arrangement containing only one jump.  

Our notion of stability is heuristically that if the maximum energy of a segment $[a,b]$ is achieved for $i=i^*$, i.e., $\mathcal E_I = E_I(a,i^*,b)$
then $E_I(s,m,e)\approx \mathcal E_I$ if $s \approx a$, $m \approx i^*$ and $e \approx b$. This is formalized by an interval being stable in the sense of the following definition.

\begin{definition}\label{def:DeltaStable}
   Consider the high energy segment $I=(a-1,b+1]$ with maximum energy $ \mathcal E_I = E_I(a,k^*,b)$ for some $k^*\in (a-1,b+1]$ with left and right halves $R_k^c=(a-1,k^*]$, and $R_k^c=(k^*,b+1]$ respectively, such that $|L_k^c|\wedge|R_k^c|=o(n)$.
    
    We say that $I_k$ is locally CUSUM-stable if there exists a modulus
$\omega_{k,n}:[0,1)\to[0,\infty)$ with $\omega_{k,n}(\delta)\to 0$ as
$\delta\downarrow0$ such that for any intervals $L,R$ satisfying
\[\frac{|L\Delta L_k^c|}{|L_k^c|}\vee
\frac{|R\Delta R_k^c|}{|R_k^c|}\leq \delta,\]
we have
\[\Gamma(L,R)\geq \Gamma(L_k^c,R_k^c)-\omega_{k,n}(\delta)\]
where for adjacent and disjoint intervals $L,R\subset \{1,\dots, n\}$
\[\Gamma(L,R):=\left|\bar\mu_R-\bar\mu_L\right|
\sqrt{\frac{|L||R|}{|L|+|R|}}.\]
\end{definition}

If we think of Bonferroni-triplets as an $\epsilon$-net for the set of all triplets, and the energy being some functional on the space of all triplets, then Definition \ref{def:DeltaStable} is a way to control the approximation error of the $\epsilon$-net. 

The notion of a high energy segment is useful since it allows for an accumulation of undetectable change-points to be detectable if, for example, the jumps are all of positive height. To demonstrate that the generalisation of change-point detection to segment detection is necessary, we refer the interested reader to the Appendix \ref{subsec:EnergySegments} where we formally construct a detectable segment made up of individually undetectable change-points. Heuristically, we construct a staircase of tightly spaced change-points centred around $\lfloor n/2 \rfloor$ with a few change-points outside of this neighbourhood ``very far away''. 
\begin{proposition}\label{prop:HeSegSmallCP}
    There exist arrangements of individually undetectable change-points, e.g.\ those with energy less than the necessary amount in equation \eqref{eq:smallenergy}, that constitute a detectable segment, e.g.\ one satisfying the sufficient detection conditions of Theorem \ref{thm:HESegDetect}.
\end{proposition}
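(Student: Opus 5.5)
I will prove Proposition \ref{prop:HeSegSmallCP} by an explicit ``staircase'' construction. The aim is that each jump is tiny in the sense of \eqref{eq:smallenergy}, while the segment energy \eqref{eq:energydefSeg} of the whole staircase grows like a power of $\log n$ times the per-step energy. Theorem \ref{thm:HESegDetect} (the segment analogue of Theorem \ref{thm:power}) then applies to the staircase as a single high-energy segment.

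\textbf{Construction.} Fix a spacing $\Delta_n$ and a number of steps $J_n\to\infty$, both sub-polynomial (e.g.\ $\Delta_n=\log^2 n$, $J_n=\log n$). Place changepoints at $\lfloor n/2\rfloor + j\Delta_n$ for $j=-J_n,\dots,J_n$, with increasing means
\[\mu = h\,j \quad \text{on the $j$-th block}, \qquad h=h_n .\]
Outside a window of length $W_n=2\Delta_nJ_n$ the signal is flat. Add a bounded number of changepoints ``very far away'' so that the maximum spacing assumption \eqref{eq:MaxSpacingAssumption} holds with $(1-r_n)\log n\to\infty$. Then each interior changepoint has energy
\[\mathcal E_k=h\sqrt{\Delta_n/2}.\]
Choose $h$ with $h\sqrt{\Delta_n/2}\le \tfrac12\sqrt{2\log M_n}$. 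In the first regime of \eqref{eq:smallenergy} with $K_n=n^{o(1)}$, the right side is $(1-o(1))\sqrt{2\log n}$, so every jump is individually undetectable. The far-away changepoints are given energy small enough as well, or are regarded as part of the background $\mu_n^{(0)}$.

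\textbf{Segment energy.} Take the segment $I$ to be the whole staircase plus a flat margin of length $\Delta_nJ_n$ on each side, and split it at the midpoint $k^*=\lfloor n/2\rfloor$. The two halves have length $\asymp\Delta_nJ_n$, and their means differ by $\asymp hJ_n$. Hence
\[\mathcal E_I\gtrsim hJ_n\sqrt{\Delta_nJ_n}=\sqrt{2}\,J_n^{3/2}\,\mathcal E_k .\]
With $J_n\to\infty$ chosen so that $J_n^{3/2}\cdot\tfrac12\sqrt{2\log n}$ exceeds the threshold $\sqrt{2\log(n/K_n)}+\sqrt{2\log K_n}+c_n\le 2\sqrt{2\log n}$, the segment is high-energy.

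\textbf{Main obstacle: local CUSUM-stability.} I also need $I$ to satisfy Definition \ref{def:DeltaStable}. The map $(L,R)\mapsto\Gamma(L,R)$ is explicit here: the means on the halves are averages of a linear ramp. Perturbing the endpoints by a relative amount $\delta$ changes $\bar\mu_R-\bar\mu_L$ by $O(\delta hJ_n)$ and changes $\sqrt{|L||R|/(|L|+|R|)}$ by a factor $1+O(\delta)$. This gives a modulus
\[\omega_{k,n}(\delta)=O(\delta\,\mathcal E_I),\]
and with the Bonferroni approximation $\delta\asymp(\log(en/W_n))^{-1/2}$ the loss is $o(\mathcal E_I)$, which is absorbed by the slack $c_n$. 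Because the modulus scales with $\mathcal E_I$, I will simply ensure $\mathcal E_I$ exceeds the threshold by a factor $1+\epsilon$ before invoking Theorem \ref{thm:HESegDetect}. I must also check $|L_k^c|\wedge|R_k^c|=o(n)$, which holds since $W_n$ is sub-polynomial.
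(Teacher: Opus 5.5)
This is essentially the paper's own argument: a monotone staircase of $\asymp\log(n)$ closely spaced jumps near $n/2$ (the paper uses spacing $10$ and $2\lceil\log(n)\rceil-1$ steps), each jump's energy a fixed fraction of the bound in \eqref{eq:smallenergy}, a segment energy of order $(\text{number of steps})^{3/2}$ times the per-jump energy at the midpoint split, and relative CUSUM-stability $\Gamma(L,R)\geq(1-O(\epsilon))\Gamma(L_n^c,R_n^c)$ from the same mean and length perturbation bounds you sketch. One small correction: a bounded number of far-away changepoints cannot give \eqref{eq:MaxSpacingAssumption} with $(1-r_n)\log(n)\to\infty$, because some gap would be of order $n$ (the paper instead adds $\lfloor\log(n)\rfloor$ evenly spaced ones on each side); this is harmless, though, since Theorem \ref{thm:HESegDetect} only requires the segment length to be at most $n^{r_n}$, which your sub-polynomial window already satisfies.
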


The following detection theorem for high energy segments is exactly analogous to Theorem \ref{thm:power}. Not only are the theorem statements similar, but also the proof strategy is the same; we construct large independent certificate sets of segments with sufficiently high energy so that all are simultaneously rejected with high probability. In fact, we consider the slightly more challenging problem where high energy segments are not nested (hence minimal) and have maximum overlap $D_n$, e.g.\ at most $D_n$ high energy intervals have a non-empty intersection. The minimality assumption is made without loss of generality, since if the minimal high-energy segment, $J$ is detected, then all larger higher energy segments $I\supset J$ are also detected.

\begin{theorem}\label{thm:HESegDetect}
    Let $\mathcal I_n$ be the candidate set of locally CUSUM-stable high energy segments $I_k=(a_k-1,b_k+1]$, with the optimal midpoint being $i_k^*$.
     Suppose that there exists a set of $K_n^S$ intervals, $\mathcal S_n\subset \mathcal I_n$ such that for each $I\in \mathcal S_n$ there does not exist a $J\in \mathcal S_n$ such that $J\subset I$. Let $D_n=\max_{x\in [2,n]}\#\{k_i \mid x\in I_{k_i}\}$ for $I_{k_i}$ being the $k_i^{th}$ member of $\mathcal I_n$. Moreover, assume that there exists some $r_n \in (0,1)$ such that
    \begin{equation}
        \max_{1\leq k_i \leq K_n^S}\left(b_{k_i}-a_{k_i}+2\right)\leq n^{r_n}
    \end{equation}
    and $(1-r_n)\log(n)\to \infty$, where $c_n=o\left(\sqrt{\log(n)}\right)$, and satisfies equation \eqref{eq:HESegcnLOwer}. 
    Finally, assume that the moduli of regularity uniformly satisfy
    \begin{equation}\label{eq:HESegcnLOwer}
    \max_{1\leq k_i \leq K_n^S}\omega_{k_i,n}\left(\frac{C}{\sqrt{(1-r_n)\log(n)}}\right)=o(c_n).\end{equation}
    Then if
    \begin{align}\label{eq:suffsegenergy}
        \min_{k_i=1}^{K^S_n}\mathcal E_{I_{k_i}}\geq \sqrt{2\log\left(\frac{nD_n}{K_n^S}\right)}+\sqrt{2\log\left(K_n^S\right)}+c_n
    \end{align}
    with $c_n=o\left(\sqrt{\log(n)}\right)$, the LBD-FDR detects all $K_n^S$ high-energy segments with probability tending to $1$ uniformly over arrangements satisfying equation \eqref{eq:suffsegenergy}; denoted $\mathfrak T_n$. Formally,
    \begin{align}\label{eq:DetectionofHESegment}
        \lim_{n \to \infty}\inf_{\mu_n \in \mathfrak T_n}\Prob_{\mu_n}\left(\bigcap_{k_i=1}^{K_n^S}\left\{\sum_{J \in \mathcal I_c(\alpha)}\mathbbm{1}\left\{J\subset I_{k_i}\right\}\geq 1 \right\}\right)= 1
    \end{align}
    recalling that $\mathcal I_c(\alpha)$ is the set of significant intervals output by LBD-FDR.
\end{theorem}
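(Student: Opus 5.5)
The plan is to mirror the proof of Theorem \ref{thm:power}, replacing single changepoints by the minimal high-energy segments and replacing the even/odd split by a colouring argument adapted to overlap $D_n$.

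\textbf{Step 1: approximating triplets.} For each $I_{k_i}\in\mathcal S_n$ with optimal split $(L_{k_i}^c,R_{k_i}^c)$, I would pick a Bonferroni triplet $t_i=(s_i,m_i,e_i)$ whose two halves approximate $L_{k_i}^c$ and $R_{k_i}^c$ with relative symmetric difference at most $\delta_n\asymp C/\sqrt{(1-r_n)\log n}$, and which satisfies $(s_i,e_i]\subset I_{k_i}$. The geometric part of the proof of Theorem 3.1 of \cite{jang2024fastoptimalchangepointdetection}, as used in the proof of Theorem \ref{thm:power}, supplies such a triplet, since the segment length is at most $n^{r_n}$ and the grid spacing gives relative error $O(\{\log(en/L)\}^{-1/2})$. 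Choosing the triplet inside the segment means each half is shrunk slightly, which is why $\delta$ is measured relative to $|L_k^c|$ and $|R_k^c|$. Local CUSUM-stability together with \eqref{eq:HESegcnLOwer} then gives that the mean of $Z_{t_i}$ is at least $\mathcal E_{I_{k_i}}-\omega_{k_i,n}(\delta_n)-o(c_n)\ge \mathcal E_{I_{k_i}}-c_n/4$. I also need the multiplicative factor $[1-O(\delta_n)]$ from the CUSUM normalisation, and it costs only $o(c_n)$ by the same computation as in the proof of Theorem \ref{thm:power}.

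\textbf{Step 2: certificate sets.} The triplets $t_i$ lie inside their segments. The segments are pairwise non-nested and have interval-overlap graph with clique number at most $D_n$; an interval graph is perfect, so its chromatic number is at most $D_n$. I would therefore partition $\mathcal S_n$ into at most $D_n$ colour classes $C^{(1)},\dots,C^{(D_n)}$, each consisting of pairwise disjoint segments and hence an independent set in $\D$. To get uniform certificate sizes, I would use the greedy ordering by left endpoint: because the segments are non-nested, sorting by left endpoint also sorts the right endpoints. Taking $C^{(j)}=\{i: i\equiv j \bmod D_n\}$ then gives disjoint classes, each of size at least $\lfloor K_n^S/D_n\rfloor$. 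Verifying this disjointness from the overlap bound is the step I expect to need the most care, since it replaces the trivial "buffer changepoint" argument used in the proof of Theorem \ref{thm:power}. By the certificate characterisation of IndBH, every $t_i$ is rejected once $\max_i p_{t_i}\le \alpha\lfloor K_n^S/D_n\rfloor/m_n$. The rejection of $t_i$ places a significant interval $(s_i,e_i]\subset I_{k_i}$, which is exactly the event in \eqref{eq:DetectionofHESegment}.

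\textbf{Step 3: probability bound.} Using $m_n=O(n\log^{5/2}n)$ and the bound $z_u\le\sqrt{2\log(1/u)}$, I would show
\[
z_{\alpha\lfloor K_n^S/D_n\rfloor/(2m_n)}\le\sqrt{2\log(nD_n/K_n^S)}+O(\sqrt{\log\log n}),
\]
and the error term is $o(c_n)$. Combining this with Step 1 and \eqref{eq:suffsegenergy} gives $T_{t_i}\ge z_{\alpha\lfloor K_n^S/D_n\rfloor/(2m_n)}+\sqrt{2\log K_n^S}+c_n/4-|N(0,1)|$. A union bound over the $K_n^S$ triplets then controls the failure probability by
\[
K_n^S\, e^{-b_n^2/2}/b_n\to0,\qquad b_n=\sqrt{2\log K_n^S}+c_n/4.
\]
Independence within colour classes is not even needed for this step. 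None of the bounds depend on $\mu_n$ beyond \eqref{eq:suffsegenergy}, the length bound and the moduli, so the limit holds uniformly over $\mathfrak T_n$.
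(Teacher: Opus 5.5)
Your proposal is correct and follows essentially the same route as the paper. Both arguments sort the non-nested segments by left endpoint and use the residue classes mod $D_n$ as certificate sets of size about $K_n^S/D_n$; the paper verifies disjointness exactly as you anticipate, since $I_j\cap I_{j+D_n}\neq\emptyset$ would put $a_{j+D_n}$ in $D_n+1$ segments. Both then approximate each optimal split by a Bonferroni triplet inside the segment, apply CUSUM-stability, and finish with the Gaussian tail bound from Theorem \ref{thm:power}.

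The only deviations are harmless. You approximate the full optimal halves rather than the paper's truncated halves $(a_k^c,i_k^*,z_k^c)$, which actually fits Definition \ref{def:DeltaStable} more literally. The extra multiplicative $[1-O(\delta_n)]$ factor you mention is unnecessary: the absolute mean of $Z_{t_i}$ is exactly $\Gamma((s_i,m_i],(m_i,e_i])$, which already contains the normalisation and is controlled additively by the modulus.
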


Detection of high energy segments is a generalisation to the problem of detecting high energy change-points; this can be observed by the fact that $(\tau_{k-1},\tau_{k+1}]$ is a high energy segment if, and only if, $\tau_k$ is a high energy change-point. Therefore, the impossibility results of Subsection \ref{sec:MinMaxImpossiblity} follow through immediately with the class of all change-points arrangements being a hard subclass for the arrangements of all segments. Hence, we can obtain optimality for segment detection in the same sparse case as for change-point localisation. Specifically, this is the regime in which equation \eqref{eq:suffsegenergy} is asymptotic to $\sqrt{2\log(n)}$, and selecting $D_n,K_n^S,\delta_n$ to be subpolynomial in $n$ achieves this threshold. Hence, the LBD-FDR sharply obtains the optimal constant, in sparse arrangements, for the detection of high-energy segments. 

We now present the proofs of Proposition \ref{prop:HeSegSmallCP} and Theorem \ref{thm:HESegDetect}.
 
\subsection{Proof of Proposition \ref{prop:HeSegSmallCP}}

\begin{proof}

     First take some fixed spacing, $d=10$ will do, and let $q_n=\lceil\log(n)\rceil$. Define $\ell_n=\left\lfloor\frac{n}{2}\right\rfloor-10q_n$, and pick $K_n^{\rm st}=2q_n-1$. Define the single monotone staircase by 
     \begin{equation}\label{eq:stairconstruction}
         \mu_i=\begin{cases}
             0, \quad i\leq \ell_n,\\
             jh_n, \quad \ell_n+10j<i \leq \ell_n+10(j+1), \quad j=0,\dots 2q_n-1\\
             (2q_n-1)h_n, \quad i>\ell_n+20q_n
         \end{cases}
     \end{equation}
where $h_n$ is the jump-height to be selected so that each change-point is individually below the necessary energy in equation \eqref{eq:smallenergy}. Since we have selected $K_n\lesssim n^{0.25}$, each change-point in the construction of the stairs must satisfy
\begin{align*}
    \mathcal E_k \leq \left(1-\frac{1}{\log\log(n)}\right)\left[\sqrt{2\log\left(\frac{n}{10}\right)}-\sqrt{2\log(K_n)}\right]=u_n^{\rm cp}
\end{align*}

In particular, set $h_n=\frac{u_n^{\rm cp}}{4\times 10^{1/2}}$. Then, the interior change-points satisfy
\[\mathcal E_j=5^{1/2}h_n, \quad j=2,\dots, 2(q_n-1)\]
and the first and last satisfy $\mathcal E_1=\mathcal E_{2q_n-1}\leq 10^{1/2}h_n.$ Therefore, $\max_{1\leq k \leq 2q_n-1}\mathcal E_k\leq \frac{u_n^{\rm cp}}{4}$ so that each is below the necessary energy threshold. Consider the whole clustered segment $I_n=(\ell_n,\ell_n+20q_n]$ and take the split at the midpoint $i_n^*=\ell+10q_n$ so that 
\[L_n^c=(\ell_n,\ell_n+10q_n],\quad R_n^c=(\ell_n+10q_n,\ell_n+20q_n]\]
hence, $\bar \mu_{L_n^c}=\frac{q_n-1}{2}h_n$ and $\bar \mu_{R_n^c}\frac{3q_n-1}{2}h_n$. This choice of $i_n^*$ is in fact the maximizer, since for any $1\leq j \leq 2q_n-1$
\[\bar \mu_L=\frac{j-1}{2}h_n,\quad \bar\mu_R=\frac{j+2q_n-1}{2}h_n\]
so that,
\[E(j)=q_nh_n\sqrt{\frac{100j(2q_n-j)}{20q_n}}\]
which is maximized at $j=i_n^*=q_n.$ The energy of this segment thus satisfies
\[\mathcal E_{I_n} =E_{I_n}(\ell_n+1,i_n^*,\ell_n+20q_n)=5^{1/2}q_n^{3/2}h_n=\frac{u_n^{\rm cp}}{4\times 2^{1/2}}q_n^{3/2}.\]
Since $q_n=\lceil\log(n)\rfloor\to \infty$, and $u_n^{\rm cp}=O(\sqrt{\log(n)})$, then 
\[\mathcal{E}_{I_n}=\Omega\left(\log^2(n)\right)\gg \sqrt{2\log\left(\frac{nD_n}{K_n^S}\right)}+\sqrt{2\log(K_n^S)}+c_n,\]
so that this segment has an energy above the threshold for all $n$ sufficiently large.

We now demonstrate CUSUM stability. In $I_n$ the signal takes values in $[0,H_n]$ where $H_n=(2q_n-1)h_n$, and consider intervals $L,R$ such that 
\[\frac{|L\Delta L_n^c|}{|L_n^c|}\vee\frac{|R\Delta R_n^c|}{|R_n^c|}\leq \epsilon.\]
In particular, this implies 
\begin{align*}
    |L|\in [(1-\epsilon)|L_n^c|,(1+\epsilon)|L_n^c|],\qquad |R|\in [(1-\epsilon)|R_n^c|,(1+\epsilon)|R_n^c|]
\end{align*}
Moreover, basic algebra yields
\begin{align*}
    |\bar \mu_L-\bar \mu_{L^*_n}|=\frac{|L^*_n|}{|L|}\left|\frac{1}{L^*_n}\sum_{i\in L\Delta L^*_n}\mu_i\right|\leq \frac{H_n\epsilon}{1-\epsilon} \implies |\bar \mu_R-\bar \mu_{R^*_n}|\leq \frac{H_n\epsilon}{1-\epsilon}.
\end{align*}
In particular, this implies 
\begin{align}
    \frac{1}{|\bar \mu_{L_n^c}-\bar \mu_{R_n^c}|}\frac{2H_n\epsilon}{1-\epsilon}&= \frac{2(2q_n-1)h_n}{q_nh_n}\frac{\epsilon}{1-\epsilon}\leq 4\frac{\epsilon}{1-\epsilon}=O(\epsilon)
\end{align}
\begin{align}\label{eq:meanstability}
    |\bar \mu_L-\bar \mu_R|\geq |\bar \mu_{L_n^c}-\bar \mu_{R_n^c}|-\frac{2H_n\epsilon}{1-\epsilon}=|\bar \mu_{L_n^c}-\bar \mu_{R_n^c}|(1-O(\epsilon)).
\end{align}
Finally,
\begin{align}\label{eq:geomstability}
    \sqrt{\frac{|L||R|}{|L|+|R|}}\geq  \sqrt{\frac{(1-\epsilon)^2|L_n^c||R_n^c|}{(1+\epsilon)(|L_n^c|+|R_n^c|)}}=(1-O(\epsilon))\sqrt{\frac{|L_n^c||R_n^c|}{|L_n^c|+|R_n^c|}}.
\end{align}
Combining equations \eqref{eq:meanstability} and \eqref{eq:geomstability} implies that 
\begin{align}
    \Gamma(L,R)\geq (1-O(\epsilon))\Gamma(L_n^c,R_n^c).
\end{align}

Finally, we add in a subpolynomial number of change-points a distance $n^{1-o(1)}$ away from the staircase construction so that the segment spacing assumption is satisfied. In particular, we add $\lfloor\log(n)\rfloor$ undetectable change-points to the left and right of the staircase at an even spacing on the remaining intervals $[1,\ell_n-10q_n]$, $(\ell_n+10q_n,n]$. Thus, the proof concludes.
\end{proof}

\subsection{Proof of Theorem \ref{thm:HESegDetect}}

\begin{proof}
As we aim to emulate the proof of Theorem \ref{thm:power} we first construct large independent certificate sets by grouping minimal high-energy segments into disjoint sets using the fact that they are minimal and have depth $D_n$. First, we order the segments in $\mathcal I_n^{\min}$ so that $a_1<a_2<\dots <a_{K_n^S}$, thus by minimality $b_1<b_2<\dots<b_{K_n^S}$. Next, we claim $I_j\cap I_{j+D_n}=\emptyset$. For the sake of contradiction, suppose not. Thus $a_{j+D_n}\leq b_j$, and consider the point $x=a_{j+D_n}\geq a_k$ for every $k\in \{j,j+1,\dots j+D_n\}$. Moreover, $x\leq b_j\leq b_k$ for all $k$, so that $x\in \bigcap_{k=j}^{D_n+j}I_k$, a contradiction.

    The indices of independent sets are then for $k\in [D_n]$
    \begin{align*}
        \mathcal C^{(k)}=\{j\in [K_n^S]\mid j \equiv k(\mod D_n)\} \implies \left|\mathcal C^{(r)}\right|\in \left\{\left\lfloor \frac{K_n^S}{D_n}\right\rfloor, \left\lceil \frac{K_n^S}{D_n}\right\rceil\right\}
    \end{align*}
    so that the $p$-values, $\{p_{t_k}\mid k \in \mathcal C^{(k)}\}$ computed from data inside triplets entirely contained within each $I_k$ are independent.

  Therefore, the sufficient condition for detecting all high energy segments, which is analogous to equation \eqref{eq:pt}, is that
    \begin{equation}\label{eq:pthrsSegment}
        \max_{k\in [D_n]}\max_{i\in C^{(k)}}p_i\leq \frac{\alpha q_n}{m_n}.
    \end{equation}
    for $q_n=\left\lfloor \frac{K_n^S}{D_n}\right\rfloor$.
    
We now demonstrate that there are Bonferroni triplets such that the CUSUM of each high-energy segment is approximated by the Bonferroni triplet. To do so, let $i^*_k$ be the optimal centre of the high-energy segment that satisfies equation \eqref{eq:suffsegenergy} (analogous to a high-energy change-point $\tau_{k_i}$). Then set $a_k^c=\max(i^*_k-\delta_k,a_k)$ and $z_k^c=\min(i^*_k+\delta_k,b_k)$ where $\delta_k$ is the smallest non-negative integer such that 
    \begin{equation}\label{eq:deltaAppEnergy}
        E_{I_k}(a_k^c,i^*_k,z_k^c)\geq \sqrt{2\log\left(\frac{nD_n}{K_n^S}\right)}+\sqrt{2\log(K_n^S)}+c_n.
    \end{equation}
    This $\delta_k$ always exists because by selecting $a_k^c=a^*$ and $z_k^c=b^*$ equation \eqref{eq:deltaAppEnergy} holds by assuming equation \eqref{eq:suffsegenergy}. Without loss of generality, we may assume $|L_k^c|=i^*_k-a_k^c\leq z_k^c-i^*=|R_k^c|$. Moreover, $\delta_k \in [1, b_k-a_k+2]\subset [1,n^{r_n}].$ Finally, the conclusion of equation \eqref{eq:ApproximatingEnergywithZ} holds analogously in this case, so that for all $n$ sufficiently large (and some other $c_n$ we have relabelled)
    \begin{equation*}
        E_{I_k}(a_k^c,i^*_k,z_k^c)\geq z_{\frac{\alpha q_n}{2m_n}}+\sqrt{2\log(K_n^S)}+c_n.
    \end{equation*}

    Lemma B.1 of \cite{jang2024fastoptimalchangepointdetection} implies that there exists a Bonferroni interval $(s_k,m_k]\subset (a_k^c,i^*]$ such that 
    \[\frac{(s_k-a_k^c)+(i^*_k-m_k)}{i^*_k-a_k^c}\leq \frac{8}{\sqrt{2(1-r_n)\log(n)}}=:\epsilon_n,\]
    where $\epsilon_n \to 0$ by assumption. Thus, equations (27)-(29) of \cite{jang2024fastoptimalchangepointdetection} also hold. Specifically that 
    \begin{eqnarray}
        \frac{i^*_k-m_k}{z_k^c-m_k}\leq \epsilon_n \qquad \frac{z_k^c-e_k}{z_k^c-m_k}\le \frac{\epsilon_n}{2}\qquad e_k-i_k^*\geq (1-C\epsilon_n)(z_k^c-i_k^*)
    \end{eqnarray}
    for an absolute $C>0$. Moreover, by algebraic manipulation
    \begin{align*}
        \frac{i_k^*-m_k}{|R_k^c|+i_k^*-m_k}&=\frac{i_k^*-m_k}{z_k^c-m_k}\leq \epsilon_n\implies \frac{i_k^*-m_k}{|R_k^c|}\leq \frac{\epsilon_n}{1-\epsilon_n}\\
        \iff& i_k^*-m_k+|R_k^c|\leq \frac{|R_k^*|}{1-\epsilon_n}.
    \end{align*}
    Therefore,
    \begin{align*}
        \frac{|(m_k,e_k]\Delta (i_k^*,z_k^c]|}{|R_k^c|}\leq \frac{3\epsilon_n}{1-\epsilon_n}\leq 4\epsilon_n
    \end{align*}
    where the final inequality only holds for all $n$ sufficiently large.
   
Next define $Q( X)=\left(\bar X_{(s,m]}-\bar X_{(m,e]}\right)\sqrt{\frac{(m-s)(e-m)}{e-s}}$, and in distribution,
\begin{align*}
    Q(X)\stackrel{d}{=}N\left(Q_t(\mu),1\right)
\end{align*}
where $Q_t(\mu)=\left(\bar \mu_{(s,m]}-\bar \mu_{(m,e]}\right)\sqrt{\frac{(e-m)(m-s)}{e-s}}$. Therefore, by the local CUSUM-stability property
\begin{align}\label{eq:QxDist}
    |Q_{ t_k}(\mu)|=\Gamma(L_{t_k},R_{t_k})\geq E_{I_k}(a_k^c,i^*_k,z_k^c)-\omega_{k,n}(\delta_n).
\end{align}
Applying equation \eqref{eq:QxDist} to $T_t( X)$ in equation \eqref{eq:CUSUMGaussian} demonstrates that 
\begin{align}\nonumber
    T_{t_k}( X)&\geq E_{I_k}(a_k^c,i^*_k,z_k^c)-\omega_{k,n}(\delta_n)-|Z_{t_k}|\\\label{eq:z_statSeg}
    &\geq z_{\frac{\alpha q_n}{2m_n}}+\sqrt{2\log(K_n^S)}+\frac{c_n}{4}-|Z_{t_k}|
\end{align}

The remainder of this proof follows that of Theorem \ref{thm:power} with equation \eqref{eq:z_statTtineq} replaced by equation \eqref{eq:z_statSeg}.
\end{proof}

\section{Computational Details}\label{subsec:compdetails}

First, we will outline the aspects of the general IndBH algorithm which we specialize to the graphical interval overlap case. Full details of the general IndBH algorithm can be found in Appendix B of \cite{nguyen2025controllingfalsediscoveryrate}. Throughout this section, for ease of notation, define $R=\big|\mathcal R^{\text{BH}}_\alpha\big|$.

In the broadest strokes, the algorithm procedes as follows:
\begin{enumerate}
    \item Reduce the graph $\D$ to the subgraph induced by the hypotheses rejected by the Benjamini-Hochberg procedure;
    \item Partition the subgraph into connected components;
    \item Compute the entire profile $t\mapsto \text{IndNum}(\D_k[U_t])$ for each component by interval scheduling;
    \item Use global or componentwise checks to classify most vertices;
    \item Use left or right restricted dynamic programs for those hypotheses that are undecided.
\end{enumerate}

We now fill in the details. IndBH rejects a hypothesis, $i \in \mathcal R^{\text{IndBH}_\D}_\alpha \iff p_i\leq \alpha \beta_i^*/m_n$ for $\beta_i^*=\max\{t \ : |I_i(t)|\geq t\}$, where $|I_i(t)|$ is defined as
\[|I_i(t)|=1+\text{IndNum}(\D_{\kappa(i)}[U_{-i,t}])+\sum_{k\neq \kappa(i)}\text{IndNum}(\D_k[U_t]),\]
with  $\D_k$ being the $k^{th}$ component of $\D$, $U_{-i,t}=\{v\not \in N_i \ \mid \ p_v\leq \alpha t/m_n \text{ and } v\neq i\}$ and $U_t=\{v\in \D \ \mid \ p_v\leq \alpha t/m_n\}$, where $N_i$ is the neighbourhood of node $i$. 

Base IndBH precomputes the table $ V$, which contains the shared quantities $V_{k,t}=N_k(t)=\text{IndNum}(\D_k[U(t)])$ for $k=1,\dots, n_c$, which is the number of components and $t=1,2,\dots, R$. We compute $\text{IndNum}(\D_k[U(t)])$ directly through interval scheduling which is feasible, since $\text{IndNum}(\D_k[U(t)])$ is simply the greatest number of disjoint triplets in block $k$, such that $p_j\leq \alpha t/m_n$.

By Proposition \ref{prop:NcalphaDkt}, for the $k^{th}$ fixed connected component $N_k(t)=\text{IndNum}(\D_k[U_t])$. Computing $\text{IndNum}(\D_k[U_t])$ is explained in Algorithm \ref{alg:interval-dp}, and broadly it computes the entire trajectory $t\mapsto N_k[t]$ for all $t=1,\dots, R$ in one pass, and is incremental interval scheduling across all Benjamini-Hochberg thresholds. In more detail

\begin{enumerate}
    \item[1.] A vertex $v$ becomes eligible at time 
    \[\text{act}[v]=\left\lceil \frac{m_np_v}{\alpha}\right\rceil,\]
    and once it is activated it stays activated.
    \item[2.] Sweep the activation times $t=1,2,\dots, |\mathcal R^{\text{BH}}_\alpha|$, and at time $t$ insert the intervals $v$, such that $\text{act}[v]=t$.
    \item[3.] When inserting an interval $v$ its best chain length is 
    \[1+\max\{\text{best chain ending before $L_v$}\}\]
    where $L_v$ is the left end-point of the interval $v$.
    \item[4.] The Fenwick tree stores, for each compressed endpoint, the best chain size ending at or before that endpoint. Therefore, $\text{BITQueryPrefixMax}{\text{BIT},\ l\_id[v]}$ gives the optimum size of a disjoint eligible set that is entirely to the left of $v$. We add $1$ for choosing $v$ and then update the value in the tree.
    \item[5.] After processing all intervals, the running maximum is $\text{IndNum}(\D_k[U_t]).$
\end{enumerate}

For the other part of the decomposition, one needs to compute $1+\text{IndNum}(\D_{\kappa(v)}[U_{-v,t}])$, which is the size of the largest independent set, in the component containing $v$, which itself contains $v$. Due to the interval overlap nature of the graph $\D$, $N_v[t]$ admits the decomposition
\[N_v[t]=\mathbbm{1}\{\text{act}[v]\leq t\}+N_{\text{left},v}[t]+N_{\text{right},v}[t]\]
so that only two more interval dynamic programs in order to compute the quantities $\text{IndNum}(\D_k[U(t)])$ restricted to intervals which are only to the left or right of $v$. This is a special case of the expensive fallback explained in Section B.5 of \cite{nguyen2025controllingfalsediscoveryrate}, and is computed in Line 25 of Algorithm \ref{alg:indbh-interval-short}. The remaining inclusion-exclusion checks of the general IndBH algorithm remain unchanged. 

\begin{algorithm}[!ht]
\Proc{PrecomputeIntervals$(L[1..m], U[1..m])$}{
  $\mathrm{coords} \gets \mathrm{sort}\!\left(\mathrm{unique}\!\left(\{L[i]\}_{i=1}^m \cup \{U[i]\}_{i=1}^m\right)\right)$\;
  $K \gets |\mathrm{coords}|$\;
  \For{$i \gets 1$ \KwTo $m$}{
    $L_{\mathrm{id}}[i] \gets \mathrm{rank}(L[i]\ \mathrm{in}\ \mathrm{coords})$\tcp*[r]{$1..K$}
    $U_{\mathrm{id}}[i] \gets \mathrm{rank}(U[i]\ \mathrm{in}\ \mathrm{coords})$\tcp*[r]{$1..K$}
  }
  $\mathrm{ord\_lower} \gets$ indices $1..m$ sorted by $(L_{\mathrm{id}}[i], U_{\mathrm{id}}[i])$\;
  \Return{$(\mathrm{coords}, K, L_{\mathrm{id}}, U_{\mathrm{id}}, \mathrm{ord\_lower})$}\;
}
\caption{Precompute endpoint compression for an interval family}
\label{alg:precompute-intervals}
\end{algorithm}

\begin{algorithm}[t]
\Fn{IntervalDPOverTimeBIT$(\mathrm{ord\_r}, n_{\mathrm{use}}, l_{\mathrm{id}}, r_{\mathrm{id}}, \mathrm{act}, T, K)$}{
  $\mathrm{out}[1..T] \gets 0$\;
  \If{$T = 0$ or $n_{\mathrm{use}} = 0$}{
    \Return{$\mathrm{out}$}\;
  }
  $\mathrm{idx} \gets \mathrm{ord\_r}[1..n_{\mathrm{use}}]$\tcp*[r]{already sorted by increasing $r_{\mathrm{id}}$, tie by $l_{\mathrm{id}}$}
  remove from $\mathrm{idx}$ all $v$ with $\mathrm{act}[v] > T$\;
  \If{$\mathrm{idx}$ is empty}{
    \Return{$\mathrm{out}$}\;
  }

  $\mathrm{counts}[1..T] \gets 0$\;
  \ForEach{$v \in \mathrm{idx}$}{
    $\mathrm{counts}[\mathrm{act}[v]] \gets \mathrm{counts}[\mathrm{act}[v]] + 1$\;
  }

  $\mathrm{proc\_order} \gets \mathrm{StableCountingSort}(\mathrm{idx}, \mathrm{key}=\mathrm{act}, \mathrm{buckets}=1..T)$\;
  initialize Fenwick tree $\mathrm{BIT}$ of length $K$ storing prefix maxima (all zeros)\;
  $\mathrm{best} \gets 0$, $\mathrm{ptr} \gets 1$\;

  \For{$t \gets 1$ \KwTo $T$}{
    \For{$j \gets 1$ \KwTo $\mathrm{counts}[t]$}{
      $v \gets \mathrm{proc\_order}[\mathrm{ptr}]$, $\mathrm{ptr} \gets \mathrm{ptr} + 1$\;
      $\mathrm{val} \gets 1 + \mathrm{BITQueryPrefixMax}(\mathrm{BIT}, l_{\mathrm{id}}[v])$\;
      $\mathrm{BITUpdatePointMax}(\mathrm{BIT}, r_{\mathrm{id}}[v], \mathrm{val})$\;
      $\mathrm{best} \gets \max(\mathrm{best}, \mathrm{val})$\;
    }
    $\mathrm{out}[t] \gets \mathrm{best}$\;
  }
  \Return{$\mathrm{out}$}\;
}
\caption{Interval DP over activation time using a Fenwick tree (prefix max)}
\label{alg:interval-dp}
\end{algorithm}

\begin{algorithm}[t]
\Proc{IndBHInterval$(\alpha,\ \{p_i\}_{i=1}^m,\ \{(L_i,U_i]\}_{i=1}^m,\ \mathrm{precomp})$}{
  $S \gets \mathrm{BHRejectSet}(\alpha, p, m)$\tcp*[r]{$S \subseteq \{1,\dots,m\}$}
  \If{$|S| = 0$}{
    \Return{$\emptyset$}\;
  }

  $\mathrm{BuildBHLocalArrays}(S, p, (L,U], \mathrm{precomp})$\tcp*[r]{computes $\{p_i^{\mathrm{BH}}\}_{i=1}^{R}$, $(L^{\mathrm{BH}}, U^{\mathrm{BH}})$, endpoint ids}
  $\mathrm{act}[v] \gets \min\!\left(\max\!\left(\left\lceil \frac{m\,p^{\mathrm{BH}}[v]}{\alpha}\right\rceil, 1\right), R+1\right)$
    for all $v \in \mathcal{R}_{\alpha}^{\mathrm{BH}}$\;

  $(C_1,\dots,C_{n_c}), \mathrm{comp}(v) \gets \mathrm{ComponentsBySweep}(S, (L,U], \mathrm{precomp.ord\_lower})$\tcp*[r]{$n_c$ interval-graph components}
  $\mathrm{isClique}[k] \gets \left(\max_{v\in C_k} L^{\mathrm{BH}}[v] < \min_{v\in C_k} U^{\mathrm{BH}}[v]\right)$
    for all $k \in \{1,\dots,n_c\}$\;

  \If{all $\mathrm{isClique}[k] = \mathrm{true}$}{
    \Return{$\mathrm{BlockIndBH}(\alpha, p^{\mathrm{BH}}, \{C_k\}_{k=1}^{n_c}, m)$}\;
  }

  \For{$k \gets 1$ \KwTo $n_c$}{
    $(\mathrm{ord\_r}[k], \mathrm{r\_sorted}[k], \mathrm{ord\_r}^{\mathrm{rev}}[k], \mathrm{rrev\_sorted}[k]) \gets \mathrm{PrepOrders}(C_k)$\;
    $N_k[1,\dots,R] \gets \mathrm{IntervalDPOverTimeBIT}(\mathrm{ord\_r}[k], |C_k|, l_{\mathrm{id}}, r_{\mathrm{id}}, \mathrm{act}, R, K)$\;
  }

  $N_{\mathrm{tot}}[t] \gets \sum_{k=1}^{n_c} N_k[t]$;\quad
  $N_{\mathrm{plus}}[t] \gets \max_{k} N_k[t]$\quad
  for all $t = 1,\dots,R$\;

  $R_{\mathrm{UB}} \gets \max\{t : t \le N_{\mathrm{tot}}[t]\}$ (or $0$ if none)\;
  $\mathrm{checked} \gets \{v : p^{\mathrm{BH}}[v] > \alpha R_{\mathrm{UB}}/m\}$\;
  $\mathcal{R}_{\alpha}^{\mathrm{IndBH}} \gets \emptyset$\;

  $R_{\mathrm{LB}} \gets \max\{t : t \le N_{\mathrm{tot}}[t] - N_{\mathrm{plus}}[t] + 1\}$ (or $0$ if none)\;
  $\mathcal{R}_{\alpha}^{\mathrm{IndBH}} \gets \mathcal{R}_{\alpha}^{\mathrm{IndBH}}
    \cup \{v : p^{\mathrm{BH}}[v] \le \alpha R_{\mathrm{LB}}/m\}$\;
  $\mathrm{checked} \gets \mathrm{checked} \cup \mathcal{R}_{\alpha}^{\mathrm{IndBH}}$\;

  \For{$k \gets 1$ \KwTo $n_c$}{
    $R_{\mathrm{LB}}^{(k)} \gets \max\{t : t \le N_{\mathrm{tot}}[t] - N_k[t] + 1\}$ (or $0$ if none)\;
    $\mathcal{R}_{\alpha}^{\mathrm{IndBH}} \gets \mathcal{R}_{\alpha}^{\mathrm{IndBH}}
      \cup \{v \in C_k \setminus \mathrm{checked} : p^{\mathrm{BH}}[v] \le \alpha R_{\mathrm{LB}}^{(k)}/m\}$\;
    $\mathrm{checked} \gets \mathrm{checked} \cup (\mathcal{R}_{\alpha}^{\mathrm{IndBH}} \cap C_k)$\;
  }

  \ForEach{$v \in \{1,\dots,R\} \setminus \mathrm{checked}$}{
    $k \gets \mathrm{comp}(v)$\;
    $N_v[1,\dots,R] \gets \mathrm{RestrictedViaLRDP}(v, k, \mathrm{act}, \mathrm{ord\_r},
      \mathrm{ord\_r}^{\mathrm{rev}}, \mathrm{r\_sorted}, \mathrm{rrev\_sorted})$\;
    $R_{\mathrm{true}}(v) \gets \max\{t : t \le N_{\mathrm{tot}}[t] - N_k[t] + N_v[t]\}$ (or $0$ if none)\;
    \If{$p^{\mathrm{BH}}[v] \le \alpha R_{\mathrm{true}}(v)/m$}{
      $\mathcal{R}_{\alpha}^{\mathrm{IndBH}} \gets \mathcal{R}_{\alpha}^{\mathrm{IndBH}} \cup \{v\}$\;
    }
  }

  \Return{$\mathcal{R}_{\alpha}^{\mathrm{IndBH}}$}\;
}
\caption{IndBH on an interval overlap graph (optimized interval-graph version)}
\label{alg:indbh-interval-short}
\end{algorithm}
\clearpage

\subsection{Proof of Proposition \ref{prop:LBD-FDRcomplexity}}\label{subsec:proofmemorycomplex}

\begin{proof}
    First, we define notation. Let there be $n$ data-points, $m_n=O\left(n\log^{5/2}(n)\right)$ hypotheses, $R$ the number of hypotheses rejected by the Benjamini-Hochberg procedure, $C_j$, $j=1,2,\dots n_c$ be the components of the subgraph induced by the Benjamini-Hochberg procedure, with $s_j=|C_j|$, and let $U_d$ be the number of undecided hypotheses (those $v$ looped in Line 23 of algorithm \ref{alg:indbh-interval-short}). The proof proceeds by examining the time and memory complexity of each step.

    Constructing the Bonferroni intervals takes $O(m_n)$ actions, and memory $O(m_n)$ by looping over the Bonferroni intervals and their admissible extension lengths. This is optimal since writing the intervals takes $O(m_n)$ steps.

    Once constructed, presorting the $m_n$ intervals $(t_1,t_3]$ has time complexity $O(m_n\log(m_n))$ and $O(m_n)$ memory requirement (Algorithm \ref{alg:precompute-intervals}).

    Computing the Benjamini-Hochberg rejection set, with precomputed $p$-values, has time and memory complexity
    \[T_{\text{BH}}=O(m_n\log(m_n)), \quad M_{\text{BH}}=O(m_n)\]
    by sorting the $m_n$, $p$-values. (Line 2 of algorithm \ref{alg:indbh-interval-short})

    Building the Benjamini-Hochberg local arrays (Line 5 of algorithm \ref{alg:indbh-interval-short}) has time and memory complexity at most $O(m_n)$ by extracting the intervals, $p$-values and endpoint identifications corresponding to the $p$-values rejected by the Benjamini-Hochberg procedure.

    Computing the connected components of the Benjamini-Hochberg induced subgraph has $O(m_n)$ time complexity (by sweeping over the increasing left endpoints while tracking the running maximum right endpoint). The memory requirement is also $O(m_n)$ (Line 7 of of algorithm \ref{alg:indbh-interval-short}). 

    For each component, $C_j$, sorting each of its vertices costs $O(s_j\log(s_j))$. Thus,
    \[T_{\text{comp}}=\sum_{j=1}^{n_c}s_j\log(s_j)\leq R\log\left(R\right)\]
     since $s_j\log(s_j)\leq s_j\log\left(R\right)$ and $\sum_j s_j=R$. The memory requirement is $O\left(R\right)$. This corresponds to line 8 of algorithm \ref{alg:indbh-interval-short}.

     Lines 9-10 of algorithm \ref{alg:indbh-interval-short} activate if all connected components are cliques. In this case, Proposition 8 of \cite{nguyen2025controllingfalsediscoveryrate} applies, and the rejection set is computed by running the Benjamini-Hochberg procedure on the smallest $p$-value in each connected-component. Naively, the time complexity is $O\left(R+n_c\log(n_c)\right)$ to first compute the smallest $p$-values in each component, and then sort them as part of the Benjamini-Hochberg procedure. Thus,
     \[T_{\text{clique}}=O\left(n_cR\log\left(R\right)+n_c\log(n_c)\right), \quad M_{\text{clique}}=O\left(n_c+R\right).\]
     Recall that $n_c\leq R\leq m_n$ so that $O(n_cR\log(R))=O(R^2\log(R))=O(R^2\log(n)).$

    Algorithm \ref{alg:interval-dp} applied to $s$ pre-sorted intervals, with endpoints $L$, has time complexity $T_{\text{DP}}(s)=O\left(s\log(L)+R+L\right)$. $s\log(L)$ is the complexity from querying the Fenwick tree $s$ times, $R$ is writing the output and $O(L)$ is the time complexity from the initialisation. The memory requirement is $O\left(R+L+s\right)$ (the $O\left(R\right)$ is the size of the rejection set, $O(L)$ the size of the Fenwick tree, and $O(s)$ is the number of intervals). 

    Algorithm \ref{alg:interval-dp} is called once per component, in lines 11-13 of algorithm \ref{alg:indbh-interval-short}, so the time complexity is 
    \[T_{\text{totDP}}=\sum_{j=1}^{n_c}T_{\text{DP}}(s_k)=O\left(n_c(R+L)+R\log(L)\right),\]
    with memory requirement $O\left(n_cR\right)$ by storing $N_k\left[1,\dots, R\right]$ as an array.

    Line 14 (computing the vectors $N_{\text{tot}}$ and $N_{\text{plus}}$) has time complexity $O\left(n_cR\right)$, also with memory $O\left(R\right)$ as each are vectors of that length.

    The check loop, lines 15-24 scans $n_c$ length $R$ vectors so has time complexity $O\left(n_cR\right)$ and memory $O\left(R\right)$.

    For the undecided vertices, line 25-30, two interval dynamic programs are computed (one left and one right) and one $O\left(R\right)$ calculation for $R_{\text{true}}(v)$ (line 28) so that the worst case complexity is $O\left(R+s_{\text{comp}(v)}\log(L)\right)$. Summing over undecided vertices yields time complexity of 
    \begin{align*}
        T_{\text{und}}&=O\left(U_dR\right)+\sum_{v\in U_d}O\left(s_{\text{comp(v)}}\log(L))\right)\\
        &=O\left(U_d R\right)+O\left(\sum_{j=1}^{n_c}\#\{\text{undecided in $C_j$}\}s_j\log(L)\right)\\
        &\leq O\left(U_dR\right)+O\left(\log(L)\sum_{j=1}^{n_c}s_j^2\right)= O\left(R^2\log(L)\right)
    \end{align*}

Taking the largest terms in the time and memory complexity imply
\begin{align*}
    T_{\text{tot}}&=O\left(m_n\log(m_n)+R^2\log(n)+P_t\right)\\
    M_{\text{tot}}&=O\left(m_n+L+n_cR+P_m\right),
\end{align*}
  which concludes the calculation.
\end{proof}

\end{document}

%% file: newc.tex
\newcommand{\norm}[1]{\left\lVert#1\right\rVert}

\newcommand*{\E}{E}

\newcommand*{\D}{\mathbb D}

\newcommand*{\R}{\mathbb R}
\newcommand{\Prob}{\text{pr}}

\newcommand{\newc}{\newcommand}

\newc{\V}{\mbox{V}}

\newc{\Bern}{\mbox{Bern}}
\newc{\Po}{\mbox{Po}}
\newc{\IG}{\mbox{IG}}
\newc{\Gam}{\mbox{Gam}}
\newc{\bdp}{\mathsf{p}}
\newc{\bdt}{\mathbf{t}}
\newc{\Normal}{\mathcal{N}}
\newc{\Expectation}{\mathbb{E}}
\newc{\odds}{\mbox{odds}}
\newc{\OR}{\mbox{OR}}
\newc{\stderr}{\mbox{s.e.}}
\newc{\logit}{\mbox{logit}}
\newc{\sign}{\mbox{sign}}
\newc{\SD}{\mbox{SD}}
\newc{\bdmu}{\mbox{\boldmath $\mu$}}
\newc{\bdSigma}{\mbox{\boldmath $\Sigma$}}
\newc{\bdLambda}{\mbox{\boldmath $\Lambda$}}
\newc{\bdmuhat}{\mbox{\boldmath $\hat{\mu}$}}
\newc{\bdeta}{\mbox{\boldmath $\eta$}}
\newc{\bdtheta}{\mbox{\boldmath $\theta$}}
\newc{\bdbeta}{\mbox{\boldmath $\beta$}}
\newc{\bdgamma}{\mbox{\boldmath $\gamma$}}
\newc{\bdbetahat}{\mbox{\boldmath $\hat{\beta}$}}
\newc{\bdgammahat}{\mbox{\boldmath $\hat{\gamma}$}}
\newc{\bdthetahat}{\mbox{\boldmath $\hat{\theta}$}}
\newc{\bdvareps}{\mbox{\boldmath $\varepsilon$}}
\newc{\bdzero}{\mbox{\boldmath $0$}}
\newc{\bdone}{\mbox{\boldmath $1$}}
\newc{\bdnu}{\mbox{\boldmath $\nu$}}
\newc{\bdell}{\mbox{\boldmath $\ell$}}
\newc{\bdxi}{\mbox{\boldmath $\xi$}}
\newc{\bdomega}{\mbox{\boldmath $\omega$}}
\newc{\bdepsilon}{\mbox{\boldmath $\varepsilon$}}
\newc{\bdI}{\mathbf{I}}
\newc{\bdP}{\mbox{\boldmath $P$}}
\newc{\bdX}{\mbox{\boldmath $X$}}
\newc{\bdA}{\mbox{\boldmath $A$}}
\newc{\bdB}{\mbox{\boldmath $B$}}
\newc{\bdC}{\mbox{\boldmath $C$}}
\newc{\bdD}{\mbox{\boldmath $D$}}
\newc{\bdG}{\mbox{\boldmath $G$}}
\newc{\bdJ}{\mbox{\boldmath $J$}}
\newc{\bdK}{\mbox{\boldmath $K$}}
\newc{\bda}{\mbox{\boldmath $a$}}
\newc{\bdb}{\mbox{\boldmath $b$}}
\newc{\bdc}{\mbox{\boldmath $c$}}
\newc{\bde}{\mbox{\boldmath $e$}}
\newc{\bdu}{\mbox{\boldmath $u$}}
\newc{\bdv}{\mbox{\boldmath $v$}}
\newc{\bdx}{\mbox{\boldmath $x$}}
\newc{\bdy}{\mbox{\boldmath $y$}}
\newc{\bdz}{\mbox{\boldmath $z$}}
\newc{\bdr}{\mbox{\boldmath $r$}}
\newc{\bdQ}{\mbox{\boldmath $Q$}}
\newc{\bdR}{\mbox{\boldmath $R$}}
\newc{\bdY}{\mbox{\boldmath $Y$}}
\newc{\bdT}{\mbox{\boldmath $T$}}
\newc{\bdW}{\mbox{\boldmath $W$}}
\newc{\bdH}{\mbox{\boldmath $H$}}
\newc{\bdL}{\mbox{\boldmath $L$}}
\newc{\bdU}{\mbox{\boldmath $U$}}
\newc{\bdV}{\mbox{\boldmath $V$}}
\newc{\Multinom}{\mbox{Multinom}}
\newc{\Var}{\mbox{Var}}
\newc{\var}{\mbox{var}}
\newc{\diag}{\mbox{diag}}
\newc{\tr}{\mbox{tr}}
\newc{\phat}{\hat{p}}
\newc{\Xbar}{\bar{X}}
\newc{\xbar}{\bar{x}}
\newc{\Ybar}{\bar{Y}}
\newc{\ybar}{\bar{y}}
\newc{\dbar}{\bar{d}}
\newc{\yhat}{\hat{y}}
\newc{\bdyhat}{\mbox{\boldmath $\hat{y}$}}
\newc{\ytil}{\tilde{y}}
\newc{\ftil}{\tilde{f}}
\newc{\Ho}{\mbox{\bf H}_o}
\newc{\Ha}{\mbox{\bf H}_a}
\newc{\phatYX}{\phat_Y - \phat_X}
\newc{\SSG}{\mbox{SSG}}
\newc{\SSB}{\mbox{SSB}}
\newc{\SSE}{\mbox{SSE}}
\newc{\SST}{\mbox{SST}}
\newc{\SSR}{\mbox{SSR}}
\newc{\SSAB}{\mbox{SSAB}}
\newc{\MSG}{\mbox{MSG}}
\newc{\MSB}{\mbox{MSB}}
\newc{\MSE}{\mbox{MSE}}
\newc{\MST}{\mbox{MST}}
\newc{\MSAB}{\mbox{MSAB}}
\newc{\dfE}{\mbox{dfE}}
\newc{\dfG}{\mbox{dfG}}
\newc{\dfB}{\mbox{dfB}}
\newc{\dfT}{\mbox{dfT}}
\newc{\dfAB}{\mbox{dfAB}}
\newc{\muhat}{\hat{\mu}}
\newc{\betahat}{\hat{\beta}}
\newc{\alphahat}{\hat{\alpha}}
\newc{\etahat}{\hat{\eta}}
\newc{\phihat}{\hat{\phi}}
\newc{\sigmahat}{\hat{\sigma}}
\newc{\cl}{\centerline}
\newc{\redtitle}[1]{ {\color{red}\und{#1}:} }
\newc{\bluetitle}[1]{ {\color{blue}\und{#1}:} }
\newc{\magentatitle}[1]{ {\color{magenta}\und{#1}:} }

\newc{\trans}{^\mathsf{T}}
\newc{\xtx}{\bdX\trans\bdX}
\newc{\xxtxx}{\bdX(\xtx)^{-1}\bdX\trans}
\newc{\argmin}{\operatornamewithlimits{argmin}}
\newc{\argmax}{\operatornamewithlimits{argmax}}
\newc{\setS}{\mathcal{S}}
\newc{\toP}{\overset{p}{\to}}
\newc{\toD}{\overset{d}{\to}}
\newc{\simIID}{\overset{\text{i.i.d.}}{\sim}}
\newc{\simIND}{\overset{\text{ind.}}{\sim}}
\newc{\thetahat}{\hat{\theta}}
\newc{\bds}{\mathbf{s}}
\newc{\tny}{\small}
\newc{\Col}{\textnormal{Col}}
\newc{\mutilde}{\tilde{\mu}}
\newc{\vecst}{\textnormal{vec}}
\newc{\indep}{\perp \!\!\! \perp}
\newc{\Cov}{\textnormal{Cov}}
\newc{\gtilde}{\tilde{g}}
\newc{\rank}{\textnormal{rank}}
\newc{\Cor}{\textnormal{Cor}}
\newc{\Lcol}{\mathcal{L}_{col}}
\newc{\Lrow}{\mathcal{L}_{row}}
\newc{\Ell}{\mathcal{L}}
\newc{\Norm}{\mathcal{N}}
\newc{\xtilde}{\tilde{x}}
\newc{\Xtilde}{\tilde{X}}
\newc{\Atilde}{\tilde{A}}
\newc{\Xhat}{\hat{X}}
\newc{\Vhat}{\hat{V}}
\newc{\Vperp}{V^\perp}
\newc{\Rhat}{\hat{R}}
\newc{\Shat}{\hat{S}}
\newc{\uhat}{\hat{u}}
\newc{\Phat}{\hat{P}}
\newc{\Yhat}{\hat{Y}}
\newc{\bpage}{\eject}
\newc{\bdg}{\mathbf{g}}
\newc{\bdZ}{\mathbf{Z}}
\newc{\Rn}{\mathbb{R}^n}
\newc{\Sigmahat}{\hat{\Sigma}}
\newc{\Ysubj}{Y_{(j)}}
\newc{\Rsubj}{R_{(j)}}
\newc{\A}{\mathcal{A}}

\newc{\Gammasubj}{\Gamma_{(j)}}
\newc{\Btilde}{\tilde{\mathcal{B}}}
\newc{\Xitilde}{\tilde{\Xi}}
\newc{\xitilde}{\tilde{\xi}}
\newc{\Gammatilde}{\tilde{\Gamma}}
\newc{\gammatilde}{\tilde{\gamma}}
\newc{\Lambdatilde}{\tilde{\Lambda}}
\newc{\Ls}{\mathcal{L}}
\newc{\Lambdasubj}{\Lambda_{(j)}}
\newc{\Msubj}{M_{(j)}}
\newc{\Lsubj}{L_{(j)}}
\newc{\Ytilde}{\tilde{Y}}
\newc{\Ytildehat}{\hat{\tilde{Y}}}

%% file: thm3updatedproof.tex
\subsection{Proof of Theorem \ref{thm:ImpossLocalScale}}\label{sec:minmaxsparseproofBkg0}

For the proof of Theorem \ref{thm:ImpossLocalScale}, we construct an arrangement of perturbative change-points that that are indistinguishable from the background arrangement. We do so by finding eligible long lengths between background change-points and calculate how many blocks of perturbative changepoints we can place there. Following this, we follow a standard argument, similar to that of Theorem 2 of \cite{ERCEJCYP2011}, or Theorem 3.1 of \cite{HallJin2010}, to demonstrate that the distributions are statistically indistinguishable.

\begin{proposition}\label{prop:TVproposition}
    Let $\mu_n^{(0)}$ be an arrangement of $1\leq K_n^{(0)}\lesssim n^q$. Fix a sequence of integers $B_{\max,n}$, $b_n=B_{\max,n}+1$, and $\delta_n$ such that
    \[M_n=\sum_{k=0}^{K_n^{(0)}}\left\lfloor\frac{(\tau_{k+1}-\tau_k-2b_n)_+}{2\delta_n}\right \rfloor\to \infty\]
as $n\to \infty$. Finally, fix a sequence $K_n \ll M_n$, the number of perturbative change-points with small energy, e.g.\ satisfying \eqref{eq:smallenergy}. Then 
    \[\lim_{n\to \infty}\norm{\mathbb P_{\mu_n^{(0)}}-\bar P_{\mu_n}}_{\rm TV}=0\]
    where $\bar P_{\mu_n}$ is the uniform mixture distribution over the class of perturbative blocks of change-points, and $\mathbb P_{\mu_n^{(0)}}$ is the distribution of a $N(\mu_n,I_n)$ multivariate Gaussian. Here, we define ``small'' as
    \begin{equation*}
        \mathcal E_k^2 \leq \begin{cases}
            \left(1-\varepsilon_n^{(1)}\right)^2\left[\sqrt{2\log\left(M_n\right)}-\sqrt{2\log(K_n)}\right]^2, \quad 1\leq K_n\lesssim M_n^q, \ 0\leq q<\frac{1}{4}\\
            \left(1-\varepsilon_n^{(2)}\right)^2\left[\log\left(M_n\right)-2\log(K_n)\right], \quad M_n^{1/4}\lesssim K_n\lesssim M_n^q, \ \frac{1}{4}\leq q<\frac{1}{2}, \\
        \end{cases}
    \end{equation*}
\end{proposition}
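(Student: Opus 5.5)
We prove Proposition \ref{prop:TVproposition} by a second-moment (chi-square) argument for the uniform mixture $\bar P_{\mu_n}$ against the background $\mathbb P_{\mu_n^{(0)}}$, following Theorem 2 of \cite{ERCEJCYP2011} and Theorem 3.1 of \cite{HallJin2010}.

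\emph{Step 1: the hard family.} Between consecutive background changepoints $\tau_k,\tau_{k+1}$, discard a buffer of length $b_n$ at each end, so every perturbation lies at distance more than $B_{\max,n}$ from the background changepoints. Cut the remaining stretch into $\lfloor (\tau_{k+1}-\tau_k-2b_n)_+/(2\delta_n)\rfloor$ blocks of length $2\delta_n$. This gives $M_n$ disjoint blocks, on each of which $\mu_n^{(0)}$ is constant. A perturbation \emph{activates} a block $j$ by adding a bump of height $h$ on its left half or on its right half, which creates a localized changepoint. We choose $h$ and the placement so that each created changepoint has energy $\mathcal E_k=\mathcal E$, equal to the bound in \eqref{eq:smallenergy}. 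The prior $\pi_n$ chooses a uniformly random subset $S$ of $K_n$ blocks out of the $M_n$. Since $K_n\ll M_n$ we may, if convenient, replace this by independent Bernoulli$(K_n/M_n)$ activation and condition on the count, which costs only $o(1)$ in total variation. Because the blocks are disjoint and $\mu_n^{(0)}$ is constant on each of them, projecting each block onto its contrast direction gives a sufficient reduction. The problem then becomes $M_n$ i.i.d.\ $N(0,1)$ coordinates, of which $K_n$ are shifted to $N(\mathcal E,1)$. This is the sparse normal mixture \eqref{eq:sparsemeans} with $\epsilon_n=K_n/M_n$. The TV distance is unchanged by this reduction, since it is computed on a sufficient statistic.

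\emph{Step 2: chi-square bound.} Write $L=d\bar P_{\mu_n}/d\mathbb P_{\mu_n^{(0)}}=\E_{S}\exp\{\sum_{j\in S}(\mathcal E Y_j-\mathcal E^2/2)\}$. Then
\[
\E_0 L^2=\E_{S,S'}\exp(\mathcal E^2|S\cap S'|).
\]
The overlap $|S\cap S'|$ is hypergeometric and is stochastically dominated by a Binomial$(K_n,K_n/M_n)$. Hence
\[
\E_0L^2\le (1+\tfrac{K_n}{M_n}(e^{\mathcal E^2}-1))^{K_n}\le \exp\{K_n^2e^{\mathcal E^2}/M_n\}.
\]
It therefore suffices that $K_n^2e^{\mathcal E^2}/M_n\to0$, i.e.\ $\mathcal E^2\le \log M_n-2\log K_n-\omega(1)$. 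In the second regime of \eqref{eq:smallenergy}, $\mathcal E^2=(1-\varepsilon_n^{(2)})^2(\log M_n-2\log K_n)$. Because $\log M_n-2\log K_n\gtrsim\log M_n$ when $q<1/2$, the slack is at least $\varepsilon_n^{(2)}\log M_n\to\infty$, which gives TV$\to0$ via $\|P-Q\|_{TV}\le \tfrac12\{\E_0L^2-1\}^{1/2}$.

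\emph{Step 3 (main obstacle): the very sparse regime $q<1/4$.} Here the threshold is $(\sqrt{2\log M_n}-\sqrt{2\log K_n})^2$, which exceeds $\log M_n-2\log K_n$. The plain second moment blows up because rare large values of $Y_j$ dominate $\E_0 L^2$. We will use the standard truncation trick of Ingster/Donoho--Jin. Let $A=\{\max_j Y_j\le\sqrt{2\log M_n}\}$ and define the truncated likelihood ratio $\tilde L=L\mathbbm 1_A$ (truncating coordinatewise inside the exponent). Then
\[
\|P-Q\|_{TV}\le \E_0|\tilde L-1|+\bar P(A^c)+\mathbb P_0(A^c).
\]
We show $\E_0\tilde L\to1$ and $\E_0\tilde L^2\le1+o(1)$ by computing the truncated Gaussian moments $\E[e^{2\mathcal E Y-\mathcal E^2}\mathbbm 1\{Y\le t\}]$ with $t=\sqrt{2\log M_n}$ and bounding them via Mills' ratio. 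The exponents reduce to $2\log K_n-\log M_n+(\text{terms vanishing when }\mathcal E\le(1-\varepsilon_n^{(1)})(t-\sqrt{2\log K_n}))$. The condition $\varepsilon_n^{(1)}\sqrt{2\log M_n}\to\infty$ supplies the divergent negative slack. The delicate bookkeeping is the cross term that appears when $2\mathcal E>t$, and that is where we expect to spend the effort. The probabilities $\mathbb P_0(A^c)$ and $\bar P(A^c)$ tend to $0$ by a union bound, using $\mathcal E<t-\sqrt{2\log K_n}$ for the $K_n$ shifted coordinates.
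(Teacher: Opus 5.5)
Your proposal is correct and follows essentially the same route as the paper: the same buffered block construction, reduction to $M_n$ independent $N(0,1)$ coordinates of which $K_n$ are shifted by $\mathcal E$, the plain $\chi^2$ bound with the hypergeometric-to-binomial comparison for $\tfrac14\le q<\tfrac12$, and, for $q<\tfrac14$, a second moment truncated at $\sqrt{2\log M_n}$ and evaluated with Mills' ratio. The differences are cosmetic (you truncate on the global event $\{\max_j Y_j\le\sqrt{2\log M_n}\}$ rather than only over active blocks, and with a one-sided bump you should set the mean shift $h\sqrt{\delta_n}$, not the energy, equal to the bound), and the cross term you defer does close: at the boundary value, $q<\tfrac14$ forces $2\mathcal E>\sqrt{2\log M_n}$, so $\tfrac{K_n^2}{M_n}e^{\mathcal E^2}\Phi(\sqrt{2\log M_n}-2\mathcal E)\lesssim\exp\{2\log K_n-(\sqrt{2\log M_n}-\mathcal E)^2\}\to0$; note that the correction to $2\log K_n-\log M_n$ is $\log M_n-(\sqrt{2\log M_n}-\mathcal E)^2$, which does not vanish.
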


\begin{proof}

First, consider the case where $1\leq K_n\lesssim M_n^q, \ 0\leq q<\frac{1}{4}$. How we construct the perturbative sequence of change-points is by placing blocks of length $2\delta_n$ between background change-points $\tau_k$ and $\tau_{k+1}$, where $\tau_0=0$ and $\tau_{K_n^{(0)}}=n$. We call segment $k$ the indices inside $(\tau_k,\tau_{k+1}]$, and its corresponding ``trimmed'' length $a_{k,n}=(\tau_{k+1}-\tau_k-2b_n)_+$. One should note that 
\[M_{k,n}=\left\lfloor\frac{(\tau_{k+1}-\tau_k-2b_n)_+}{2\delta_n}\right \rfloor\]
is the number of blocks of length $2\delta_n$ that can be placed inside segment $k$. When $m_{k,m}\geq 1$, define 
\[u_{k,m}=\tau_k+b_n+2\delta_n(m-1),\quad (m=1,\dots, M_{k,n})\]
and the block $I_{k,m}=(u_{k,m},u_{k,m}+2\delta_n]$ with mid-point $t_{k,m}=u_{k,m}+\delta_n$. Finally, relabel the collections $\{I_{k,m}\}_{k,m}=\{I_{k}\}_{k=1}^{M_n}$ with corresponding centres $t_k$. Thus, the distance between each centre $t_k$ and background change-point is at least $\delta_n+b_n>B_{\max,n}$.

Trivially, $M_n\leq \frac{n}{2\delta_n}$, and the non-trivial lower bound can be derived as 
\begin{align*}
    M_n\geq \frac{1}{2\delta_n}\sum_{k=0}^{K_n^{(0)}}(\tau_{k+1}-\tau_k)_+-(K_{n^{(0)}}+1)\geq \frac{1}{2\delta_n}\left(n-2b_n(K_n^{(0)}+1)\right)-(K_{n^{(0)}}+1).
\end{align*}
Brief algebra yields
\[M_n\geq \frac{n}{2\delta_n}\left(1-\frac{2b_nK_n^{(0)}}{n}-\frac{2b_n}{n}-\frac{K_n^{(0)}\delta_n}{n}-\frac{2\delta_n}{n}\right),\]
and if $K_n^{(0)}(b_n+\delta_n)=o(n)$ then $M_n=\frac{n}{2\delta_n}(1+o(1))$. Therefore, there are parameter regimes such that $M_n\to \infty$ at nearly a linear rate.

We now construct our perturbatory change-points. For the block construction given above define the change-point jump function as 
\begin{equation}\label{eq:jumpfunctionh}h_j^{(i)}=\begin{cases}
    -\gamma_n, \quad j \in I_i^L,\\
    \gamma_n, \quad j\in I_i^R\\
    0, \text{ otherwise}
\end{cases}\end{equation}
where  
\[I_i^L=\{u_i+1,\dots, u_i+\delta_n\},\quad I_i^R=\{u_i+\delta_n+1,\dots, u_i+2\delta_n\}.\]

For $S\subset \{1,\dots,M_n\}$, such that $|S|=K_n$, define the subclass
\begin{equation}\label{eq:perturbatorysubclass}\mathcal A_n(\mu_n^{(0)},M_n,K_n)=\left\{\mu_n^{(0)}+\sum_{i\in S}h^{(i)} \ \mid \ S\subset [M_n], \ \ |S|=K_n\right\}.\end{equation}
In the following we take a uniform mixture prior over $\mathcal A_n(\mu_n^{(0)},M_n,K_n)$ and demonstrate that the mixture likelihood ratio tends to $1$ in $L^1$ under $\mu_n^{(0)}$. In the following, it is with no loss of generality to consider $\mathcal A_n(0,M_n,K_n)$ since we are considering a Gaussian location model.

Let $P_0=N(0,I_n)$ and $P_i=N(h^{(i)},I_n)$ so that their likelihood ratio is 
\begin{align*}
    L_i=\frac{dP_i}{dP_0}=\exp\left\{\gamma_n\sum_{j\in I_i}\sigma_jZ_j-\frac{1}{2}\norm{h^{(i)}}_2^2 \right\}
\end{align*}
for $Z_j\stackrel{\text{iid}}{\sim} N(0,1)$, $\sigma_j=\pm 1$ the encoded sign pattern and 
\begin{align*}
    \norm{h^{(i)}}_2^2 =\sum_{j\in I_i}\left(h_j^{(i)} \right)^2=2\delta_n\gamma_n^2.
\end{align*}
By standard facts for Gaussian random variables, $\E_{P_0}[L_i]=1$, $\E_{P_0}[L_i^2]=\exp\left\{\norm{h^{(i)}}_2^2 \right\}=\exp\{2\delta_n\gamma_n^2\}$. Crucially, the family $\{L_i\}_{i=1}^{M_n}$ are independent and identically distributed under $P_0$. Moreover, by defining
\[X_i=\frac{1}{\sqrt{2\delta_n}}\sum_{j\in I_i}\sigma_j Z_j\]
the $X_i$ are independent and identically distributed $N(0,1)$ under $P_0$ for disjoint blocks. By defining $w_n=\sqrt{2\delta_n\gamma_n^2}$ each $L_i$ is the canonical likelihood ratio
\[L_i=\exp\left\{w_nX_i-\frac{1}{2}w_n^2\right\}.\]
Moreover, select 
\[w_n=(1-\varepsilon_n)\left(\sqrt{2\log(M_n)}-\sqrt{2\log(K_n)}\right)\]
where $\varepsilon_n \to 0$ but $\varepsilon_n\left(\sqrt{2\log(M_n)}-\sqrt{2\log(K_n)}\right)\to \infty$. This is equivalent to picking 
\[\gamma_n = (1-\varepsilon_n)\frac{\sqrt{\log(M_n)}-\sqrt{\log(K_n)}}{\sqrt{\delta_n}}.\]

Next, for any subset $S\subset \{1,2,\dots,M_n\}$ such that $|S|=K_n$ consider the alternative mean 
\[\mu_n^{(S)}=\sum_{i \in S}h^{(i)}\]
and define $P_S=N\left(\mu_n^{(S)},I_n\right)$, with the likelihood ratio (with respect to $P_0$) as $L_S$. Due to the independence of the blocks, 
\[L_S=\prod_{i\in S}L_i\]
and we will define 
\[\bar L_n = \frac{1}{\binom{M_n}{K_n}}\sum_{|S|=K_n}L_S,\quad \tilde L_n = \frac{1}{\binom{M_n}{K_n}}\sum_{|S|=K_n}L_S\mathbbm{1}\left\{\max_{i\in S}X_i\leq T_n\right\}\]
where $T_n=\sqrt{2\log(M_n)}$, so that $0\leq \tilde L_n\leq \bar L_n$. Note, $\bar L_n$ is the likelihood ratio integrated against the uniform (over all choices of $K_n$ blocks from $M_n$) mixture prior, analogously $\tilde L_n$ is the truncated likelihood ratio also integrated against this prior $\pi_n$. 

To demonstrate statistical indistinguishability, we will show that
\begin{align*}
   \lim_{n\to \infty} \E_{0}\left|\bar L_n-1\right|= 0.
\end{align*}
A simple application of the triangle inequality yields
\begin{align*}
    \E_{0}\left|\bar L_n-1\right|\leq \E_0[\bar L_n-\tilde L_n]+\E_0|\tilde L_n-1|=1-\underbrace{\E_0[\tilde L_n]}_{(I)}+\underbrace{\E_0|\tilde L_n-1|}_{(II)}.
\end{align*}
The proof concludes with showing that $I\to 1$ and $II\to 0$. The Fubini-Tonelli theorem implies
\begin{align*}
    \E_0[\tilde L_n]=\frac{1}{\binom{M_n}{K_n}}\sum_{|S|=K_n}\E_0\left[L_S\mathbbm{1}\left\{\max_{i\in S}X_i\leq T_n\right\}\right].
\end{align*}
Each $X_i$ is independent and identically distributed $N(0,1)$ under $P_0$ hence
\begin{align*}
    \E_0\left[L_S\mathbbm{1}\left\{\max_{i\in S}X_i\leq T_n\right\}\right]&=\prod_{i\in S}\E_0\left[L_i\mathbbm{1}\left\{X_i\leq T_n\right\}\right]=\left(\E_0\left[L_1\mathbbm{1}\left\{X_1\leq T_n\right\}\right]\right)^{K_n}\\
    \implies \E_0(\tilde L_n)&=\left(\E_0\left[L_1\mathbbm{1}\left\{X_1\leq T_n\right\}\right]\right)^{K_n}.
\end{align*}
The standard Gaussian change of measure implies 
\[\left(\E_0\left[L_1\mathbbm{1}\left\{X_1\leq T_n\right\}\right]\right)^{K_n}=\Phi(T_n-w_n)^{K_n}\]
The choice of $w_n=T_n-\sqrt{2\log(K_n)}-\omega_n$ with $\omega_n\to \infty$ (more slowly than the other terms) implies
\begin{align*}
    \left(\E_0\left[L_1\mathbbm{1}\left\{X_1\leq T_n\right\}\right]\right)^{K_n}&=\Phi(T_n-w_n)^{K_n}\\
    &=\Phi[\sqrt{2\log(K_n)}+\omega_n]^{K_n}\\
    &=\left\{1-\left(1-\Phi\left[\sqrt{2\log(K_n)}+\omega_n\right]\right)\right\}^{K_n},
\end{align*}
which converges to $1$ if $K_n\left(1-\Phi\left[\sqrt{2\log(K_n)}+\omega_n\right]\right)\to 0$. By Mill's ratio 
\begin{align*}
    K_n\left(1-\Phi\left[\sqrt{2\log(K_n)}+\omega_n\right]\right)&\sim K_n \frac{\exp\left(-[\sqrt{2\log(K_n)}+\omega_n]^2/2\right)}{(2\pi)^{1/2}(\sqrt{2\log(K_n)}+\omega_n)}\\
    &=K_n \frac{\exp\left[-\log(K_n)-\omega_n^2/2-\omega_n\sqrt{2\log(K_n)}\right]}{(2\pi)^{1/2}[\sqrt{2\log(K_n)}+\omega_n]}\\
    &=\frac{\exp\left[-\omega_n^2/2-\omega_n\sqrt{2\log(K_n)}\right]}{(2\pi)^{1/2}[\sqrt{2\log(K_n)}+\omega_n]}\to 0
\end{align*}
hence $\E_0\tilde L_n \to 1.$

We now demonstrate that $II\to 0$. We do so by applying the Cauchy-Schwarz inequality so that 
\begin{align*}
    E_0\left(|\tilde L_n-1|\right)&\leq \left[E_0\left\{\left(\tilde L_n-1\right)^2\right\}\right]^{1/2}\\
    &=\sqrt{E_0\left(\tilde L_n^2-2\tilde L_n+1\right)}\\
    &=\sqrt{E_0\left(\tilde L_n^2\right)-2(1+o(1))+1}\\
    &=\sqrt{E_0\left(\tilde L_n^2\right)-1+o(1)}.
\end{align*}
Thus, if $E_0\left(\tilde L_n^2\right)\to 1$ the proof concludes. By expanding the definition of $\tilde L_n$ we find 
\begin{align*}
    \tilde L_n^2=\frac{1}{\left(\binom{M_n}{K_n}\right)^2}\sum_{|S|=K}\sum_{|T|=K}L_SL_T\mathbbm{1}\left\{\max_{i\in S}X_i\leq T_n, \ \max_{i\in T}X_i\leq T_n\right\}.
\end{align*}
Over each summand there are $J:=|S\cap T|$ indices used twice, and $2(K_n-J)=|S\Delta T|$ indices used once. Therefore,
\begin{align*}
    \E_0\left[\tilde L_n^2\right]&=\frac{1}{\left(\binom{M_n}{K_n}\right)^2}\sum_{|S|=K}\sum_{|T|=K}\E_0\left[L_SL_T\mathbbm{1}\left\{\max_{i\in S}X_i\leq T_n, \ \max_{i\in T}X_i\leq T_n\right\}\right]\\
    &=\frac{1}{\left(\binom{M_n}{K_n}\right)^2}\sum_{|S|=K}\sum_{|T|=K}\E_0\left[\prod_{i\in S\cap T}L_i^2\mathbbm{1}\{X_i\leq T_n\}\prod_{\ell\in S\Delta T}L_\ell\mathbbm{1}\{X_\ell\leq T_n\}\right]\\
    &=\E_{\pi_n}\left[\prod_{i\in S\cap T}\E_0\left[L_i^2\mathbbm{1}\{X_i\leq T_n\}\right]\prod_{\ell\in S\Delta T}\E_0\left[L_\ell\mathbbm{1}\{X_\ell\leq T_n\}\right]\right]\\
    &=\E_{\pi_n}\left[\E_0\left[L_1^2\mathbbm{1}\{X_1\leq T_n\}\right]^{|S\cap T|}\E_0\left[L_1\mathbbm{1}\{X_1\leq T_n\}\right]^{|S\Delta T|}\right]\\
    &=\E_{\pi_n}\left\{\left(\frac{\E_0\left[L_1^2\mathbbm{1}\{X_1\leq T_n\}\right]}{\E_0\left[L_1\mathbbm{1}\{X_1\leq T_n\}\right]^2}\right)^J\right\}\left(\E_0\left[L_1\mathbbm{1}\{X_1\leq T_n\}\right]\right)^{2K_n}.
\end{align*}
First of all, 
\begin{align*}
    \E_0\left[L_1\mathbbm{1}\{X_1\leq T_n\}\right]^{2K_n}&=\Phi(T_n-w_n)^{2K_n}\\
    &=\left\{1-\left[1-\Phi\left(\sqrt{2\log(K_n)}+\omega_n\right)\right]\right\}^{2K_n}
\end{align*}
which converges to $1$ if $2K_n\left(1-\Phi\left[\sqrt{2\log(K_n)}+\omega_n\right]\right)\to 0$, which we proved above (in the case of $K_n$ instead of $2K_n$). For notational ease let $q_n=\frac{\E_0\left[L_1^2\mathbbm{1}\{X_1\leq T_n\}\right]}{\E_0\left[L_1\mathbbm{1}\{X_1\leq T_n\}\right]^2}$, we will show $\E_{\pi_n}[q_n^J]\to 1.$ By Jensen's inequality $q_n\geq 1$ so that $J\mapsto q_n^J$ is convex. Moreover, $J\sim \text{Hypergeom}(M_n,K_n,K_n)$ so that stochastic domination by a Binomial random variable yields
\begin{align*}
    1\leq\E_{\pi_n}\left(q_n^J\right)\leq \left(1-\frac{K_n}{M_n}+\frac{K_n}{M_n}q_n\right)^{K_n}\leq \exp\left\{\frac{K_n^2}{M_n}(q_n-1)\right\}
\end{align*}
hence we will show that $\frac{K_n^2}{M_n}(q_n-1)\to 0$, which is implied by $\frac{K_n^2}{M_n}q_n\to 0$ when $K_n\leq M_n^{1/4-\eta}$, so that $\frac{K_n^2}{M_n}\to 0$ (in our assumed sparse regime). Moreover, in this regime when $\omega_n=o\left(\sqrt{\log(M_n)}\right)$ we have $T_n-2\sqrt{2\log(K_n)}-2\omega_n \to \infty$. 

Before calculating the limit, we simplify $q_n$ by using $\E_0\left[L_1\mathbbm{1}\{X_1\leq T_n\}\right]^2=\Phi(T_n-w_n)^2$ and 
\begin{align*}
    \E_0\left[L_1^2\mathbbm{1}\{X_1\leq T_n\}\right]&=\E_0\left[e^{2w_nX_1-w_n^2}\mathbbm{1}\{X_1\leq T_n\}\right]\\
    &=e^{w_n^2}\E_0\left[e^{2w_nX_1-2w_n^2}\mathbbm{1}\{X_1\leq T_n\}\right]=e^{w_n^2}\Phi(T_n-2w_n)
\end{align*}
so that 
\begin{align*}
    q_n&=\frac{e^{w_n^2}\Phi(T_n-2w_n)}{\Phi(T_n-w_n)^2}\\
    &=\frac{e^{w_n^2}\Phi(T_n-2w_n)}{\Phi\left[\sqrt{2\log(K_n)}+\omega_n\right]^2}=(1+o(1))e^{w_n^2}\Phi(T_n-2w_n).
\end{align*}
Next 
\begin{align*}
    T_n-2w_n=T_n-2T_n+2\sqrt{2\log(K_n)}+2\omega_n=-T_n+2\sqrt{2\log(K_n)}+2\omega_n\to -\infty.
\end{align*}
Therefore, by Mill's ratio
\begin{align*}
    \Phi(T_n-2w_n)=\Phi(-(2w_n-T_n))\sim \frac{e^{-(2w_n-T_n)^2/2}}{(2w_n-T_n)(2\pi)^{1/2}}.
\end{align*}
Further algebra demonstrates
\begin{align*}
    w_n^2-(2w_n-T_n)^2/2=&2\log(M_n)+2\log(K_n)+\omega_n^2-...\\...+
    &2\sqrt{4\log(M_n)\log(K_n)}-2\omega_n\sqrt{2\log(M_n)}+2\omega_n\sqrt{2\log(K_n)}+...\\
    ...-&\frac{1}{2}\big(2\log(M_n)+8\log(K_n)+4\omega_n^2-4\sqrt{4\log(M_n)\log(K_n)}-...\\
    ...+&4\omega_n\sqrt{2\log(M_n)}+8\omega_n\sqrt{2\log(K_n)}\big)\\
    =&\log(M_n)-2\log(K_n)-\omega_n^2-2\omega_n\sqrt{2\log(M_n)}.
\end{align*}
Thus,
\begin{align*}
    q_n&\sim \frac{\exp\left\{w_n^2-(2w_n-T_n)^2/2\right\}}{(2w_n-T_n)(2\pi)^{1/2}}=\frac{M_n}{K_n^2}\frac{e^{-\omega_n^2-2\omega_n\sqrt{2\log(K_n)}}}{(2w_n-T_n)(2\pi)^{1/2}}\\
    \implies \frac{K_n^2}{M_n}q_n &\sim \frac{e^{-\omega_n^2-2\omega_n\sqrt{2\log(K_n)}}}{(2w_n-T_n)(2\pi)^{1/2}}\to 0.
\end{align*}
Therefore, $\E\left(q_n^J\right)\to 1$ so that the mixture distribution is statistically indistinguishable from the global null when 
\[w_n=\mathcal E_k = (1-\varepsilon_n)\left[\sqrt{2\log(M_n)}-\sqrt{2\log(K_n)}\right].\]

Now consider the case when $M_n^{1/4}\lesssim K_n \lesssim M_n^q$ for $\frac{1}{4}\leq q <\frac{1}{2}$ and once again consider the subclass $\mathcal A_n(\mu_n^{(0)},M_n,K_n)$ defined in equation \eqref{eq:perturbatorysubclass}. As argued above, we can consider $\mu_n^{(0)}=0$ in the following. Once again, let $P_0=N(0,I_n)$ and $P_i=N(h^{(i)},I_n)$, with $h^{(i)}$ defined as in equation \eqref{eq:jumpfunctionh}, so that the likelihood ratio between $P_0$ and $P_i$ is 
\begin{align*}
    L_i=\frac{dP_i}{dP_0}=\exp\left\{\gamma_n\sum_{j\in I_i}\sigma_jZ_j-\frac{1}{2}\norm{h^{(i)}}_2^2 \right\}
\end{align*}
for $Z_j\stackrel{\text{iid}}{\sim} N(0,1)$, $\sigma_j=\pm 1$ the encoded sign pattern and 
\begin{align*}
    \norm{h^{(i)}}_2^2 =\sum_{j\in I_i}\left(h_j^{(i)} \right)^2=2\delta_n\gamma_n^2.
\end{align*}
By standard facts for Gaussian random variables, $\E_{P_0}[L_i]=1$, $\E_{P_0}[L_i^2]=\exp\left\{\norm{h^{(i)}}_2^2 \right\}=\exp\{2\delta_n\gamma_n^2\}$. Crucially, the family $\{L_i\}_{i=1}^{M_n}$ are independent and identically distributed under $P_0$. 

For any subset $S\subset \{1,2,\dots,M_n\}$ such that $|S|=K_n$ consider the alternative mean 
\[\mu_n^{(S)}=\sum_{i \in S}h^{(i)}\]
and define $P_S=N\left(\mu_n^{(S)},I_n\right)$, with likelihood ratio (with respect to $P_0$) as $L_S$. Due to the independence of the blocks, 
\[L_S=\prod_{i\in S}L_i.\]
We now place a uniform prior, $\pi_n$, over the alternatives of all subsets of $[M_n]$ of size $K_n$, i.e. $\pi_n(S)=\left(\binom{M_n}{K_n} \right)^{-1}$, so that the mixture distribution is 
\[\bar P_n=\E_{\pi_n}[P_S]\]
which implies the likelihood ratio is 
\[\bar L_n=\frac{d\bar P_n}{dP_0}=\E_{\pi_n}[L_S]=\left(\binom{M_n}{K_n} \right)^{-1}\sum_{|S|=K_n}\prod_{i\in S}L_i.\]
Consider, $\E_0[\bar L_n^2]$, we will show that this quantity converges to $1$ so that the $\chi^2$ divergence, and hence total variation distance, tends to $0$. Explicitly,
\begin{align*}
    \E_0[\bar L_n^2]=\left(\binom{M_n}{K_n} \right)^{-2}\sum_{S,S'}\E_0\left[\prod_{i\in S}L_i\prod_{j\in S'}L_j \right].
\end{align*}
Using that the blocks are independent for any fixed $S,S'$ of size $K_n$ we have
\begin{align*}
    \E_0\left[\prod_{i\in S}L_i\prod_{j\in S'}L_j \right]&=\prod_{i=1}^{M_n}\E_0\left[L_i^{\mathbbm{1}(i\in S)+\mathbbm{1}(i\in S')}\right]\\
    &=\prod_{i \in  S\cap S'}\E_0[L_i^2]\prod_{i\not\in S\cap S'}\E[L_i^0]\\
    &=\left(\E_0[L_1^2] \right)^{|S\cap S'|}.
\end{align*}
thus 
\[\E_0[\bar L_n^2]=\E_{(S,S')}\left\{\left(\E_0[L_1^2] \right)^{|S\cap S'|} \right\}=\E[e^{a_nI_n}]\]
for $a_n=2\delta_n\gamma_n^2$ and $I_n=|S\cap S'|$. $I_n$ is hypergeometric with parameters, $(M_n,K_n,K_n)$ and is stochastically dominated by a $\text{Bin} (K_n,K_n/M_n)$ random variable. Hence,
\[1\leq \E[e^{a_nI_n}]\leq \left(1-\frac{K_n}{M_n}+\frac{e^{a_n}K_n}{M_n} \right)^{K_n}\leq \exp\left\{\frac{K_n^2}{M_n}(e^{a_n}-1)\right\}.\]
We therefore have total variation convergence if $\frac{K_n^2}{M_n}(e^{a_n}-1) \to 0$, which is achieved for $a_n= \log\left(\frac{M_n}{K_n^2} \right)-\epsilon_n$ for $\epsilon_n$ diverging slower than $\log(n)$. Analogously to the proof of theorem 3.2 in \cite{jang2024fastoptimalchangepointdetection} every change-point in this construction satisfies 
\[a_n=\mathcal E_k^2 = 2\delta_n \gamma_n^2,\]
so that we need $\mathcal E_k^2= \log\left(\frac{M_n}{K_n^2} \right)-\epsilon_n$ for some $\epsilon_n \to \infty$ slower than $\log(n)$. By expanding $\mathcal E_k^2$ and using that $M_n=r_n\frac{n}{\delta_n}$ for some $r_n\in\left[\frac{1}{8},\frac{1}{4}\right]$
\begin{align*}
     \log\left(\frac{M_n}{K_n^2} \right)-\epsilon_n=\log\left(M_n\right)-2\log(K_n)-\varepsilon_n'.
\end{align*}
By selecting $\gamma_n^2=\frac{1}{2\delta_n}\left\{\log\left(M_n\right)-2\log(K_n)-\varepsilon_n'\right\}$ we obtain the claimed energy threshold.

\end{proof}